\newcommand*{\id}{{\mathrm{id}}}
\newcommand*{\la}{{\langle}}                                                  
\newcommand*{\ra}{{\rangle}}                                                  
\newcommand*{\Rb}{{\mathbb R}}                                                 
\newcommand*{\Cb}{{\mathcal{C}}}                                               
\newcommand*{\Hb}{{\mathbb H}}                                                
\newcommand*{\Ub}{{\mathbb U}}                                                
\newcommand*{\Vb}{{\mathbb V}}
\newcommand*{\Wb}{{\mathbb W}}                                                
\newcommand*{\Sb}{{\mathbb S}}                                                
\newcommand*{\Nb}{{\mathbb N}}
\newcommand*{\Pb}{{\mathbb P}}
\newcommand*{\rd}{{\mathrm d}}                                                
\newcommand*{\Gr}{{\mathrm{Gr}}} 
\newcommand{\pa}{\partial}
\newcommand{\und}{\boldsymbol}
\newcommand{\underl}{}
\newcommand{\smb}{{\otimes}}
\newcommand{\mfq}{\mathfrak{q}}
\newcommand{\md}{\mathfrak{d}}
\newcommand{\tsd}{\textswab{d}}
\newcommand{\cb}{C}
\newcommand{\bb}{B}
\newcommand{\pp}{\pi}
\newcommand{\tp}{\tilde}
\newcommand{\mm}{m}
\newcommand{\wdh}{}
\journalname{}
\begin{document}

\title{The non-commutative Korteweg--de Vries hierarchy and combinatorial P\"oppe algebra}
\titlerunning{KdV hierarchy and P\"oppe algebra}
\author{Simon J. A. Malham$^\ast$}
\authorrunning{Simon J. A. Malham}
\institute{Maxwell Institute for Mathematical Sciences,        
and School of Mathematical and Computer Sciences,   
Heriot-Watt University, Edinburgh EH14 4AS\\ \email{S.J.A.Malham@hw.ac.uk} \\
$\ast$ The author dedicates this paper to Claudia Wulff,
a wonderful mathematician, person and brave friend, who
passed away on the 12th of June 2021.}
\date{9th August 2021}
\maketitle

\begin{abstract}
We give a constructive proof, to all orders, that each member
of the non-commutative potential Korteweg--de Vries hierarchy is
a Fredholm Grassmannian flow and is therefore linearisable.
Indeed we prove this for any linear combination of fields from this hierarchy.
That each member of the hierarchy is linearisable,
and integrable in this sense, means that the time evolving solution 
can be generated from the solution to the corresponding linear dispersion equation in the hierarchy,
combined with solving an associated linear Fredholm equation representing the Marchenko equation.
Further, we show that within the class of polynomial partial differential fields, at every order,
each member of the non-commutative potential Korteweg--de Vries hierarchy is unique.
Indeed, we prove to all orders, that each such member matches the non-commutative Lax hierarchy field,
which is therefore a polynomial partial differential field.
We achieve this by constructing the abstract combinatorial algebra that underlies
the non-commutative potential Korteweg--de Vries hierarchy.
This algebra is the non-commutative polynomial algebra over the real line generated
by the set of all compositions endowed with the P\"oppe product.
This product is the abstract representation of the product rule for Hankel operators pioneered
by Ch.\/ P\"oppe for integrable equations such as the Sine-Gordon and Korteweg--de Vries equations.  
Integrability of the hierarchy members translates, in the combinatorial algebra, to proving
the existence of a `P\"oppe polynomial' expansion for basic compositions in terms of
`linear signature expansions'.
Proving the existence of such P\"oppe polynomial expansions boils down to solving a linear algebraic
problem for the expansion coefficients, which we solve constructively to all orders.
\end{abstract}
\keywords{KdV hierarchy \and Grassmannian flow \and combinatorial P\"oppe algebra}

\section{Introduction}\label{sec:intro}
We prove, to all orders, that each member of
the non-commutative potential Korteweg--de Vries hierarchy is a Fredholm Grassmannian flow.
As such, each member of the hierarchy is linearisable, and therefore integrable in this sense.
The proof straightfowardly extends to any linear combination of fields from this hierarchy.
In addition we show that within the class of non-commutative polynomial partial differential fields,
each member of the hierarchy is unique.
Indeed, we prove that each such member matches the non-commutative Lax hierarchy field,
which at every order is therefore a non-commutative polynomial partial differential field.
By `field' here we mean the right-hand side in the partial differential equation when
we express it in the form of the derivative in time of the solution equals the linear
and nonlinear terms on the right. By a `non-commutative polynomial partial differential field'
we mean that the field can expressed as a non-commutative multivariate polynomial whose 
arguments are the solution and spatial partial derivatives of the solution only. 
When we say each  member of the hierarchy is linearisable, we mean that the time-evolving solution 
can be generated by solving the corresponding linear dispersion equation in the hierarchy for
the `scattering data', and the solution to the hierarchy is then generated by solving 
an associated linear Fredholm equation involving the scattering data that represents a Marchenko equation.
Ch.\/ P\"oppe pioneered this approach to finding solutions to classical scalar integrable systems such as the 
Sine-Gordon equation, Korteweg--de Vries equation and its scalar hierarchy,
as well as the nonlinear Schr\"odinger equation in a sequence of papers.
See P\"oppe~\cite{P83,P84,P-KP}, P\"oppe and Sattinger~\cite{PS88} and Bauhardt and P\"oppe~\cite{BP-ZS}.
Recently Doikou \textit{et al.\/} \cite{DMSW20} streamlined P\"oppe's approach, demonstrating that 
only P\"oppe's celebrated kernel product rule is required to establish the linearisation
of the classical Korteweg--de Vries and nonlinear Schr\"odinger equations.
Subsequently Doikou \textit{et al.\/} \cite{DMS20} demonstrated the approach,
as considered by Bauhardt and P\"oppe~\cite{BP-ZS}, naturally extends to
the non-commutative nonlinear Schr\"odinger and Korteweg--de Vries equations.
Malham~\cite{M-quinticNLS} then used this approach to establish the linearisation
of the non-commutative fourth order quintic nonlinear Schr\"odinger equation.
The results herein extend P\"oppe's approach to the whole non-commutative Korteweg--de Vries hierarchy.
However the algebraic structures we develop to establish the linearisation of this hierarchy,
provide a deeper insight into the `integrability' properites of the hierarchy, not
only establishing uniqueness, but also transporting the question of integrability into
the existence of polynomial expansions in a given combinatorial algebra equipped with
an abstract version of the P\"oppe product. In honour of Ch.\/ P\"oppe, we
call this combinatorial algebra the \emph{P\"oppe algebra} and the polynomial expansions,
\emph{P\"oppe polynomials}.

Let us now explain the approach in some more detail. Consider a non-commutative nonlinear 
partial differential equation for $g=g(x,t)$ of the form:
\begin{equation*}  
\pa_tg=d(\pa)g+\pi(g,\pa g,\pa^2 g,\ldots),
\end{equation*}
where $\pa=\pa_x$. Here the `field' which is the right-hand side consists
of the linear $d(\pa)g$ terms,
where $d(\pa)$ is a constant coefficient odd homogeneous polynomial 
with the lowest degree term of degree $3$, and the nonlinear terms $\pi$.
We suppose $\pi$ is a precise homogeneous non-commutative polynomial
function of the arguments indicated up to including derivatives of $g$ of order $\text{deg}(d)-2$.
We assume we have absorbed the linear terms in $\pi$
into $d(\pa)g$, so the lowest degree term in $\pi$ is $2$.
Now suppose the matrix valued function $p=p(x,t)$ satisfies the corresponding
linear partial differential equation:
\begin{equation*}  
\pa_tp=d(\pa)p. 
\end{equation*}
The quantity $p=p(x,t)$ represents the so-called `scattering data'.
P\"oppe elevated the Marchenko equation to the operator level,
associating a Hilbert--Schmidt Hankel operator $P=P(x,t)$ with
$p$ as follows. For any square integrable function $\phi$ set:
\begin{equation*}  
\bigl(P\phi\bigr)(y;x,t)\coloneqq\int_{-\infty}^0p(y+z+x;t)\phi(z)\,\mathrm{d}z. 
\end{equation*}
The operator $P=P(x,t)$ satisfies the linear operator differential equation,
\begin{align*}
\pa_tP&=d(\pa)P,
\intertext{which we augment with the following linear Fredholm equation for $G=G(x,t)$,}
P&=G(\id-P).
\end{align*}
This represents the Marchenko equation at the operator level.
These are the only ingredients we need for the non-commutative potential Korteweg--de Vries hierarchy.

P\"oppe's kernel product rule involving Hankel operators is then key. 
Suppose $F=F(x,t)$ and $F^\prime=F^\prime(x,t)$ are both Hilbert--Schmidt operators
that are continuously dependent on the parameters $x$ and $t$.
Further suppose that $H=H(x,t)$ and $H^\prime=H^\prime(x,t)$ are Hilbert--Schmidt \emph{Hankel} operators,
that are continuously differentiably dependent on $x$ and $t$.
Let $[F]$ denote the kernel of any Hilbert--Schmidt operator $F$.  
Then the Fundamental Theorem of Calculus implies \emph{P\"oppe's kernel product rule}:
\begin{equation*}  
[F\pa_x(HH^\prime)F^\prime](y,z;x,t)=[FH](y,0;x,t)[H^\prime F^\prime](0,z;x,t).
\end{equation*}
With this in hand, we just compute $\pa_tG-d(\pa)G$, where $G$ satisfies the linear
Fredholm equation above, and apply the kernel bracket operator $[\,\cdot\,]$.
If we write $G=PU$ where $U\coloneqq(\id-P)^{-1}$, then we see that basic
calculus properties such as $\pa U=U(\pa P)U$ generate nonlinear terms.
The \emph{goal} is then to use \emph{only} the kernel product rule to establish
`closed forms' for the nonlinear terms generated. By this we mean that the terms generated
can be expressed as a constant coefficient non-commutative poylnomial
in $\pa[G]$, $\pa^2[G]$ and so forth. Note we do not include $[G]$ itself 
in this polynomial as our aim is to derive the
non-commutative \emph{potential} Korteweg--de Vries hierarchy equations. 

Fleshing this out further, since $U\coloneqq(\id-P)^{-1}$, we know $U\equiv\id+UP$.
We have seen that $\pa U=U(\pa P)U$. The \emph{general Leibniz rule} implies:
\begin{equation*}
\pa^n U=U(\pa^n P)U+n\,(\pa U)(\pa^{n-1}P)U+\cdots+n\,(\pa^{n-1}U)(\pa P)U,
\end{equation*}
where we have combined the $\pa^n U$ terms appearing in the Leibniz expansion,
and used the definition for $U$. 
Suppose with $n\geqslant3$ and odd, we take $d(\pa)=\mu_n\pa^n$ with $\mu_n$ a real constant.
Then since $G=PU$ and $U\equiv\id+UP\equiv\id+PU$, we observe
$\pa_tG=\pa_tU=U(\pa_t P)U=\mu_n\,U(\pa^n P)U$ and so,
\begin{align*}
&&\pa_tG&=\mu_n\bigl(\pa^n U-n\,(\pa U)(\pa^{n-1}P)U-\cdots-n\,(\pa^{n-1}U)(\pa P)U\bigr)\\
\Rightarrow&&\pa_t[G]&=\mu_n\pa^n[G]-\mu_n\Bigl(n\,\bigl[(\pa U)(\pa^{n-1}P)U\bigr]
+\cdots+n\,\bigl[(\pa^{n-1}U)(\pa P)U\bigr]\Bigr),
\end{align*}
where we have used that $\pa^nG=\pa^nU$ and so $\pa^n[G]=[\pa^n U]$. 
The question is, can we express all the remaining terms on the right shown in the desired `closed form'?
In other words, can we express them as a constant coefficient non-commutative poylnomial
in $\pa[G]$, $\pa^2[G]$ and so forth. Note that $\mu_n\bigl[U(\pa^n P)U\bigr]=\pa_t[G]$
and from the general Leibniz rule above, we have,
\begin{align*}
  \bigl[U(\pa^n P)U\bigr]
  =\pa^n[U]-n\,\bigl[(\pa U)(\pa^{n-1}P)U\bigr]-\cdots-n\,\bigl[(\pa^{n-1}U)(\pa P)U\bigr].
\end{align*}
Thus an \emph{equivalent question} is, can we express all the terms on the right in terms of
a constant coefficient non-commutative poylnomial in $\pa[U]$, $\pa^2[U]$ and so forth?
This is obviously trivial in the case of the first term on the right.

Let us consider a quick example, naturally for the case $n=3$. It is possible to
show in a few lines, see Example~\ref{ex:kdV} in Section~\ref{sec:existenceanduniqueness}, that 
\begin{equation*}
\bigl[U(\pa^3P)U\bigr]=\pa^3[U]-3\,\bigl(\pa[U]\bigr)\bigl(\pa[U]\bigr).
\end{equation*}
Note, from the P\"oppe kernel product rule above,
when we write $\bigl(\pa[U]\bigr)\bigl(\pa[U]\bigr)$ we mean 
$\bigl(\pa[U]\bigr)(y,0;x,t)\,\bigl(\pa[U]\bigr)(0,z;x,t)$.
This thus establishes that $[G]=[G](y,z;x,t)$ satisfies the 
partial differential equation,
\begin{equation*}
\pa_t[G]=\mu_3\pa^3[G]-3\mu_3\,\bigl(\pa[G]\bigr)\bigl(\pa[G]\bigr).
\end{equation*}
If we set $y=z=0$, then as a special case, $g(x,t)\coloneqq[G](0,0;x,t)$ satisfies
\begin{equation*}
\pa_tg=\mu_3\pa^3g-3\mu_3\,(\pa g)(\pa g),
\end{equation*}
which is the non-commutative potential Korteweg--de Vries equation. 

It is the \emph{equivalent question} above that is crucial to tackling
the higher order cases systematically. Returning to the original
\emph{general Leibniz rule} for operators above, we observe 
the expression for $\pa^n U$ contains all the lower order derivatives
$\pa^{n-1}U$,\ldots,$\pa U$ as factors in the terms on the right.
We can insert the corresponding Leibniz rules for the lower order
terms. This generates the expansion,
see Theorems~\ref{thm:invOpII} and \ref{thm:underlinepolyexpansion}:
\begin{equation*}
\pa^n U=\sum\chi\bigl(a_1a_2\cdots a_k\bigr)
UP_{a_1}UP_{a_2}U\cdots UP_{a_k}U,
\end{equation*}
where the sum is over all compositions $a_1a_2\cdots a_k\in\Cb(n)$ of $n$, and $P_k\coloneqq\pa^kP$.
We call the real coefficient $\chi\bigl(a_1a_2\cdots a_k\bigr)$ 
the \emph{signature character}, it has an explicit form as a product of Leibniz coefficients;
see Definition~\ref{def:signaturecharacter}.

Applying the kernel bracket to this expression we naturally get,
\begin{equation*}
[\pa^n U]=[UP_nU]+\sum\chi\bigl(a_1a_2\cdots a_k\bigr)\,
[UP_{a_1}UP_{a_2}U\cdots UP_{a_k}U],
\end{equation*}
where the sum is over all compositions $a_1a_2\cdots a_k\in\Cb(n)\backslash\{n\}$; we exclude $n$ itself
which we have distinguished in the first term and used that $\chi(n)=1$. 
Further, in this context, P\"oppe's kernel product rule is equivalent to
the following statement. If $u=a_1\cdots a_k$ and $v=b_1\cdots b_\ell$ are compositions 
and $a$ and $b$ are natural numbers, then denoting 
$\mm_u\coloneqq UP_{a_1}UP_{a_2}U\cdots UP_{a_k}U$ and
$\mm_v\coloneqq UP_{b_1}UP_{b_2}U\cdots UP_{b_\ell}U$, we have 
\begin{equation*}
[\mm_uP_aU][UP_b\mm_v]
=[\mm_uP_{a+1}UP_b\mm_v]+[\mm_uP_aUP_{b+1}\mm_v]+2\,[\mm_uP_aUP_1UP_b\mm_v].
\end{equation*}
Now let us take stock. We see that we can always express the terms $[\pa^n U]$
for any $n\in\Nb$ as a linear combination of monomials of the form $[UP_{a_1}UP_{a_2}U\cdots UP_{a_k}U]$.
Further we have an explicit expression for the product of any such monomials.
Indeed, we observe that in principle we can focus our attention entirely on an algebra
of such monomials in which the product is the one just stated.
The terms $[\pa^n U]$ are simple, natural example linear combinations of monomials in this algebra.
We call them the \emph{linear signature expansions}, or \emph{signature expansions} for short.
Now, rearranging the expression for $[\pa^n U]$ above and noting that $\chi(n)=1$, we find, 
\begin{equation*}
[UP_nU]=[\pa^n U]-\sum\chi\bigl(a_1a_2\cdots a_k\bigr)\,
[UP_{a_1}UP_{a_2}U\cdots UP_{a_k}U],
\end{equation*}
where the sum is over all compositions $a_1a_2\cdots a_k\in\Cb(n)\backslash\{n\}$.
And now the \emph{equivalent question} we asked previously above becomes, 
can we express $[UP_nU]$, or equivalently all the terms on the right,
in terms of a constant coefficient non-commutative polynomial in
$[\pa U]$, $[\pa^2 U]$, and so forth?

We observe that the variable labels `$P$' and `$U$' are superfluous.
In our discussion above, we ultimately consider the real algebra of the monomials
$[UP_{a_1}U\cdots UP_{a_k}U]$ where the product of any two such monomials
is given by the P\"oppe kernel product. We can replace these monomials directly
by compositions $a_1a_2\cdots a_k\in\Cb$, with the corresponding P\"oppe
product `$\ast$' given as follows. For two compositions $ua$ and $bv$ in $\Cb$,
where we distinguish the last letter and first letters $a$ and $b$ as indicated,
the P\"oppe product for compositions is given by
\begin{equation*}
(ua)\ast(bv)=u(a+1)bv+ua(b+1)v+2\cdot(ua1bv).
\end{equation*}
This mirrors the P\"oppe kernel product rule above. We further observe that
we can confine ourselves to computations in the algebra $\Rb\la\Cb\ra_\ast$,
the real algebra over $\Cb$ endowed with the product `$\ast$'. 
We define the linear signature expansions in this context as the elements
$\und{n}\in\Rb\la\Cb\ra_\ast$ given for any $n\in\Nb$ by, 
\begin{equation*}
\wdh{\und{n}}=\sum_{w\in\Cb(n)}\chi(w)\cdot w.
\end{equation*}
These correspond to the terms $\pa^n[U]$.
Naturally the product of any two linear signature expansions 
$\wdh{\und{m}}$ and $\wdh{\und{n}}$ in $\Rb\la\Cb\ra_\ast$
is given by, $\wdh{\und{m}}\ast\wdh{\und{n}}=\sum\chi(u)\chi(v)\cdot u\ast v$,
where the sum is over all $u\in\Cb(m)$ and $v\in\Cb(n)$.
Further from our coding just above, the single letter composition $n\in\Rb\la\Cb\ra_\ast$
corresponds to the term $[UP_nU]$.
Now the \emph{equivalent question} above becomes: for every odd $n\in\Nb$, can we find
a polynomial $\pi_n=\pi_n\bigl(\und{1},\und{2},\ldots,\und{(n-2)},\und{n}\bigr)$ such
that $\pi_n\bigl(\und{1},\und{2},\ldots,\und{(n-2)},\und{n}\bigr)=n$ where the
polynomial $\pi_n$ has the form:
\begin{equation*}
\pi_n\coloneqq\sum_{k=1}^{\frac12(n+1)}\sum_{a_1a_2\cdots a_k\in\Cb^\ast(n)}
c_{a_1a_2\cdots a_k}\cdot\wdh{\und{a_1}}\ast\wdh{\und{a_2}}\ast\cdots\ast\wdh{\und{a_k}},
\end{equation*}
where $\Cb^\ast(n)\subset\Cb(n)$ represents the subset
of compositions $w=a_1a_2\cdots a_k$ of $n$ such that
$a_1+a_2+\cdots+a_k=n-k+1$. This is the appropriate subset
of $k$-factor monomials of the form
$\wdh{\und{a_1}}\ast\wdh{\und{a_2}}\ast\cdots\ast\wdh{\und{a_k}}$
that generate compositions in $\Cb(n)$ once their P\"oppe products are evaluated.
The coefficients $c_{a_1a_2\cdots a_k}$ are real constants.
These polynomials are the \emph{P\"oppe polynomials} mentioned earlier.
To determine the coefficients $c_{a_1a_2\cdots a_k}$, we expand all the
P\"oppe products in all the monomials
$\wdh{\und{a_1}}\ast\wdh{\und{a_2}}\ast\cdots\ast\wdh{\und{a_k}}$
present in $\pi_n$. Equating the coefficients of all the compositions in $\Cb(n)$
thus generated, results in an overdetermined system of linear equations for
the coefficients $c_{a_1a_2\cdots a_k}$ for all $a_1a_2\cdots a_k\in\Cb^\ast(n)$.
The right-hand side of the equation $\pi_n=n$ is simply $n\in\Rb\la\Cb\ra_\ast$
and thus all the linear equations are homogeneous apart from that corresponding
to the single letter composition $n\in\Rb\la\Cb\ra_\ast$,
i.e.\/ corresponding to the coefficient $c_n$.
We show there is a unique solution to this linear system and thus 
the corresponding nonlinear partial differential equation in the hierarchy
is integrable. In Section~\ref{sec:Laxhierarchy}, at each odd order $n\in\Nb$,
we derive the non-commutative Lax hierarchy from scratch in the algebra $\Rb\la\Cb\ra_\ast$.
Using a technique analogous to that just outlined, we show that for every odd $n\in\Nb$,
the polynomial expansions for the single letter compositions $n\in\Rb\la\Cb\ra_\ast$
generated by the non-commutative iterative Lax procedure can also be expressed 
as a polynomial expansion in terms of $\und{1}$, $\und{2}$, \ldots, $\und{(n-2)}$ and $\und{n}$.
Since we previously proved such an expansion for the single letter compositions $n\in\Rb\la\Cb\ra_\ast$
is unique, the non-commutative nonlinear partial differential equations generated
by the non-commutative Lax hierarchy, and those we generated using the P\"oppe polynomials
just above, are \emph{one and the same}. Further, as part of this whole procedure,
we also establish a new co-algebra, the `signature co-algebra'.

The algebraic approach to the non-commutative Korteweg--de Vries hierachy
introduced herein, necessarily arose from the desire to extend the Grassmannian flow
solution approach for classical integrable systems. This approach has its roots
in the work of Dyson~\cite{Dyson}, Ablowitz \textit{et al.\/} \cite{ARS} and Mumford~\cite{Mumford},
and in particular in a series of papers by 
P\"oppe~\cite{P83,P84,P-KP}, P\"oppe and Sattinger~\cite{PS88} and Bauhardt and P\"oppe~\cite{BP-ZS}.
Other fundamental approaches include the scheme by Zakharov and Shabat~\cite{ZS,ZS2}
as well as, for example, the unified transform method, see Fokas and Pelloni~\cite{FP}.
P\"oppe's approach was recently revisited by McKean~\cite{McKean}. 
Combining the Grassmannian flow solution approach for 
infinite dimensional Riccati partial differential systems
developed in Beck~\textit{et al.} \cite{BDMSI,BDMSII},
with the Hankel operator approach for classical integrable systems
developed by P\"oppe, Doikou \textit{et al.} \cite{DMSW20}
showed how for general initial data, time evolutionary solutions
to the Korteweg--de Vries and nonlinear Schr\"odinger equations
represent Grassmannian flows. Such flows are linearisable in the 
sense outlined above. Stylianidis~\cite{Stylianidis}
and Doikou \textit{et al.\/} \cite{DMS20} extended this approach
to the non-commutative versions of these classical integrable systems.
Malham~\cite{M-quinticNLS} extended the approach to the non-commutative
fourth order quintic nonlinear Schr\"odinger equation.
Early work on non-commutative integrable systems includes
that by Fordy and Kulisch~\cite{FK}, Ablowitz \textit{et al.} \cite{APT}
and Ercolani and McKean~\cite{EM}.
Work by Nijhoff \textit{et al.} \cite{NQLC}, Fu~\cite{F} and Fu and Nijhoff~\cite{FNI}
and the direct linearisation technique they develop is extremely close in
spirit to the approach we develop herein. Even closer, in the sense that
they elevate the linearisation solution technique to the operator
level and seek solutions in the form of inverse Fredholm operators
associated with scattering data exactly as we do, is the recent work
by Carillo and Schoenlieb~\cite{CSI,CSIa,CSII}. Also see Aden and Carl~\cite{AdenCarl},
Hamanaka and Toda~\cite{HamanakaToda}, as well as Sooman~\cite{Sooman}.
The work by Treves~\cite{TI,TII} is also close in the sense of
seeking to represent the nonlinear fields in the Korteweg--de Vries
hierarchy as polynomials in a given algebra of formal pseudo-differential symbols.
Also see Buryak and Rossi~\cite{BuryakRossi}. 
General algebraic solution approaches to the Kadomtsev--Petviashvili hierarchies
in the context of weakly nonassociative algebras, which are close to
the approach we adopt herein, can be found in
Dimakis and M\"uller--Hoissen~\cite{DM-H2008}, whereas
Dimakis and M\"uller--Hoissen~\cite{DM-H2005} consider
closer connections to shuffle and Rota--Baxter algebras.
See Reutenauer~\cite{Reu}, Malham and Wiese~\cite{MW},
Ebrahimi--Fard \textit{et al.\/} \cite{E-FLMM-KW}
and also Ebrahimi--Fard \textit{et al.\/} \cite{E-FMPW} for more
on shuffle algebras and references therein for Rota-Baxter algebras.
For very recent work on the non-commutative Korteweg--de Vries equation,
see Degasperis and Lombardo~\cite{DL2}, Pelinovksy and Stepanyants~\cite{Pelinovsky}
and Adamopoulou and Papamikos~\cite{AP}.
Hankel operators have also received a lot of recent attention,
see for example Grudsky and Rybkin~\cite{GR1,GR2}, Grellier and Gerard~\cite{Gerard},
and Blower and Newsham~\cite{BN}.
The connection between Grassmannians and classical integrable systems
was first explored by Sato~\cite{SatoI,SatoII} and developed further by Segal and Wilson~\cite{SW}.
Also see Kasman~\cite{Kasman1995,Kasman1998} and Hamanaka and Toda~\cite{HamanakaToda}.

To summarise, what we achieve herein is that,
for the non-commutative potential Korteweg--de Vries hierarchy:

\begin{enumerate}
\item[(i)] We show that establishing integrability at any odd order $n$ is equivalent to determining the
existence of certain polynomial expansions for the single letter composition $n$,
in the real algebra of compositions equipped with the P\"oppe product,
in terms of natural so-called (linear) signature expansions of compositions. 
\end{enumerate}

\noindent Using this algebraic structure, we prove to all orders that each member of the hierarchy:

\begin{enumerate}
\item[(ii)] Is a Fredholm Grassmannian flow and thus linearisable, and so integrable in this sense.
We establish this for any linear combination of fields from the hierarchy;

\item[(iii)] Is unique within the class of polynomial partial differential fields;

\item[(iv)] Matches the non-commutative Lax hierarchy field, which is therefore
a polynomial partial differential field.
\end{enumerate}

\noindent We also:

\begin{enumerate}
\item[(v)] Establish a new co-algebra we call the signature co-algebra based on the de-P\"oppe
co-product, which tensorially decomposes any composition into the sum of all possible composition
pairs that produced that composition via the P\"oppe product.
\end{enumerate}

Our paper is organised as follows.
In Section~\ref{sec:Grassmannianflows} we introduce P\"oppe's formulation for
integrable system via Hankel operators. We discuss why solutions generated
using this approach are example Fredholm Grassmannian flows.
We begin our process of abstraction in Section~\ref{sec:kernelalgebras},
showing how all the inherent operators in P\"oppe's formulation, including
the key operator corresponding to the inverse Fredholm operator associated with the scattering data,
can be expressed as an expansion in the concatenation algebra of compositions over the real
field. After applying the kernel bracket operator to this concatenation
algebra, we show how, necessariy, we need to consider the real algebra of compositions
equipped with the abstract form of the P\"oppe product.  
A short Section~\ref{sec:signaturecoalgebra} follows, in which we 
introduce the signature co-algebra. Though strictly not required
for the subsequent analysis and proofs in the sections which follow,
our motivation and the results therein help elucidate some of the constructs
we use and rely on in them.
Our main result is stated and proved in Section~\ref{sec:existenceanduniqueness}.
Indeed, at each odd order we prove the existence of a P\"oppe polynomial
expansion and thus the integrability of the corresponding non-commutative
nonlinear partial differential equation.
We generate the non-commutative Lax hierarchy in Section~\ref{sec:Laxhierarchy}.
We prove that the non-commutative hierarchy of equations
we established in the preceding section,
correspond precisely to those in the non-commutative Lax hierarchy.
Lastly in Section~\ref{sec:conclusion}, we explore extensions
of the results we establish herein, and in particular, briefly,
the application of this approach to the non-commutative nonlinear Schr\"odinger hierarchy.

\section{P\"oppe formulation and Grassmannian flows}\label{sec:Grassmannianflows}
Let us begin by defining and deriving the main analytical ingredients we need.
We consider Hilbert--Schmidt integral operators which depend on both a 
spatial parameter $x\in\mathbb{R}$ and a time parameter $t\in[0,\infty)$.
In this section $\pa_t$ represents the partial derivative with respect to
the time parameter $t$ while $\pa=\pa_x$ represents the partial derivative
with respect to the spatial parameter $x$. 
For any Hilbert--Schmidt operator $F=F(x,t)$ there exists a square-integrable
kernel $f=f(y,z;x,t)$ such that for any square-integrable function $\phi$,
\begin{equation*}
  (F\phi)(y;x,t) = \int_{-\infty}^0 f(y,z;x,t)\phi(z)\,\mathrm{d}z.
\end{equation*}
\begin{definition}[Kernel bracket]
With reference to the operator $F$ just above, we use the \emph{kernel bracket} notation
$[F]$ to denote the kernel of $F$:
\begin{equation*}
  [F](y,z;x,t) \coloneqq f(y,z;x,t).
\end{equation*}
\end{definition}
Hilbert-Schmidt Hankel operators play a critical role herein. 
We consider Hankel operators which depend on a parameter $x$ as follows. 
\begin{definition}[Hankel operator with parameter]\label{def:Hankel}
We say a given time-dependent Hilbert--Schmidt operator $H$
with corresponding square-integrable kernel $h$ is \emph{Hankel} or \emph{additive}
with parameter $x\in\Rb$ if its action, for any square-integrable function $\phi$, is 
\begin{equation*}
  (H\phi)(y;x,t) \coloneqq \int_{-\infty}^0 h(y+z+x;t)\phi(z)\,\mathrm{d}z.
\end{equation*}
\end{definition}
Such Hankel operators are a key facet of P\"oppe's formulation.
This is because the kernel of the spatial derivative $\pa=\pa_x$ of the operator product
of an arbitrary pair of such Hankel operators can be split into the real matrix product of their respective
kernels in the following sense; see P\"oppe~\cite{P83,P84}. We include the proof from
Doikou \textit{et al.\/} \cite{DMSW20,DMS20} and Malham~\cite{M-quinticNLS} for completeness.
\begin{lemma}[P\"oppe product]\label{lemma:kernelproductrule}
Assume for each $n\in\mathbb N$, $H_n$ are Hankel Hilbert--Schmidt operators with parameter $x$
and $F_n$ are Hilbert--Schmidt operators.
Assume further for each $n\in\mathbb N$, that the corresponding kernels of $F_n$ are continuous and
those of $H_n$ are continuously differentiable. Then, the following \emph{P\"oppe product} rule holds,
\begin{multline*}
[F_1\pa(H_1H_2)F_2F_3\pa(H_3H_4)F_4\cdots F_{n-1}\pa(H_{n-1}H_n)F_n](y,z;x)\\
=[F_1H_1](y,0;x)[F_2H_2](0,0;x)\cdots [F_{n-1}H_{n-1}](0,0;x)[H_nF_n](0,z;x).
\end{multline*}
\end{lemma}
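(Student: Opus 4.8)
The plan is to reduce everything to a single-factor version of the identity and then iterate. The analytic heart is the observation that a Hankel kernel $h(y+z+x)$ depends on $x$, $y$, $z$ only through the combination $y+z+x$, so that $\pa=\pa_x$ acting on such a kernel coincides with differentiation in the internal integration variable. First I would record the base identity: writing the composed kernel as $[H_1H_2](a,b;x)=\int_{-\infty}^0[H_1](a,w;x)[H_2](w,b;x)\,\mathrm dw$ and differentiating under the integral sign (legitimate by the assumed continuous differentiability of the Hankel kernels), I would convert $\pa_x$ into $\pa_w$ on the integrand and apply the Fundamental Theorem of Calculus in $w$. The contribution at the upper limit $w=0$ produces $[H_1](a,0;x)[H_2](0,b;x)$, while the contribution at $w=-\infty$ vanishes because square-integrability of the Hilbert--Schmidt kernels forces the requisite decay. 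This yields the rank-one splitting $\pa[H_1H_2](a,b;x)=[H_1](a,0;x)[H_2](0,b;x)$, which is the entire mechanism driving the lemma.

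Next I would flank this derivative block by the surrounding Hilbert--Schmidt operators. Composing with $F_1$ on the left and the remainder of the chain on the right, and using associativity of kernel composition $[PQ](y,z)=\int[P](y,w)[Q](w,z)\,\mathrm dw$, the rank-one factor $[H_1](a,0;x)[H_2](0,b;x)$ separates the $y$-dependence from the rest: the leftmost integration collapses to $[F_1H_1](y,0;x)$, while the remaining integrations attach to $[H_2](0,\cdot\,;x)$. This establishes the base case of one derivative block, $[F_1\pa(H_1H_2)F_2](y,z;x)=[F_1H_1](y,0;x)[H_2F_2](0,z;x)$, which is exactly the single-factor product rule quoted in the introduction.

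I would then induct on the number of derivative blocks $\pa(H_{2j-1}H_{2j})$. Peeling the leading block off via the base identity replaces the head $F_1\pa(H_1H_2)$ by the boundary factor $[F_1H_1](y,0;x)$ and leaves a chain of the same shape with one fewer block, now headed by the operator $H_2$; applying the inductive hypothesis to that shorter chain assembles the product of point-evaluated kernels claimed on the right-hand side, each $\pa(H_{2j-1}H_{2j})$ contributing one $0$-evaluation on each of its sides through the same Fundamental Theorem of Calculus step. Throughout I would keep the non-commutative, matrix-valued ordering of the kernel factors fixed, since the splitting is a matrix product rather than a scalar outer product, and the order in which factors are glued at the shared $0$-points must be preserved.

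The main obstacle is analytic rather than algebraic: justifying the differentiation under the integral sign and, above all, the vanishing of every boundary term at $w=-\infty$ uniformly across the nested integrations that appear when the blocks are composed. This is precisely where the Hilbert--Schmidt and continuous-differentiability hypotheses are indispensable, and where care is needed so that each application of the Fundamental Theorem of Calculus in the inductive step is licit and the $-\infty$ endpoints genuinely contribute nothing. Once this decay is secured, the combinatorics of gluing successive blocks at their common $0$-evaluations reduces to routine bookkeeping.
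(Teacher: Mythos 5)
Your proposal is correct and follows essentially the same route as the paper: convert $\pa_x$ acting on the product of Hankel kernels into a derivative in the shared internal integration variable (using that each Hankel kernel depends only on the sum of its arguments and $x$), apply the Fundamental Theorem of Calculus to pick up the boundary evaluation at $0$, and iterate over the derivative blocks. The paper carries out the $n=2$ case directly as a triple integral with the flanking $F_1,F_2$ already in place and likewise glosses the vanishing at $-\infty$, so your version differs only in bookkeeping, not in substance.
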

\begin{proof}
It is sufficient to establish the result for the case $n=2$. The general
result follows by successive iteration of this case.
We use the fundamental theorem of calculus and Hankel properties of $H_1$ and $H_2$.
Let $f_1$, $h_1$, $h_2$ and $f_2$ denote the integral kernels
of $F_1$, $H_1$, $H_2$ and $F_2$ respectively.
By direct computation $[F_1\pa_x(H_1H_2)F_2](y,z;x)$ equals
\begin{align*}
&\int_{\Rb_-^3}
f_1(y,\xi_1;x)\pa_x\bigl(h_1(\xi_1+\xi_2+x)h_2(\xi_2+\xi_3+x)\bigr)
f_2(\xi_3,z;x)\,\rd \xi_3\,\rd \xi_2\,\rd \xi_1\\
&=\int_{\Rb_-^3}
f_1(y,\xi_1;x)\pa_{\xi_2}\bigl(h_1(\xi_1+\xi_2+x)h_2(\xi_2+\xi_3+x)\bigr)
f_2(\xi_3,z;x)\,\rd \xi_3\,\rd \xi_2\,\rd \xi_1\\
&=\int_{\Rb_-^2}
f_1(y,\xi_1;x)h_1(\xi_1+x)h_2(\xi_3+x)f_2(\xi_3,z;x)\,\rd \xi_3\,\rd \xi_1\\
&=\int_{\Rb_-}f_1(y,\xi_1;x)h_1(\xi_1+x)\,
\rd \xi_1\cdot\int_{\Rb_-}h_2(\xi_3+x)f_2(\xi_3,z;x)\,\rd \xi_3\\
&=\bigl([F_1H_1](y,0;x)\bigr)\bigl([H_2F_2](0,z;x)\bigr),
\end{align*}
giving the result. \qed
\end{proof}
The following lemma and theorem, based on the Leibniz rule, are key to the abstract formalism we introduce
in Section~\ref{sec:operatoralgebras}. Both results are stated in terms of the operator $P$, though
neither result requires $P$ to be a Hankel operator.
\begin{lemma}[Inverse operator Leibniz rule]\label{lemma:invOpI} 
Suppose the operator $P$ depends on a parameter with respect
to which we wish to compute derivatives. Further suppose $U\coloneqq (\mathrm{id}-P)^{-1}$ exists.
Then we observe 
\begin{equation*}
U\equiv\id+UP\equiv\id+PU,
\end{equation*}
and also that $\pa U\equiv U(\pa P)U$, and more generally that 
\begin{equation*}
\pa^n U\equiv\sum_{k=0}^{n-1}\begin{pmatrix} n\\ k\end{pmatrix} (\pa^kU)(\pa^{n-k}P)U
\equiv\sum_{k=0}^{n-1}\begin{pmatrix} n\\ k\end{pmatrix} U(\pa^{n-k}P)(\pa^kU).
\end{equation*}
\end{lemma}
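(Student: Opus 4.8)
The plan is to establish the three identities in sequence, using the first to bootstrap the second and the second to bootstrap the third, with each step relying only on the definition $U\coloneqq(\id-P)^{-1}$ and the product rule for differentiation.

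First I would verify the algebraic identity $U\equiv\id+UP\equiv\id+PU$. Starting from $(\id-P)U=\id$, which holds by definition of the inverse, I expand the left side to get $U-PU=\id$, and rearranging yields $U=\id+PU$. Symmetrically, from $U(\id-P)=\id$ I obtain $U-UP=\id$, hence $U=\id+UP$. No differentiation is needed here; this is purely the defining relation of a two-sided inverse written in two equivalent forms.

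Next I would prove $\pa U\equiv U(\pa P)U$. The cleanest route is to differentiate the identity $(\id-P)U=\id$ with respect to the parameter. Applying the product rule gives $-(\pa P)U+(\id-P)(\pa U)=0$, so $(\id-P)(\pa U)=(\pa P)U$. Left-multiplying by $U=(\id-P)^{-1}$ produces $\pa U=U(\pa P)U$, as required. The symmetric form $\pa U = U(\pa P)U$ is already symmetric in this case, so no second computation is needed at first order.

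Finally, for the general formula I would proceed by induction on $n$, differentiating the first-order relation $\pa U=U(\pa P)U$ repeatedly. The key observation is that differentiating $\pa^n U$ using the ordinary Leibniz (product) rule on the product $U(\pa P)U$, and then substituting the first-order relation wherever a bare $\pa U$ appears, reorganises everything into the stated sum over binomial coefficients. Concretely, I expect to differentiate the inductive expression and collect terms so that each $\pa^{n+1-k}P$ is flanked by the appropriate powers of $\pa U$; tracking how the binomial coefficients combine (via Pascal's rule) is the bookkeeping heart of the argument. The two symmetric forms of the answer follow by performing the same induction starting from the mirror-image first-order identity, or equivalently by differentiating $U(\id-P)=\id$ instead of $(\id-P)U=\id$.

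The main obstacle I anticipate is purely combinatorial rather than conceptual: ensuring that the coefficients emerging from the induction genuinely assemble into the single binomial $\binom{n}{k}$ stated, rather than a more complicated sum, and that no cross terms survive in which two factors of $\pa P$ appear without an intervening $U$ absorbing the structure. The resolution is that each application of the first-order rule $\pa U=U(\pa P)U$ converts a derivative of $U$ into a single new $\pa P$ factor sandwiched by $U$'s, so the number of $\pa P$ factors is controlled and the telescoping via Pascal's identity $\binom{n}{k}+\binom{n}{k-1}=\binom{n+1}{k}$ closes the induction cleanly. Care is also needed to confirm that the terms with $k=n$ (which would involve $\pa^0 P=P$ with the full derivative on $U$) are correctly excluded by the upper limit $n-1$ in the sum, reflecting that the definition $U\equiv\id+UP$ is what allows the $\pa^n U$ term itself to be reabsorbed.
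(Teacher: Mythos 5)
Your proposal is correct, but it takes a genuinely different route from the paper for the general identity. The paper's proof is a single direct application of the general Leibniz rule to the identity $U\equiv\id+UP$: differentiating $n$ times gives $\pa^n U=\sum_{k=0}^{n}\binom{n}{k}(\pa^kU)(\pa^{n-k}P)$, the $k=n$ term $(\pa^nU)P$ is moved to the left to yield $(\pa^nU)(\id-P)=\sum_{k=0}^{n-1}\binom{n}{k}(\pa^kU)(\pa^{n-k}P)$, and right-multiplying by $U=(\id-P)^{-1}$ finishes the proof in one step, with the mirror identity coming from $U\equiv\id+PU$; no induction and no Pascal's rule are needed. Your inductive route does close, and it is worth recording exactly why: differentiating the order-$n$ formula produces three families of terms, the first two of which combine via Pascal's rule into $\sum_{j}\binom{n+1}{j}(\pa^jU)(\pa^{n+1-j}P)U$ except that the $j=n$ term appears with coefficient $\binom{n}{n-1}=n$ rather than $n+1$, while the third family, namely $\sum_{k=0}^{n-1}\binom{n}{k}(\pa^kU)(\pa^{n-k}P)U(\pa P)U$ arising from rewriting each trailing $\pa U$ as $U(\pa P)U$, reassembles by the induction hypothesis itself into exactly one copy of $(\pa^nU)(\pa P)U$, supplying the missing unit. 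So your anticipated obstacle is the right one and is resolved as you guessed, but the direct Leibniz argument buys you the result without any of this bookkeeping, which is presumably why the paper states it in one line; your argument, on the other hand, makes explicit the recursive self-consistency of the formula that the paper later exploits in Theorem~\ref{thm:invOpII}.
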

\begin{proof}
The first three identities are straightforward. The final identity follows by
applying the Leibniz rule to the first pair of identities and using $U\coloneqq(\id-P)^{-1}$.  \qed
\end{proof}
Similar identities to those in Lemma~\ref{lemma:invOpI} were derived by P\"oppe \cite{P83,P84}.
For any $n\in\mathbb N$, let $\Cb(n)$ denote the set of all compositions of $n$.
The real-valued coefficients $\chi\bigl(a_1a_2\cdots a_k\bigr)$ given in
the following theorem are essentially particular products of Leibniz coefficients, they
are given in Definition~\ref{def:signaturecharacter}.
\begin{theorem}[Signature operator expansion]\label{thm:invOpII}
For any non-negative integer $k$, set $P_k\coloneqq\pa^kP$ and $U_k\coloneqq\pa^kU$.
Then for any $n\in\mathbb N$ we have
\begin{equation*}
U_n=\sum\chi\bigl(a_1a_2\cdots a_k\bigr)
UP_{a_1}UP_{a_2}U\cdots UP_{a_k}U,
\end{equation*}
where the sum is over all compositions $a_1a_2\cdots a_k\in\Cb(n)$.
\end{theorem}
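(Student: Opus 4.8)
The plan is to prove the expansion by induction on $n$, using the inverse operator Leibniz rule from Lemma~\ref{lemma:invOpI} together with the recursive structure of the signature character $\chi$ from Definition~\ref{def:signaturecharacter}. Throughout, write $\mm_w\coloneqq UP_{a_1}UP_{a_2}U\cdots UP_{a_k}U$ for the monomial associated with a composition $w=a_1a_2\cdots a_k$, adopting the convention that the empty composition of $0$ gives $\mm_\emptyset=U$ and $\chi(\emptyset)=1$. The base case $n=1$ is immediate, since the only composition of $1$ is the single letter $1$ and $U_1=\pa U\equiv U(\pa P)U=UP_1U$, which matches as $\chi(1)=1$.

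For the inductive step I would take the first form of the Leibniz identity in Lemma~\ref{lemma:invOpI}, namely
\[
U_n=\sum_{j=0}^{n-1}\binom{n}{j}U_j P_{n-j}U,
\]
and substitute the inductive hypothesis $U_j=\sum_{b_1\cdots b_\ell\in\Cb(j)}\chi(b_1\cdots b_\ell)\,\mm_{b_1\cdots b_\ell}$ for each $j<n$ (the case $j=0$ being $U_0=U=\mm_\emptyset$). Each resulting term $\mm_{b_1\cdots b_\ell}P_{n-j}U$ is precisely $\mm_{b_1\cdots b_\ell(n-j)}$, the monomial of the composition obtained by appending the letter $n-j$ to $b_1\cdots b_\ell$; since $|b_1\cdots b_\ell(n-j)|=j+(n-j)=n$, the double sum ranges over compositions of $n$.

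The essential bookkeeping observation is that every composition $w=a_1\cdots a_k\in\Cb(n)$ is produced \emph{exactly once}: its final letter forces $a_k=n-j$, hence $j=n-a_k$, and its prefix $a_1\cdots a_{k-1}$ is then the unique contributing composition of $j$. Collecting terms, the coefficient of $\mm_w$ in $U_n$ equals $\binom{n}{\,n-a_k\,}\,\chi(a_1\cdots a_{k-1})$. It therefore remains to verify that $\chi$ obeys the last-letter recursion
\[
\chi(a_1\cdots a_k)=\binom{a_1+\cdots+a_k}{a_1+\cdots+a_{k-1}}\,\chi(a_1\cdots a_{k-1}),
\]
which unwinds to $\chi(a_1\cdots a_k)=\prod_{i=1}^{k}\binom{s_i}{s_{i-1}}$ with $s_i\coloneqq a_1+\cdots+a_i$, and collapses by telescoping of factorials to the multinomial coefficient $n!/(a_1!\cdots a_k!)$.

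I expect the main content to lie exactly in this last step: confirming that the product-of-Leibniz-coefficients definition of $\chi$ in Definition~\ref{def:signaturecharacter} indeed satisfies the recursion above. A complementary consistency check is worthwhile, since the \emph{second} form of the Leibniz identity in Lemma~\ref{lemma:invOpI} peels off the \emph{first} letter and instead yields $\chi(a_1\cdots a_k)=\binom{n}{a_1}\,\chi(a_2\cdots a_k)$; the symmetry of the multinomial coefficient under permutation of $a_1,\dots,a_k$ guarantees that both recursions determine the same $\chi$, so the two operator expansions agree, as they must. Note that the argument is entirely formal: it manipulates the algebraic identity $U(\id-P)\equiv\id$ and its derivatives, and requires no analytic input beyond the existence of $U$ assumed in Lemma~\ref{lemma:invOpI}.
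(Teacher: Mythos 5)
Your proof is correct and is essentially the same induction-plus-Leibniz argument the paper gives (for the abstract version, Theorem~\ref{thm:underlinepolyexpansion}): the only difference is that the paper applies the derivation to $U=\id+PU$ and peels off the \emph{first} letter, so that the needed identity $\chi(aw)=\binom{a+|w|}{a}\chi(w)$ is read off directly from Definition~\ref{def:signaturecharacter}, whereas you use $U=\id+UP$ and peel off the \emph{last} letter, which requires your additional (correct) observation that the product of Leibniz coefficients telescopes to the multinomial coefficient and hence also satisfies the last-letter recursion. This is a harmless mirror-image of the paper's proof, and your consistency check between the two forms is exactly the symmetry the paper exploits implicitly.
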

We prove this result, which is a direct consequence of Lemma~\ref{lemma:invOpI},
in a more abstract context in the next section;
see Theorem~\ref{thm:underlinepolyexpansion}.
Indeed, in Sections~\ref{sec:operatoralgebras} and \ref{sec:kernelalgebras}
we pull back the definitions and results above
to two natural abstract algebras.

Let us now outline the solution procedure based on P\"oppe's Hankel operator
approach we employ herein for the non-commutative Korteweg--de Vries hierarchy,
and its relation to Grassmannian flows.
Solutions to this linear system generate solutions to the target
nonlinear partial differential equations from the 
non-commutative potential Korteweg--de Vries hierarchy or any linear
combination of fields from the hierarchy.
\begin{definition}[Linear operator system]\label{def:linearsystem}
Suppose the linear operators $P=P(x,t)$ and $G=G(x,t)$ satisfy the pair of linear equations:
\begin{equation*}  
\pa_tP=d(\pa)P
\qquad\text{and}\qquad 
P=G(\id-P),
\end{equation*}
where $d(\pa)=\mu_3\pa^3+\mu_5\pa^5+\mu_7\pa^7+\cdots$ 
and each $\mu_n$, with $n\geqslant 3$ odd, is a constant real parameter.
\end{definition}
This system of equations for the unknown operator $G$ is linear since $P=P(x,t)$
is the solution of a linear partial differential equation and then $G=G(x,t)$ is
the solution of a linear Fredholm equation. Let us now address the existence and
uniqueness of such a solution operator in the class of Hilbert--Schmidt operators.
Suppose the square matrix-valued function $p=p(x,t)$ satisfies the linear partial 
differential equation,
\begin{equation*}  
\pa_tp=d(\pa)p,
\end{equation*}
with $p(x,0)=p_0(x)$ and $p_0$ a given square matrix-valued function. 
For $w:\Rb\to\mathbb{R}_+$, let $L^2_w$ denote the space of real, square matrix-valued functions $f$
on $\Rb$ whose $L^2$-norm weighted by $w$ is finite, i.e.\/
\begin{equation*}
  \|f\|_{L^2_w}\coloneqq\int_{\Rb} \mathrm{tr}\,\bigl(f^{\mathrm{T}}(x)f(x)\bigr)w(x)\,\mathrm{d}x<\infty,
\end{equation*} 
where $f^{\mathrm{T}}$ denotes the transpose of $f$ and `$\mathrm{tr}$' is the trace operator. 
Let $W:\mathbb{R}\to\mathbb{R}_+$ denote the function $W:x\mapsto 1+x^2$.
Further let $H$ denote the Sobolev space of real, square matrix-valued functions who themselves,
as well as derivatives $\pa$ to all orders of them, are square-integrable.
Doikou \textit{et al.\/} \cite[Lemma~3.1]{DMS20} establish if 
$p_0\in H\cap L^2_W$ then $p\in C^\infty\bigl([0,\infty);H\cap L^2_W\bigr)$ 
and the corresponding Hankel operator $P=P(t)$ whose kernel is $p$,
is Hilbert--Schmidt valued. For Hilbert--Schmidt valued operator $P$
we define the regularised Fredholm determinant to be,
\begin{equation*}
\mathrm{det}_2(\id-P)\coloneqq\exp\Biggl(-\sum_{k\geqslant 2}\frac{1}{k}\mathrm{tr}\,(P^k)\Biggr).
\end{equation*}
See Simon~\cite{Simon:Traces}. The operator $\id-P$ is invertible if and only if
$\mathrm{det}_2(\id-P)\neq0$.
Now assunme as just above, $p_0\in H\cap L^2_W$, and further that the Hilbert--Schmidt
Hankel operator $P_0$ corresponding to the kernel $p_0$ is such that $\mathrm{det}_2(\id-P_0)\neq0$.
Then from Doikou \textit{et al.\/} \cite[Lemma~3.2]{DMS20} we know, there exists
a time $T>0$, such that for all $t\in[0,T]$ and $x\in\Rb$, we know 
$\mathrm{det}_2\bigl(\id-P(x,t)\bigr)\neq0$, where $P=P(x,t)$ is the
Hilbert--Schmidt Hankel operator whose kernel is $p=p(y+x;t)$ given just above,
and $P(x,0)=P_0(x)$. Further, there exists a unique square matrix-valued
$g\in C^\infty\bigl([0,T];C^\infty(\Rb_-^{\times2}\times\Rb_+)\bigr)$
which satisfies the linear Fredholm equation,
\begin{equation*}
p(y+z+x;t)=g(y,z;x,t)-\int_{-\infty}^0 g(y,\xi;x,t)p(\xi+z+x,t)\,\mathrm{d}\xi,
\end{equation*}
corresponding to the operator equation $P=G(\id-P)$ for $G=G(x,t)$. 

Let us now briefly outline why the solution flow for $G$ given in 
Definition~\ref{def:linearsystem} is a Fredholm Grassmannian flow;
many more details can be found in Beck \textit{et al.\/ }~\cite{BDMSI,BDMSII},
Doikou \textit{et al.\/ }~\cite[Sec.~2.3]{DMS20}
and Doikou \textit{et al.\/ }~\cite{DMSW20}.
Detailed introductions and examples of Fredholm Grassmann manifolds
are given in Pressley and Segal~\cite{PS}, Abbondandolo and Majer~\cite{AMajer}
and Andruchow and Larotonda~\cite{AL}.
Suppose $\Hb$ is a given separable Hilbert space.
The Fredholm Grassmannian of all subspaces of $\Hb$ that are comparable
in size to a given closed subspace $\Vb\subset\Hb$ is defined as follows;  
see Segal and Wilson~\cite{SW}. 
\begin{definition}[Fredholm Grassmannian]\label{def:FredholmGrassmannian}
Let $\Hb$ be a separable Hilbert space with a given decomposition
$\Hb=\Vb\oplus\Vb^\perp$, where $\Vb$ and $\Vb^\perp$ are infinite
dimensional closed subspaces. The Grassmannian $\Gr(\Hb,\Vb)$
is the set of all subspaces $\Wb$ of $\Hb$ such that:
\begin{enumerate}
\item[(i)] The orthogonal projection $\mathrm{pr}\colon\Wb\to\Vb$ is 
a Fredholm operator, indeed it is a Hilbert--Schmidt perturbation
of the identity; and
\item[(ii)] The orthogonal projection $\mathrm{pr}\colon\Wb\to\Vb^\perp$ 
is a Hilbert--Schmidt operator.
\end{enumerate}
\end{definition}
We observe that as $\Hb$ is separable, any element in $\Hb$ has a representation
on a countable basis. Thus, for example, we can represent that element
via the sequence of coefficients which are associated with each of the basis elements.
Given an independent set of such sequences in $\Hb=\Vb\oplus\Vb^\perp$
which span $\Vb$, suppose we record them as columns in the infinite matrix
\begin{equation*}
W=\begin{pmatrix} \id+Q \\ P \end{pmatrix}.
\end{equation*}
In this representation, we suppose each column of $\id+Q$ lies in $\Vb$
and each column of $P$ lies in $\Vb^\perp$. Further, assume when we constructed $\id+Q$,
we ensured it was a Fredholm operator on $\Vb$ with $Q\in\mathfrak J_2(\Vb;\Vb)$,
where $\mathfrak J_2(\Vb;\Vb)$ is the class of Hilbert--Schmidt operators from $\Vb$ to $\Vb$,
equipped with the norm $\|Q\|_{\mathfrak J_2}^2\coloneqq\mathrm{tr}\,Q^\dag Q$ where `$\mathrm{tr}$'
is the trace operator. See Simon~\cite{Simon:Traces} for more details.
In addition, assume we constructed $P$ to ensure $P\in\mathfrak J_2(\Vb;\Vb^\perp)$,
the space of Hilbert--Schmidt operators from $\Vb$ to $\Vb^\perp$. 
Let $\Wb$ denote the subspace of $\Hb$
represented by the span of the columns of $W$. Let $\Vb_0\cong\Vb$ denote the canonical
subspace of $\Hb$ with the representation
\begin{equation*}
V_0=\begin{pmatrix} \id\\ O \end{pmatrix},
\end{equation*}
where $O$ is the infinite matrix of zeros. Assume that $\mathrm{det}_2(\id+Q)\neq0$.
Then the projections $\mathrm{pr}\colon\Wb\to\Vb_0$ and $\mathrm{pr}\colon\Wb\to\Vb_0^\perp$
respectively generate
\begin{equation*}
W^\parallel=\begin{pmatrix} \id+Q \\ O \end{pmatrix}
\quad\text{and}\quad
W^\perp=\begin{pmatrix} O \\ P \end{pmatrix}.
\end{equation*}
The subspace of $\Hb$ represented by the span of the columns of $W^\parallel$
naturally coincides with the subspace $\Vb_0$. Indeed, the transformation
$(\id+Q)^{-1}\in\mathrm{GL}(\Vb)$ transforms $W^\parallel$ to $V_0$.
Under this transformation the representation $W$ for $\Wb$ becomes
\begin{equation*}
\begin{pmatrix} \id \\ G \end{pmatrix},
\end{equation*}
where $G=P(\id+Q)^{-1}$. In fact, any subspace $\Wb$ that can
be projected onto $\Vb_0$ can be represented in this way and vice-versa.
In this representation the operators $G\in\mathfrak J_2(\Vb,\Vb^\perp)$
parametrise all the subspaces $\Wb$ that can be projected on to $\Vb_0$.
If $\mathrm{det}_2(\id+Q)=0$, then the projection above is not possible.
In this instance we simply choose a different representative coordinate chart/patch.
It is always possible to choose a subspace $\Vb_0^\prime\cong\Vb$ of $\Hb$
such that the projection $\Wb\to\Vb_0^\prime$ is an isomorphism;
see Pressley and Segal~\cite[Prop.~7.1.6]{PS}.
For further details on coordinate patches see Beck \textit{et al.}~\cite{BDMSI}
and Doikou \textit{et al.}~\cite{DMSW20} and Doikou \textit{et al.}~\cite{DMS20}.
In conclusion we observe that the flow for $G$ given in 
Definition~\ref{def:linearsystem}, taking $Q$ to be `$-P$'
and under the assumptions stated for $p_0$ directly after,
including the regularised determinant restriction,
represents a Fredholm Grassmannian flow.

%
%

\section{Hilbert--Schmidt operator and concatenation algebras}\label{sec:operatoralgebras} 
Let $\Cb\coloneqq\cup_{n\geqslant0}\Cb(n)$ denote the set of all compositions.
Herein we consider an abstract version of the operator algebra generated
by a parameter-dependent operator $P$, its derivatives with respect to the parameter,
and by $U\coloneqq(\id-P)^{-1}$ as well as the derivatives of $U$ with respect to the parameter.
With $P_n\coloneqq\pa^nP$ and $U_n\coloneqq\pa^nU$ for all non-negative integers $n$,
let $\mathbb P$ denote the alphabet $\{P_n\}_{n\geqslant0}$ and $\Ub$ denote the alphabet
$\{U_n\}_{n\geqslant0}$. Then we denote by $\Rb\la\Pb\cup\Ub\ra$,
the Hilbert--Schmidt operator algebra over $\Rb$, generated by $\Pb$ and $\Ub$.
This algebra is unital with the identity operator as the unit element.
One natural approach to generating an abstract version
of the algebra $\Rb\la\Pb\cup\Ub\ra$ is as follows.

Let $\underline{\mathbb N}$ denote the set of non-negative integers $\underl{\und{n}}$.
We distinguish such a set of integers for the following reason.
In the first stage of the abstraction we essentially replace the operators $P_n$ by integers $n$,
and the operators $U_n$ by the integers $\underl{\und{n}}$, and construct a concatenation algebra
of words generated by the alphabets $\Nb$ and $\underline{\Nb}$. 
We denote by $\Rb\la\Nb\cup\underline{\Nb}\ra$ the non-commutative concatenation polynomial
algebra over $\Rb$ generated words constructed from the alphabet $\Nb\cup\underline{\Nb}$;
see Reutenauer~\cite{Reu}. The neutral element is the empty word $\emptyset$,
so that the concatenation of any word from $\Rb\la\Nb\cup\underline{\Nb}\ra$ with $\emptyset$
and vice-versa, just generates the original word.
Hence $\Rb\la\Nb\cup\underline{\Nb}\ra$ is a unital algebra.
The neutral element $\emptyset$ plays the role equivalent
to that of the identity operator in $\Rb\la\Pb\cup\Ub\ra$.
Indeed, from their definitions we observe that $\Rb\la\Nb\cup\underline{\Nb}\ra$
and $\Rb\la\Pb\cup\Ub\ra$ are \emph{isomorphic}.
We can also define on $\Rb\la\Nb\cup\underline{\Nb}\ra$ a derivation operation $\tsd$ that
represents the abstraction of the derivative operator $\pa$ acting on the operators $P$ and $U$.
Indeed operator $\tsd\colon\Rb\la\Nb\cup\underline{\Nb}\ra\to\Rb\la\Nb\cup\underline{\Nb}\ra$
is the linear operator such that for any letter $n\in\Nb$ in $\Rb\la\Nb\cup\underline{\Nb}\ra$,
$\tsd\colon n\mapsto (n+1)$, and for any letter
$\und{n}\in\underline{\Nb}$ in $\Rb\la\Nb\cup\underline{\Nb}\ra$,
$\tsd\colon\underl{\und{n}}\mapsto\underl{\und{(n+1)}}$.
Furthermore for any word $a_1a_2a_3\cdots a_n\in\Rb\la\Nb\cup\underline{\Nb}\ra$,
where the letters $a_1$, $a_2$,\ldots,$a_n$ can be either from $\Nb$ or $\underline{\Nb}$,
the operator $\tsd$ satisfies the Leibniz derivation property:
\begin{equation*}
\tsd(a_1a_2a_3\cdots a_n)=\sum_{k=1}^{n}a_1a_2\cdots a_{k-1}(\tsd a_k)a_{k+1}\cdots a_n. 
\end{equation*}
The first crucial result of this section is that each of the letters in $\underline{\Nb}$
has a linear expansion in terms of monomials of the form
$\und{0}a_1\und{0}a_2\und{0}\cdots\und{0}a_n\und{0}$ with $a_1,a_2,\ldots,a_n\in\Nb$ \emph{only}.
Note here the letter $\und{0}$ corresponds to $U_0\equiv U$.
This is the abstract version of the statement of Theorem~\ref{thm:invOpII}. 
We state and prove this result after the following definition.
We denote by $\Nb^\ast$, the free monoid of words on $\Nb$.
\begin{definition}[Signature character]\label{def:signaturecharacter}
Let $a_1a_2\cdots a_n$ be a word from $\Nb^\ast$.
We associate with any such word the signature character $\chi\colon\Nb^\ast\to\Rb$,
given by
\begin{equation*}
\chi\bigl(a_1a_2\cdots a_n\bigr)
\coloneqq\begin{pmatrix} a_1+\cdots+a_n\\a_n\end{pmatrix}
\begin{pmatrix} a_{1}+a_{2}+\cdots+a_n\\a_{2}\end{pmatrix}
\cdots\begin{pmatrix} a_{n-1}+a_k\\a_{n-1}\end{pmatrix}\begin{pmatrix} a_n\\a_n\end{pmatrix},
\end{equation*}
where each of the factors shown on the right is a Leibniz coefficient, so that for example,
the penultimate factor is $a_{n-1}+a_n$ choose $a_{n-1}$.
\end{definition}
Let us introduce some notation. Given a word $w=a_1a_2\cdots a_n$
generated using letters $a_1$, $a_2$, \ldots, $a_n$ from $\Nb$,
let $w_{\und{0}}$ denote the corresponding word in $\Rb\la\Nb\cup\underline{\Nb}\ra$
of the form
\begin{equation*}w_{\und{0}}\coloneqq\und{0}a_1\und{0}a_2\und{0}\cdots\und{0}a_n\und{0},
\end{equation*}
i.e.\/ a single letter $\und{0}$ exists between each letter $a_1$, $a_2$, \ldots, $a_n$ from $\Nb$,
and at each end. 
\begin{theorem}[Signature operator expansion]\label{thm:underlinepolyexpansion}
Given any $n\in\Nb$, we have, in the algebra $\Rb\la\Nb\cup\underline{\Nb}\ra$,
for any corresponding integer letter $\und{n}$ from $\underline{\Nb}$ that:
\begin{equation*}
\und{n}=\sum_{w\in\Cb(n)}\chi(w)\cdot w_{\und{0}}.
\end{equation*}
\end{theorem}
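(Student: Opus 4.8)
The plan is to prove the identity by induction on $n$, using the derivation operator $\tsd$ as the engine. The key observation is that Lemma~\ref{lemma:invOpI} (in its abstract incarnation) gives us two handles on the letters $\und{n}$: first, the base case $\und{0}$ is already in the required form (it is the length-zero monomial, the empty composition padded by a single $\und{0}$), and second, applying $\tsd$ to $\und{n}$ produces $\und{(n+1)}$. So if I can understand how $\tsd$ acts on a monomial of the form $w_{\und{0}}=\und{0}a_1\und{0}\cdots\und{0}a_n\und{0}$, I can push the expansion from level $n$ to level $n+1$ and match coefficients against the signature character.

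First I would make precise the abstract version of $\pa U=U(\pa P)U$, namely $\tsd\,\und{0}=\und{0}\,1\,\und{0}$, which is the abstraction of $\pa U\equiv U(\pa P)U=U P_1 U$. More generally, since $\tsd$ is a Leibniz derivation and $\tsd$ sends each plain letter $a\mapsto(a+1)$, applying $\tsd$ to $w_{\und{0}}=\und{0}a_1\und{0}a_2\und{0}\cdots\und{0}a_n\und{0}$ produces two distinct types of term: those where $\tsd$ hits one of the plain letters $a_i$, raising it to $a_i+1$ (this preserves the number of $\und{0}$ separators and hence the length $k$ of the composition), and those where $\tsd$ hits one of the $\und{0}$ factors, replacing that $\und{0}$ by $\und{0}\,1\,\und{0}$ and thereby inserting a new plain letter $1$ between two fresh $\und{0}$'s (this increases the composition length by one). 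Both operations send compositions of $n$ to compositions of $n+1$, which is exactly the structure we need.

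The heart of the argument is then the coefficient bookkeeping. Assuming inductively that $\und{n}=\sum_{w\in\Cb(n)}\chi(w)\,w_{\und{0}}$, I would apply $\tsd$ to both sides and collect, for each fixed target composition $w'=b_1b_2\cdots b_\ell\in\Cb(n+1)$, all the ways it arises: either from a composition $w\in\Cb(n)$ by incrementing one letter (so $w$ agrees with $w'$ except $b_j=a_j+1$ at one position, with all other letters equal and the same length), or by inserting a new letter $1$ at one of the gaps (so some $b_j=1$ and deleting it recovers a composition of $n$ of length $\ell-1$). The claim reduces to the purely combinatorial identity that the signature characters satisfy the corresponding recursion: $\chi(w')$ equals the sum of $\chi(w)$ over the ``increment'' predecessors plus the sum over the ``insert-a-$1$'' predecessors. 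I expect this Pascal-type recursion among products of Leibniz binomial coefficients to be the main obstacle, since the multiplicative structure of $\chi$ in Definition~\ref{def:signaturecharacter} couples the binomial in each slot to the running partial sums $a_1+\cdots+a_i$, so incrementing or inserting a letter perturbs several binomial factors at once.

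To control this I would verify the recursion slot by slot, exploiting the standard Pascal relation $\binom{m}{j}=\binom{m-1}{j}+\binom{m-1}{j-1}$ applied to the Leibniz coefficients indexed by the affected partial sums; the ``increment'' contributions should reproduce one Pascal summand in each relevant factor and the ``insertion'' contributions the other, so that telescoping across the product recovers $\chi(w')$ exactly. An alternative, and possibly cleaner, route is to avoid an explicit coefficient recursion altogether and instead invoke Theorem~\ref{thm:invOpII} directly: since $\Rb\la\Nb\cup\underline{\Nb}\ra$ and $\Rb\la\Pb\cup\Ub\ra$ are \emph{isomorphic} with $\und{n}\leftrightarrow U_n$ and $n\leftrightarrow P_n$, the stated expansion is nothing but the image of the already-asserted operator expansion $U_n=\sum_{w\in\Cb(n)}\chi(w)\,UP_{a_1}U\cdots UP_{a_k}U$ under that isomorphism, with $w_{\und{0}}$ being precisely the abstract encoding of $UP_{a_1}U\cdots UP_{a_k}U$. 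If the paper intends this theorem to \emph{be} the proof of Theorem~\ref{thm:invOpII} rather than a corollary of it, then the inductive/derivation argument above is the honest route and the combinatorial recursion for $\chi$ is unavoidable; I would therefore present the induction as the primary proof and remark that it simultaneously establishes Theorem~\ref{thm:invOpII} via the isomorphism.
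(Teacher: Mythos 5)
Your strategy is sound but genuinely different from the paper's. The paper does not apply $\tsd$ to the expansion of $\und{n}$ itself; instead it applies $\tsd^{n+1}$ to the resolvent identity $\und{0}\equiv\emptyset+0\und{0}$, combines the resulting term $0\und{(n+1)}$ with the left-hand side to obtain
\begin{equation*}
\und{(n+1)}\equiv\sum_{k=0}^{n}\binom{n+1}{n+1-k}\cdot\und{0}(n+1-k)\und{k},
\end{equation*}
and then substitutes the expansions of $\und{k}$ for $k\leqslant n$ by strong induction. The payoff of that route is that the coefficient matching is immediate: the signature character is defined multiplicatively precisely so that $\chi\bigl((n+1-k)w\bigr)=\binom{n+1}{n+1-k}\chi(w)$, and no binomial identity has to be proved. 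Your route---single-step induction, applying $\tsd$ once to $\sum_{w}\chi(w)\cdot w_{\und{0}}$ and sorting the output into ``increment a letter'' and ``insert a $1$ at a gap''---also works, but it shifts all the work onto the recursion for $\chi$ that you rightly flag as the main obstacle. For the record, that recursion does hold, though the mechanism is not quite ``increments give one Pascal summand and insertions the other.'' Writing $\chi(b_1\cdots b_\ell)=\prod_i\binom{s_i}{b_i}$ with tail sums $s_i=b_i+\cdots+b_\ell$, the predecessor at position $j$ (of either type) contributes the same expression $T_j=\prod_{i<j}\binom{s_i-1}{b_i}\cdot\binom{s_j-1}{b_j-1}\cdot\prod_{i>j}\binom{s_i}{b_i}$; when $b_j=1$ the middle factor is $1$ and the formula reduces to the deletion term, and one must remember that the several $\und{0}$'s adjacent to equal letters account for the multiplicity. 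Pascal's rule $\binom{s_i}{b_i}=\binom{s_i-1}{b_i}+\binom{s_i-1}{b_i-1}$ then gives $\sum_j T_j=\prod_i\binom{s_i}{b_i}-\prod_i\binom{s_i-1}{b_i}$ by telescoping, and the subtracted product vanishes because its last factor is $\binom{b_\ell-1}{b_\ell}=0$. So your proof closes, at the cost of establishing this identity, which the paper's choice of recursion avoids entirely. You are also right that simply invoking Theorem~\ref{thm:invOpII} through the isomorphism would be circular, since the paper proves that theorem via this one; your inductive argument is the honest route in your setup.
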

\begin{proof}
We begin by recreating some of the basic results for
the operators $P$ and $U$ from $\Rb\la\Pb\cup\Ub\ra$ in Lemma~\ref{lemma:invOpI}
in terms of the corresponding elements from $\Rb\la\Nb\cup\underline{\Nb}\ra$.
With the letter $0$ corresponding to $P$, the letter $\und{0}$ corresponding
to $U$ and with $\emptyset$ representing the
neutral element corresponding to the identity $\id$ in $\Rb\la\Pb\cup\Ub\ra$,
we have
\begin{equation*}
\und{0}\coloneqq (\emptyset-0)^{-1} \qquad\text{and thus}\qquad
\und{0}\equiv\emptyset+0\und{0}\equiv\emptyset+\und{0}0.
\end{equation*}
These correspond to the first set of statements in Lemma~\ref{lemma:invOpI}.
Next using that $\tsd(\und{n})=\und{(n+1)}$, we apply $\tsd$ several times to
the identity $\und{0}\equiv\emptyset+0\und{0}$. Thus we have
\begin{equation*}
\und{1}\equiv\begin{pmatrix}1\\1\end{pmatrix}\cdot 1\und{0}+\begin{pmatrix}1\\0\end{pmatrix}\cdot 0\und{1}
\qquad\Leftrightarrow\qquad\und{1}\equiv\begin{pmatrix}1\\1\end{pmatrix}\cdot\und{0}1\und{0}.
\end{equation*}
This establishes the statement of the lemma for $\und{1}$. 
One further iteration of this procedure is insightful.
Applying $\tsd^2$ to the identity $\und{0}\equiv\emptyset+0\und{0}$, we observe
\begin{equation*}
\und{2}\equiv\begin{pmatrix}2\\2\end{pmatrix}\cdot2\und{0}+\begin{pmatrix}2\\1\end{pmatrix}\cdot1\und{1}
+\begin{pmatrix}2\\0\end{pmatrix}\cdot0\und{2}
\quad\Leftrightarrow\quad
\und{2}\equiv\begin{pmatrix}2\\2\end{pmatrix}\cdot\und{0}2\und{0}
+\begin{pmatrix}2\\1\end{pmatrix}\cdot\und{0}1\und{1}
\end{equation*}
Substituting for $\und{1}$ into the right-hand side, we find,
\begin{equation*}
\und{2}\equiv\begin{pmatrix}2\\2\end{pmatrix}\cdot\und{0}2\und{0}+\begin{pmatrix}2\\1\end{pmatrix}
\begin{pmatrix}1\\1\end{pmatrix}\cdot\und{0}1\und{0}1\und{0}.
\end{equation*}
This corresponds to the case $n=2$ in the statement of the lemma.
We now focus on the overall proof which proceeds by induction.
We have seen the statement of the lemma holds for $n=1$.
Assume the statement holds for all $1\leqslant k\leqslant n$.
We now show it holds for $(n+1)$, i.e.\/ that we have
\begin{equation*}
\und{(n+1)}=\sum_{w\in\Cb(n+1)}\chi(w)\cdot w_{\und{0}}.
\end{equation*}
We apply $\tsd^{n+1}$ to the identity $\und{0}\equiv\emptyset+0\und{0}$.
Using the Leibniz rule and combining the final term $0\und{(n+1)}$ on the right
with the term $\tsd^{n+1}\und{0}\equiv\und{(n+1)}$ on the left,
in the same manner as we did above, we obtain
\begin{align*}
\und{(n+1)}\equiv&
\begin{pmatrix}n+1\\n+1\end{pmatrix}\cdot\und{0}(n+1)\und{0}
+\begin{pmatrix}n+1\\n\end{pmatrix}\cdot\und{0}n\und{1}
+\begin{pmatrix}n+1\\n-1\end{pmatrix}\cdot\und{0}(n-1)\und{2}\\
&\;+\begin{pmatrix}n+1\\n-2\end{pmatrix}\cdot\und{0}(n-2)\und{3}
+\cdots+\begin{pmatrix}n+1\\2\end{pmatrix}\cdot\und{0}2\und{(n-1)}
+\begin{pmatrix}n+1\\1\end{pmatrix}\cdot\und{0}1\und{n}.
\end{align*}
We have assumed that for each $k$ such that $2\leqslant k\leqslant n$, we have
$\und{k}$ is the appropriate linear combination of all the compositions of $k$.
The first term on the right above involving $\und{0}(n+1)\und{0}$ is the only
composition of $n+1$ starting with the digit $n+1$---ignoring the $\und{0}$'s.
This term on the right above matches the term in $\und{(n+1)}$
corresponding to the single digit composition of $n+1$,
with the correct coefficient $\chi\bigl((n+1)\bigr)$.  
Now consider the next term on the right involving $\und{0}n\und{1}$.
If we substitute for $\und{1}$ we obtain the the only composition of $n+1$
starting with the digit $n$---again ignoring the $\und{0}$'s---namely
the term $\und{0}n\und{0}1\und{0}$. The coefficient is
\begin{equation*}
\begin{pmatrix}n+1\\n\end{pmatrix}\begin{pmatrix}1\\1\end{pmatrix}=\chi\bigl(n1\bigr).
\end{equation*}
Indeed, for each $1\leqslant k\leqslant n$, consider the term involving
$\und{0}(n+1-k)\und{k}$ on the right in the Leibniz expansion above.
We substitute for $\und{k}$ which is the appropriate linear combination of
all compositions of $k$. We observe this exhausts all the possible compositions
of $n+1$ that start with the digit $n+1-k$---ignoring the $\und{0}$'s.
Now consider the corresponding coefficients. Consider a generic term $w_{\und{0}}$ 
in $\und{k}$, for which the $\und{0}$-stripped version of $w_{\und{0}}$, namely $w$,
is a composition of $k$. The coefficient of $w_{\und{0}}$ is $\chi(w)$. 
Its coefficient in the Leibniz expansion above would be
\begin{equation*}
\begin{pmatrix}n+1\\n+1-k\end{pmatrix}\,\chi(w)=\chi\bigl((n+1-k)w\bigr).
\end{equation*}
This last result follows from Definition~\ref{def:signaturecharacter} for $\chi(w)$:
for any word $w$ which is a composition of $k$ and for which for a given letter $a$,
the word $aw$ is a composition of $n+1$, then $a=n+1-k$ and the result above holds.
This completes the proof.\qed
\end{proof}
The significance of Theorem~\ref{thm:underlinepolyexpansion} is that every element
$\und{n}\in\Rb\la\Nb\cup\underline{\Nb}\ra$ for $n\in\Nb$ can be expressed as a linear
combination of monomials of the form $\und{0}a_1\und{0}a_2\und{0}\cdots\und{0}a_n\und{0}$
with all the $a_k\in\Nb$, i.e.\/ from the integer alphabet only, not involving $\underline{\Nb}$.
This means that we can use monomials of this form as a basis for $\und{n}\in\Rb\la\Nb\cup\underline{\Nb}\ra$.
Indeed it thus is sufficient for us to consider the concatenation algebra $\und{n}\in\Rb\la\Cb_{\und{0}}\ra$
of monomials of the form $\und{0}a_1\und{0}a_2\und{0}\cdots\und{0}a_n\und{0}$ from 
$\Cb_{\und{0}}$, the set of all compositions with the letter $\und{0}$
squeezed between each digit as well as being added to each end.

\section{Hankel kernel and P\"oppe algebras}\label{sec:kernelalgebras}
We have seen that, as a result of Theorem~\ref{thm:underlinepolyexpansion},
it is sufficient for us to consider the concatenation algebra $\und{n}\in\Rb\la\Cb_{\und{0}}\ra$
of monomials of the form $\und{0}a_1\und{0}a_2\und{0}\cdots\und{0}a_n\und{0}$ where
$a_1a_2\cdots a_n\in\Cb$. This means instead of the Hilbert--Schmidt operator algebra
$\Rb\la\Pb\cup\Ub\ra$, it is sufficient for us to consider the Hilbert--Schmidt
operator algebra $\Rb\la\Pb_U\ra$ generated by operator monomials of the form
$UP_{a_1}UP_{a_2}U\cdots UP_{a_n}U$. 
Indeed we could have deduced this last statement
directly from Theorem~\ref{thm:invOpII}. Our goal herein is to consider
the algebra of Hilbert--Schmidt Hankel kernels generated by applying the 
the kernel bracket $[\,\cdot\,]$ to such operator monomials. 
To this end, let $\Rb\la[\mathbb P_U]\ra$ denote the non-commutative polynomial matrix algebra
generated by \emph{kernel monomials} of the form $[UP_{a_1}UP_{a_2}U\cdots UP_{a_n}U]$ 
where $a_1a_2\cdots a_n\in\Cb$. We call this the \emph{Hankel kernel algebra}.
We endow this algebra with a product equivalent to
the P\"oppe product in Lemma~\ref{lemma:kernelproductrule}
for such kernel monomials as follows.
\begin{lemma}[P\"oppe product for kernel monomials]\label{lemma:Poppeprodkernels}
Suppose $u=a_1\cdots a_k\in\Cb$ and $v=b_1\cdots b_\ell\in\Cb$ 
while $a,b\in\mathbb N$. Let $\mm_u$ and $\mm_v$ denote the respective kernel monomials
$\mm_u\coloneqq UP_{a_1}UP_{a_2}U\cdots UP_{a_k}U$ and
$\mm_v\coloneqq UP_{b_1}UP_{b_2}U\cdots UP_{b_\ell}U$.
Then we have 
\begin{equation*}
[\mm_uP_aU][UP_b\mm_v]
=[\mm_uP_{a+1}UP_b\mm_v]+[\mm_uP_aUP_{b+1}\mm_v]+2\,[\mm_uP_aUP_1UP_b\mm_v].
\end{equation*}
\end{lemma}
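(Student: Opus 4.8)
The plan is to reduce the statement to the basic P\"oppe product rule of Lemma~\ref{lemma:kernelproductrule} (its $n=2$ case) by \emph{exposing genuine Hankel factors at the junction point $0$}. The obstruction to applying Lemma~\ref{lemma:kernelproductrule} directly is that the operator adjacent to the evaluation point in each of $[\mm_uP_aU]$ and $[UP_b\mm_v]$ is a copy of $U\coloneqq(\id-P)^{-1}$, which is \emph{not} Hankel. The tool that resolves this is Lemma~\ref{lemma:invOpI}: writing $P_0\coloneqq P$ (which \emph{is} Hankel), we have $U\equiv\id+UP_0\equiv\id+P_0U$, or equivalently $UP_0=P_0U=U-\id$.

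First I would peel the junction copies of $U$ off each bracket. Writing the trailing $U$ as $\id+UP_0$ and the leading $U$ as $\id+P_0U$ yields
\[
[\mm_uP_aU](y,0)=[\mm_uP_a](y,0)+[\mm_uP_aUP_0](y,0),\qquad
[UP_b\mm_v](0,z)=[P_b\mm_v](0,z)+[P_0UP_b\mm_v](0,z),
\]
so that every resulting boundary kernel now terminates (respectively begins) in a genuine Hankel operator $P_a$ or $P_0$ (respectively $P_b$ or $P_0$). Multiplying the two expansions produces four products, to each of which Lemma~\ref{lemma:kernelproductrule} now applies, with the remaining $U$'s and the blocks $\mm_u,\mm_v$ absorbed into the general Hilbert--Schmidt factors $F_1,F_2$; each such product retains a Hankel $P$-factor and so is Hilbert--Schmidt, as the lemma requires. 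For instance the mixed product becomes $[\mm_uP_a](y,0)[P_0UP_b\mm_v](0,z)=[\mm_u\pa(P_aP_0)UP_b\mm_v]=[\mm_uP_{a+1}P_0UP_b\mm_v]+[\mm_uP_aP_1UP_b\mm_v]$, while the doubly-peeled product becomes $[\mm_uP_aUP_0](y,0)[P_0UP_b\mm_v](0,z)=[\mm_uP_aU\,\pa(P_0P_0)\,UP_b\mm_v]=[\mm_uP_aU(P_1P_0+P_0P_1)UP_b\mm_v]$.

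Next I would collapse the spurious $P_0$-factors back into the neighbouring $U$'s using $UP_0=P_0U=U-\id$. This rewrites each of the four contributions as a combination of the three \emph{target} monomials $\mm_uP_{a+1}UP_b\mm_v$, $\mm_uP_aUP_{b+1}\mm_v$ and $\mm_uP_aUP_1UP_b\mm_v$, together with several auxiliary `contracted' monomials in which a $U$ has been deleted, such as $\mm_uP_{a+1}P_b\mm_v$, $\mm_uP_aP_{b+1}\mm_v$, $\mm_uP_aP_1UP_b\mm_v$ and $\mm_uP_aUP_1P_b\mm_v$.

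The main work, and the only real obstacle, is the bookkeeping of these four contributions. I expect the auxiliary contracted monomials to cancel in pairs across the four terms, leaving exactly the three desired monomials. Crucially, the coefficient $2$ on $[\mm_uP_aUP_1UP_b\mm_v]$ should arise entirely from the doubly-peeled product: there the two identities $P_0U=U-\id$ and $UP_0=U-\id$ are each applied once, and each produces one copy of the block $UP_1U$, so the two copies add to give the factor $2$. Verifying that every remaining auxiliary term cancels and that no stray monomial survives is the crux of the argument; once that cancellation is checked, the claimed identity follows immediately.
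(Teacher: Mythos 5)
Your proposal is correct and follows essentially the same route as the paper's proof: peel the junction copies of $U$ via $U\equiv\id+UP\equiv\id+PU$, apply the P\"oppe kernel product rule of Lemma~\ref{lemma:kernelproductrule} to the four resulting products, expand the Leibniz derivatives, and recombine. The only (cosmetic) difference is that the paper reassembles $X+XPU=XU$ directly across pairs of products, whereas you write $XP_0U=XU-X$ and cancel the contracted monomials; I have checked that the cancellations and the factor $2$ work out exactly as you predict.
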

\begin{proof}
Using that $U\equiv\id+UP\equiv\id+PU$,
the P\"oppe product rule from Lemma~\ref{lemma:kernelproductrule}
and that $\pa U=U(\pa P)U$, we observe
\begin{align*}
[\mm_uP_aU][UP_b\mm_v]&=[\mm_uP_a+\mm_uP_aUP][PUP_b\mm_v+P_b\mm_v]\\
&=\;[\mm_uP_a][P_b\mm_v]+[\mm_uP_a][PUP_b\mm_v]\\
&\;+[\mm_uP_aUP][P_b\mm_v]+[\mm_uP_aUP][PUP_b\mm_v]\\
=&[\mm_u\pa(P_aP_b)\mm_v]+[\mm_u\pa(P_aP)UP_b\mm_v]\\
&\;+[\mm_uP_aU\pa(PP_b)\mm_v]+[\mm_uP_aU\pa(PP)UP_b\mm_v]\\
=&[\mm_u(\pa P_a)P_b\mm_v]+[\mm_uP_a(\pa P_b)\mm_v]\\
&\;+[\mm_u(\pa P_a)PUP_b\mm_v]+[\mm_uP_a(\pa P)UP_b\mm_v]\\
&\;+[\mm_uP_aU(\pa P)P_b\mm_v]+[\mm_uP_aUP(\pa P_b)\mm_v]\\
&\;+[\mm_uP_aU(\pa P)PUP_b\mm_v]+[\mm_uP_aUP(\pa P)UP_b\mm_v]\\
=&[\mm_u(\pa P_a)UP_b\mm_v]+[\mm_uP_aU(\pa P_b)\mm_v]\\
&\;+[\mm_uP_aU(\pa P)UP_b\mm_v]+[\mm_uP_aU(\pa P)UP_b\mm_v]\\
=&[\mm_u(\pa P_a)UP_b\mm_v]+[\mm_uP_aU(\pa P_b)\mm_v]\\
&\;+2\,[\mm_uP_aU(\pa P)UP_b\mm_v],
\end{align*}
which establishes the result.\qed
\end{proof}
\begin{remark}
Recall from Lemma~\ref{lemma:kernelproductrule}, we implicitly interpret multiple
kernel products of the form $[\,\cdot\,][\,\cdot\,]\cdots[\,\cdot\,][\,\cdot\,]$
to mean $[\,\cdot\,](y,0;x)[\,\cdot\,](0,0;x)\cdots[\,\cdot\,](0,0;x)[\,\cdot\,](0,z;x)$.
\end{remark}
\begin{definition}[Signature kernel expansions]\label{def:comppolyexpnker}
Motivated by Theorems~\ref{thm:invOpII} and \ref{thm:underlinepolyexpansion},
we identify the following specific linear \emph{signature kernel expansions},
which for any $n\in\Nb$, with the sum over all compositions $a_1a_2\cdots a_k\in\Cb(n)$,
are given by 
\begin{equation*}
[U_n]=\sum\chi\bigl(a_1a_2\cdots a_k\bigr)\,
[UP_{a_1}UP_{a_2}U\cdots UP_{a_{k-1}}UP_{a_k}U].
\end{equation*}
\end{definition}
Our goal now is to generate an abstract version of the Hankel kernel algebra
$\Rb\la[\mathbb P_U]\ra$. One natural approach is as follows. 
Let $\Rb\la\Cb\ra_\ast$ denote the non-commutative 
polynomial algebra over $\Rb$ generated by composition elements from $\Cb$, 
endowed with the following \emph{P\"oppe product} for compositions.
\begin{definition}[P\"oppe product for compositions]\label{def:Poppeproductcompositions}
Consider two compositions $ua$ and $bv$ in $\Cb$, where we distinguish the last and first letters
in the former and latter compositions as the natural numbers $a$ and $b$ respectively.
We define the \emph{P\"oppe product} `$\ast$' from 
$\Rb\la\Cb\ra_\ast\times\Rb\la\Cb\ra_\ast$ to $\Rb\la\Cb\ra_\ast$
for the compositions $ua$ and $bv$ to be, 
\begin{equation*}
(ua)\ast(bv)=u(a+1)bv+ua(b+1)v+2\cdot(ua1bv).
\end{equation*}
\end{definition}
We see some alternative prescriptions for this product in Section~\ref{sec:Laxhierarchy};
see Lemma~\ref{lemma:Poppeproductandderivation}.
\begin{remark}[Stripped or $\und{0}$-padded compositions?]
One might wonder why we have not considered here
the non-commutative polynomial algebra over $\Rb$ generated by composition elements
from $\Cb_{\und{0}}$, where $\Cb_{\und{0}}$ is the set of composition monomials
of the form $w_{\und{0}}=\und{0}a_1\und{0}a_2\und{0}\cdots\und{0}a_n\und{0}$
where $w=a_1a_2\cdots a_n\in\Cb$. Indeed this would be natural.
We would define the P\"oppe product for compositions from $\Cb_{\und{0}}$
in exactly the same manner as in Definition~\ref{def:Poppeproductcompositions}
above. So for example $ua\in\Cb$ is replaced by $(ua)_{\und{0}}$,
i.e.\/ a single $\und{0}$ is squeezed between all the letters in $ua$
as well as on either end. All the terms in the product $(ua)\ast(bv)$
can be interpretted analogously. In this $\und{0}$-padded abstraction,
we interpret $\und{0}$ as corresponding to the \emph{pseudo-passive} operator `$U$'
in the kernel monomials $[UP_{a_1}UP_{a_2}U\cdots UP_{a_n}U]$.
With this interpretation there is a one-to-one correspondence
between the monomials $\und{0}a_1\und{0}a_2\und{0}\cdots\und{0}a_n\und{0}$
and $[UP_{a_1}UP_{a_2}U\cdots UP_{a_n}U]$. The P\"oppe products given 
in Lemma~\ref{lemma:Poppeprodkernels} and Definition~\ref{def:Poppeproductcompositions}
are mirror images of each other and the algebras generated by
the corresponding sets of monomials are seen to be isomorphic.
However in practice, keeping the pseudo-passive $\und{0}$'s is cumbersome,
and not required.
\end{remark}
Herein we use the stripped abstraction of $\Rb\la[\mathbb P_U]\ra$,
replacing it by its isomorphic equivalent $\Rb\la\Cb\ra_\ast$. 
The coding we use is to replace the kernel monomials $[UP_{a_1}U\cdots UP_{a_n}U]$
by the $\und{0}$-stripped compositions $a_1\cdots a_n$ and use the
P\"oppe product for compositions in Definition~\ref{def:Poppeproductcompositions}
in place of the P\"oppe product in Lemma~\ref{lemma:Poppeprodkernels}.
Hence, to emphasise, $\Rb\la\Cb\ra_\ast$ is the non-commutative polynomial
algebra over $\Rb$ generated by the set of compositions $\Cb$ with
the product being the P\"oppe product in Definition~\ref{def:Poppeproductcompositions}.
Further, from their definitions, we observe that
$\Rb\la\Cb\ra_\ast$ and $\Rb\la[\mathbb P_U]\ra$ are isomorphic.
\begin{remark}\label{rmk:unitalemptycomp}
Note we do not require $\Rb\la[\mathbb P_U]\ra$
and $\Rb\la\Cb\ra_\ast$ to be unital algebras here nor
in the proof of our main result Theorem~\ref{thm:mainresult}
in Section~\ref{sec:existenceanduniqueness}. However
when we construct the signature co-algebra in Section~\ref{sec:signaturecoalgebra}
it is useful to introduce the empty composition $\nu\in\Rb\la\Cb\ra_\ast$
with the property $\nu\ast w=w\ast\nu=w$ for any composition $w$.
\end{remark}
The abstract versions of the signature expansions $[U_n]$ for kernels
in $\Rb\la[\mathbb P_U]\ra$ in Definition~\ref{def:comppolyexpnker}
are the following signature expansions in $\Rb\la\Cb\ra_\ast$.
\begin{definition}[Signature expansions]\label{def:polycomp}
For any $n\in\Nb$ we define the following linear
\emph{signature expansions} $\und{n}\in\Rb\la\Cb\ra_\ast$:
\begin{equation*}
\wdh{\und{n}}\coloneqq\sum_{w\in\Cb(n)}\chi(w)\cdot w.
\end{equation*}
\end{definition}
\begin{remark}[Notation]
Note, in Definition~\ref{def:polycomp} and hereafter, we use $\und{1}$, $\und{2}$, \ldots, $\und{n}$  
and so forth to denote linear signature expansions in $\Rb\la\Cb\ra_\ast$ associated with the kernel
algebra $\Rb\la[\mathbb P_U]\ra$, i.e.\/ as linear expansions in terms of composition monomials
in $\Rb\la\Cb\ra_\ast$. Hitherto, see for example Theorem~\ref{thm:underlinepolyexpansion},
we used $\und{n}$ to denote signature expansions in $\Rb\la\Nb\cup\underline{\Nb}\ra$
associated with the operator algebra $\Rb\la\Pb\cup\Ub\ra$. 
\end{remark}
\begin{remark}[Convention]
In Definition~\ref{def:polycomp} we observe a notation convention we have used
thus far for expansions in $\Rb\la\Cb\ra_\ast$ and $\Rb\la\Cb_{\und{0}}\ra$ and
which we use hereafter.
That is, we distinguish between the real coefficients associated with
combinatorial elements such as $a_1\cdots a_k\in\Cb$ by separating them
by a `$\;\cdot\;$', with the real-valued coefficients residing on the
left and the compositions residing on the right. So for example if we
write $3\cdot221$, this is to interpreted as $3$ copies of the composition element `$221$'. 
\end{remark}
As for the abstract operator algebra in Section~\ref{sec:operatoralgebras},
we can define on $\Rb\la\Cb\ra_\ast$ a derivation operation $\md$ as follows.
\begin{definition}[Derivation in composition kernel algebra]\label{def:derivation}
Given any monomial $w\in\Rb\la\Cb\ra_\ast$ with $w=a_1a_2\cdots a_n$ we define
the \emph{derivation} $\md$ of the composition $w$ to be the linear expansion:   
\begin{equation*}
  \md w=\sum_{k=1}^na_1a_2\cdots a_{k-1}(a_k+1)a_{k+1}\cdots a_n
  +\sum_{k=1}^{n+1}a_1a_2\cdots a_{k-1}\,1\,a_k\cdots a_n.
\end{equation*}
In the second sum, the $k=1$ and $k=n+1$ cases
correspond to a `$1$' being appended, respectively, to the front and
then end of the composition $a_1a_2\cdots a_n$.
\end{definition}
\begin{example}
For any two arbitrary single letter compositions
$a,b\in\Rb\la\Cb\ra_\ast$ we have $\md(ab)=1ab+(a+1)b+a1b+a(b+1)+ab1$.
More generally, for any two arbitrary compositions $ua,bv\in\Rb\la\Cb\ra_\ast$ where we 
distinguish the final and beginning single letters $a$ and $b$ as shown,  
we have $\md(uabv)=(\md u)abv+u\bigl(\md(ab)\bigr)v+uab(\md v)$.
\end{example}
\begin{remark}
This can be viewed as the result of the derivation of the $\und{0}$-padded
version $w_{\und{0}}$ of $w$ using the Leibniz rule. In other words
the definition above is just the abstract encoding of applying
the derivation operation $\pa$ to $[UP_{a_1}UP_{a_2}U\cdots UP_{a_n}U]$.
Note the linear operations $\pa$ and $[\,\cdot\,]$ commute.
\end{remark}
Just as for the abstract operator algebra, for the abstract kernel algebra
here we have the following property.
\begin{lemma}
For any signature expansion $\und{n}\in\Rb\la\Cb\ra_\ast$,
the action of the derivation operation is $\md\colon\wdh{\und{n}}\mapsto\wdh{\und{(n+1)}}$,
i.e.\/ we have $\md(\wdh{\und{n}})=\wdh{\und{(n+1)}}$.
\end{lemma}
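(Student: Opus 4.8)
The plan is to verify the identity coefficient-by-coefficient in the monomial basis of $\Rb\la\Cb\ra_\ast$. Since $\md$ either increments one part or inserts a single $1$, it maps the span of $\Cb(n)$ into the span of $\Cb(n+1)$; so I would fix a target composition $w'=b_1b_2\cdots b_m\in\Cb(n+1)$ and compute the coefficient with which it appears in $\md\bigl(\wdh{\und{n}}\bigr)$, aiming to show this coefficient is exactly $\chi(w')$. Throughout I would use the product form of the signature character implied by Definition~\ref{def:signaturecharacter} together with the recursion established in the proof of Theorem~\ref{thm:underlinepolyexpansion}, namely
\begin{equation*}
\chi(b_1\cdots b_m)=\prod_{i=1}^m\binom{s_i}{b_i},\qquad s_i\coloneqq b_i+b_{i+1}+\cdots+b_m,
\end{equation*}
so that $s_1=n+1$ and $s_m=b_m$.

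Next I would enumerate precisely which terms of $\md\bigl(\chi(w)\,w\bigr)$, as $w$ ranges over $\Cb(n)$, can produce $w'$. There are exactly two mechanisms. A part $b_j\geqslant2$ of $w'$ arises by incrementing position $j$ of the preimage $w^{(j)}\coloneqq b_1\cdots(b_j-1)\cdots b_m\in\Cb(n)$, contributing $\chi(w^{(j)})$; a part $b_j=1$ arises by inserting a $1$ at position $j$ into the preimage $w^{[j]}$ obtained by deleting that part, contributing $\chi(w^{[j]})$. The increment case needs $b_j\geqslant2$ so the preimage remains a genuine composition, while the insertion case needs $b_j=1$; the two cases are disjoint, exhaust all positions, and adjacent $1$'s are tallied once per gap in the definition of $\md$. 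Hence the coefficient of $w'$ equals
\begin{equation*}
C(w')=\sum_{j:\,b_j\geqslant2}\chi\bigl(w^{(j)}\bigr)+\sum_{j:\,b_j=1}\chi\bigl(w^{[j]}\bigr).
\end{equation*}

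The key observation I would exploit is that both kinds of contribution collapse into one uniform expression in the suffix sums. Decrementing part $j$ lowers every $s_i$ with $i\leqslant j$ by one, so $\chi(w^{(j)})=\binom{s_j-1}{b_j-1}\prod_{i<j}\binom{s_i-1}{b_i}\prod_{i>j}\binom{s_i}{b_i}$; deleting a part equal to $1$ at position $j$ lowers each $s_i$ with $i<j$ by one, giving $\chi(w^{[j]})=\prod_{i<j}\binom{s_i-1}{b_i}\prod_{i>j}\binom{s_i}{b_i}$, which is the same expression with middle factor $\binom{s_j-1}{0}=1=\binom{s_j-1}{b_j-1}$ when $b_j=1$. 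Thus every position contributes the common term $\binom{s_j-1}{b_j-1}\prod_{i<j}\binom{s_i-1}{b_i}\prod_{i>j}\binom{s_i}{b_i}$. Setting $T_j\coloneqq\prod_{i\leqslant j}\binom{s_i-1}{b_i}\prod_{i>j}\binom{s_i}{b_i}$, Pascal's rule $\binom{s_j}{b_j}-\binom{s_j-1}{b_j}=\binom{s_j-1}{b_j-1}$ identifies the $j$-th contribution as $T_{j-1}-T_j$, so the sum telescopes to $C(w')=T_0-T_m=\chi(w')-\prod_{i=1}^m\binom{s_i-1}{b_i}$. The trailing product vanishes since its last factor is $\binom{s_m-1}{b_m}=\binom{b_m-1}{b_m}=0$, leaving $C(w')=\chi(w')$, as claimed.

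I expect the main obstacle to be the bookkeeping in the enumeration step: establishing that increments and insertions are the only preimages, that they partition the positions of $w'$ according to $b_j\geqslant2$ versus $b_j=1$, and that repeated or adjacent $1$'s carry the correct multiplicity. Once the two cases are recognised as a single suffix-sum expression, the telescoping and the vanishing boundary term are routine, and I would sanity-check the base relation $\md(\wdh{\und{1}})=\md(1)=2+2\cdot11=\wdh{\und{2}}$. A more conceptual alternative would be to lift the identity to $\Rb\la\Nb\cup\underline{\Nb}\ra$: there $\tsd\,\und{n}=\und{(n+1)}$ holds by definition of $\tsd$, while $\und{n}=\sum_{w\in\Cb(n)}\chi(w)\,w_{\und{0}}$ by Theorem~\ref{thm:underlinepolyexpansion}; applying the Leibniz derivation $\tsd$ to $w_{\und{0}}$ and rewriting each incremented pad $\und{1}=\und{0}1\und{0}$ realises the gap insertions while the incremented parts realise the increments, so $\tsd(w_{\und{0}})$ is exactly the $\und{0}$-padding of $\md w$. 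Transporting back through the stripping isomorphism then yields $\md(\wdh{\und{n}})=\wdh{\und{(n+1)}}$ directly.
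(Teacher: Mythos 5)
Your proof is correct, but it takes a genuinely different route from the paper's. The paper's argument is indirect: it transports the identity back through the isomorphism with the Hankel kernel algebra $\Rb\la[\Pb_U]\ra$, observes that applying $\md$ to the signature expansion corresponds to applying $\pa$ to the kernel expansion of $[U_n]$, and concludes from $\pa[U_n]=[U_{n+1}]$ together with Theorem~\ref{thm:invOpII} that the result is the signature expansion for $\wdh{\und{(n+1)}}$ --- essentially the route you only sketch in your closing paragraph via $\tsd\,\und{n}=\und{(n+1)}$ in $\Rb\la\Nb\cup\underline{\Nb}\ra$ and Theorem~\ref{thm:underlinepolyexpansion}. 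Your main argument instead verifies the identity entirely inside $\Rb\la\Cb\ra_\ast$, coefficient by coefficient: the enumeration of preimages (increment a part $b_j\geqslant2$, or insert a part equal to $1$, with runs of adjacent $1$'s correctly tallied once per gap) is right; the suffix-sum form $\chi(b_1\cdots b_m)=\prod_i\binom{s_i}{b_i}$ is the correct reading of Definition~\ref{def:signaturecharacter} (it checks against $\chi(21)=3$, $\chi(111)=6$, $\chi(221)=30$ from the paper's examples and tables); and the unification of the two cases via $\binom{s_j-1}{0}=\binom{s_j-1}{b_j-1}$ followed by Pascal's rule and the telescoping $C(w')=T_0-T_m=\chi(w')-0$ is clean, with the boundary term killed by $\binom{b_m-1}{b_m}=0$. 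What your version buys is a self-contained combinatorial identity for the signature characters that makes no appeal to the operator-level calculus or to the chain of isomorphisms between the various algebras; what the paper's version buys is brevity, since it reuses Theorem~\ref{thm:invOpII} and the already-established dictionary between $\md$, $\tsd$ and $\pa$. The only caution is that your direct route relies on the recursive form of $\chi$ established in the proof of Theorem~\ref{thm:underlinepolyexpansion} rather than on the displayed formula in Definition~\ref{def:signaturecharacter} as literally printed; you flag this explicitly and the numerical checks confirm it, so the issue is presentational rather than mathematical.
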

\begin{proof}
The derivation $\md(\wdh{\und{n}})$ corresponds to the signature 
expansion in Definition~\ref{def:polycomp} with all the compositions $w\in\Cb(n)$
replaced by $\md (w)$. Using that $\Rb\la\Cb\ra_\ast$ and $\Rb\la[\mathbb P_U]\ra$
are isomorphic, this corresponds to the signature expansion for kernels
in Definition~\ref{def:comppolyexpnker} with each term
$[UP_{a_1}UP_{a_2}U\cdots UP_{a_k}U]$
therein replaced by $\pa[UP_{a_1}UP_{a_2}U\cdots UP_{a_k}U]$.
We know this expansion, by appling the differential operator $\pa$
and the kernel bracket operater $[\,\cdot\,]$ to the result of Theorem~\ref{thm:invOpII},
corresponds to $\pa[U_n]$ which equals $[U_{n+1}]$. The signature kernel expansion for
$[U_{n+1}]$ corresponds to that for $\wdh{\und{(n+1)}}$,
by comparing term for term, their signature expansions in terms of compositions
$w\in\Cb(n+1)$.\qed
\end{proof}

\section{Signature co-algebra}\label{sec:signaturecoalgebra}
Herein we motivate and introduce the signature co-algebra $\Sb$. 
Though not strictly necessary to prove our main result, our impetus for
constructing the co-algebra will be useful in 
helping to elucidate our main result in Section~\ref{sec:existenceanduniqueness}. 
To motivate our construction of the signature co-algebra consider the 
product of two arbitrary signature expansions $\wdh{\und{m}}$ and $\wdh{\und{n}}$
in $\Rb\la\Cb\ra_\ast$:
\begin{equation*}
\wdh{\und{m}}\ast\wdh{\und{n}}=\sum\chi(u)\chi(v)\cdot u\ast v,
\end{equation*}
where the sum is over all $u\in\Cb(m)$ and $v\in\Cb(n)$, 
and the product is naturally the P\"oppe product for compositions.
For the real-valued coefficients $\chi(u)$ and $\chi(v)$ we have
$\chi(u)\chi(v)\equiv\chi(v)\chi(u)$
for any pair of compositions $u,v\in\Cb$; recall the signature character map $\chi$
from Definition~\ref{def:signaturecharacter}.
In the product above, we can expand the products $u\ast v$ into linear
combinations of compositions using Definition~\ref{def:Poppeproductcompositions}
for the P\"oppe product. However we observe that, in the sum on the
right above for the product $\wdh{\und{m}}\ast\wdh{\und{n}}$, once we have expanded
all the terms $u\ast v$ for all $u\in\Cb(m)$ and $v\in\Cb(n)$, 
it might be useful to explicitly preserve the generators of specific compositions that appear
in the sum on the right; and we should extend this idea to higher degree
versions of such products. 
\begin{example}\label{ex:21prod}
Consider the linear combination
$c_{21}\cdot\wdh{\und{2}}\ast\wdh{\und{1}}+c_{12}\cdot\wdh{\und{1}}\ast\wdh{\und{2}}$
for some real constants $c_{21}$ and $c_{12}$.
The product $\wdh{\und{2}}\ast\wdh{\und{1}}$ equals 
\begin{align*}
\wdh{\und{2}}\ast\wdh{\und{1}}=&\;(\chi(2)\cdot2+\chi(11)\cdot 11)\ast(\chi(1)\cdot1)\\
=&\;\chi(2)\chi(1)\cdot2\ast1+\chi(11)\chi(1)\cdot 11\ast1\\
=&\;\chi(2)\chi(1)\cdot(31+22+2\cdot 211)+\chi(11)\chi(1)\cdot(121+112+2\cdot 1111)\\
=&\;\chi(2)\chi(1)\cdot(31+22)
+\chi(2)\chi(1)\chi(\mfq)\cdot(211)+\chi(11)\chi(1)\cdot(121+112)\\
&\;+\chi(11)\chi(1)\chi(\mfq)\cdot(1111),
\intertext{and so we can deduce the product $\wdh{\und{1}}\ast\wdh{\und{2}}$ equals}
\wdh{\und{1}}\ast\wdh{\und{2}}=&\;
\chi(1)\chi(2)\cdot(13+22)
+\chi(1)\chi(2)\chi(\mfq)\cdot(112)+\chi(1)\chi(11)\cdot(121+211)\\
&\;+\chi(1)\chi(11)\chi(\mfq)\cdot(1111).
\end{align*}
Note we distinguish a special element `$\mfq$' to indicate
it results from the third term in the P\"oppe product on compositions, so $\chi(\mfq)=2$.
For this example, think of the symbol `$\chi(\mfq)$' simply as a proxy for the real coefficient `$2$'.
So if we expand the linear combination
$c_{21}\cdot\wdh{\und{2}}\ast\wdh{\und{1}}+c_{12}\cdot\wdh{\und{1}}\ast\wdh{\und{2}}$,
by preserving the signature character coefficients in the manner above,
we know the composition $31$ was the result of the product $\wdh{\und{2}}\ast\wdh{\und{1}}$,
because the coefficient of $31$ is $\chi(2)\chi(1)$.
On the other hand, the composition $211$ which would appear with the coefficient
$c_{21}\chi(2)\chi(1)\chi(\mfq)+c_{12}\chi(1)\chi(11)$ was generated as follows.
It was generated by the third term in the product $2\ast1$ coming from the product
$\wdh{\und{2}}\ast\wdh{\und{1}}$, i.e.\/ where a `$1$' is inserted between the $2$ and the $1$.
That it was this operation that generated this term is indicated by the product
of factors $\chi(2)\chi(1)\chi(\mfq)$,
i.e.\/ in particular that it involves the factor $\chi(\mfq)$.
It was also generated in the product $1\ast(11)$
coming from the product $\wdh{\und{1}}\ast\wdh{\und{2}}$, as indicated by the product of
factors $\chi(1)\chi(11)$. In other words, each composition can be associated
with a finite product of `signatures'.
\end{example}
This example illustrates the usefulness of carefully retaining the coefficients
when expanding linear combinations of monomials consisting of P\"oppe products of
signature expansions. More generally we require the multi-factor version
of the product of signature expansions given above of the form:
\begin{equation*}
\wdh{\und{n}}_1\ast\wdh{\und{n}}_2\ast\cdots\ast\wdh{\und{n}}_k
=\sum\chi(u_1)\chi(u_2)\cdots\chi(u_n)\cdot u_1\ast u_2\ast\cdots\ast u_k,
\end{equation*}
where the sum is over all $u_1\in\Cb(n_1)$, \ldots, $u_k\in\Cb(n_k)$.
Naturally we observe that any composition $w\in\Cb$ can result from
one or more P\"oppe products of the form $u_1\ast u_2\ast\cdots\ast u_k$.
It is now apparent that it is convenient to encode $\chi(u_1)\chi(u_2)\cdots\chi(u_k)$
as $\chi(u_1\smb u_2\smb\cdots\smb u_k)$. Indeed we can assume $\chi$ to act \emph{homomorpically}
on any such tensor product of compositions so that indeed
$\chi(u_1\smb u_2\smb\cdots\smb u_k)\equiv\chi(u_1)\chi(u_2)\cdots\chi(u_k)$.
Thus, for example, we now write:
\begin{align*}
\wdh{\und{2}}\ast\wdh{\und{1}}=&\;\chi(2\smb1)\cdot(31+22)
+\chi(2\smb\mfq\smb1)\cdot(211)+\chi(11\smb1)\cdot(121+112)\\
&\;+\chi(11\smb\mfq\smb1)\cdot(1111).
\end{align*}
The reason for introducing the special element `$\mfq$' is now explained.
When we encode the coefficients in this tensorially manner we need a mechanism
to record those compositions that are generated by the third term
in the P\"oppe product of two appropriate compositions.
We use this encoding explicitly when proving our main results
in Section~\ref{sec:existenceanduniqueness}, always keeping in mind
though that expressions of the form $\chi(u_1\smb u_2\smb\cdots\smb u_k)$,
where one of the tensorially elements may be `$\mfq$', are really to
interpreted as $\chi(u_1)\chi(u_2)\cdots\chi(u_k)$ and $\chi(\mfq)$
just acts as a proxy for the coefficient `$2$'.

The other important idea we extract from our computations above
is that it might be useful, for any given composition $w\in\Cb$, to determine
all the possible pairs of compositions that might have generated it. 
For example, we observe in Example~\ref{ex:21prod} above that
the composition $22$ can be generate through $2\ast1$ and also $1\ast2$.
On the other hand, the composition $112$ can be generated via
$11\ast1$, or via $1\ast2$ through the third term in the P\"oppe product. 
This can be used as a simple check that we have collated terms correctly,
see for example the rows in Table~\ref{table:KdV5}
in Section~\ref{sec:existenceanduniqueness}.
We explore this idea more formally just below. 
However, the notions we have discussed thusfar are all that we require from this section
for our proofs in Sections~\ref{sec:existenceanduniqueness} and \ref{sec:Laxhierarchy}.
At this point, the reader only focused on our main results can safely skip to 
Section~\ref{sec:existenceanduniqueness}.
\begin{remark}[Alternative encodings]
The encoding above involving the special element `$\mfq$'
is one functional encoding for our main results in Section~\ref{sec:existenceanduniqueness}.
However, it does not preserve the number of tensorial components.
It is perhaps more natural for the number of tensorial components to
match the number of P\"oppe products associated with generating
that particular composition. See Definiton~\ref{def:dePoppecoprod}
and in particular Lemma~\ref{lemma:co-prod} for de-P\"oppe co-product
just below to see how this can be achieved.
\end{remark}
The rest of this section is devoted to establishing a new co-algebra
we call the signature co-algebra. The motivating idea, from the notions just discussed,
is to formalise the process of determining all the possible P\"oppe products of compositions
that generate a given composition.
We begin by defining a co-product on $\Cb$, the de-P\"oppe co-product. 
\begin{definition}[De-P\"oppe co-product]\label{def:dePoppecoprod}
For any composition $w\in\Cb$, we define the \emph{de-P\"oppe co-product}
$\Delta(w)$ of $w$ to be   
\begin{equation*}
\Delta(w)=\sum_{u\otimes v\in\Cb^{\otimes2}}\la u\ast v, w\ra_{\Cb}\,u\otimes v,
\end{equation*}
where $\la\,\cdot\,,\,\cdot\,\ra_{\Cb}$
is the inner product on $\Cb$ defined for any $u,v\in\Cb$ by
\begin{equation*}
\la u, v\ra_{\Cb}=\begin{cases} 1 &\quad\text{if}~u=v,\\ 0 &\quad\text{if}~u\neq v,\end{cases}.
\end{equation*}
\end{definition}
We give the definition of the signature co-algebra first, and then prove that the
signature co-algebra is indeed a co-algebra second.
\begin{definition}[Signature co-algebra]\label{def:signaturecoalgebra}
We define the \emph{signature co-algebra} $\Sb$ as the co-algebra
$\Rb\la\Cb^\otimes\ra$ over $\Rb$ constructed from all possible
monomials $s_1\otimes s_2\otimes\cdots\otimes s_n$ chosen from $\Cb^{\otimes n}$
for all $n\in\Nb$. Here $\Cb^\otimes$ denotes $\cup_{n\geqslant0}\Cb^{\otimes n}$.
We define the co-product $\Delta\colon\Sb\to\Sb\otimes\Sb$ on $\Sb$ to be the
\emph{de-P\"oppe co-product} $\Delta$ in Definition~\ref{def:dePoppecoprod}.
We see from Lemma~\ref{lemma:co-prod} just below, $\Delta$ tensorally decomposes
any composition $w$ into the sum of all possible composition pairs that
produce $w$ via the P\"oppe product `$\ast$', including the 
empty composition $\nu$; see Remark~\ref{rmk:unitalemptycomp}.
The co-unit on $\Sb$ which we denote $\varepsilon\colon\Sb\to\Rb$ is given for any $w\in\Cb$ by 
\begin{equation*}
\varepsilon(w)\coloneqq\begin{cases}1\, &\quad\text{if}~w=\nu,\\ 0\, &\quad\text{if}~w\neq\nu,\end{cases}.
\end{equation*}
\end{definition}
We can derive an explicit formula for the de-P\"oppe product as follows.
To achieve this, setting $\boldsymbol 0\coloneqq\md^{-1}(1)$,
the following map $\theta\colon\Cb\cup\{\boldsymbol 0\}\to\Cb$
proves useful. For any composition $w\in\Cb$, we define:
\begin{equation*}
\theta\colon w\mapsto w;\quad\theta\colon \boldsymbol 0w\mapsto w;
\quad\theta\colon w\boldsymbol 0\mapsto w;
\quad\theta\colon\nu\mapsto\nu
\quad\text{and}\quad \theta\colon\boldsymbol 0\mapsto0\cdot\nu.
\end{equation*}
The coefficient in the image in the final case is $0\in\Rb$, so ultimately the term is zero.
\begin{lemma}[De-P\"oppe co-product formula]\label{lemma:co-prod}
The co-product $\Delta$ can be characterised as follows.
For any composition $a_1a_2\cdots a_n$ we have $\Delta(a_1a_2\cdots a_n)$ is given by:
\begin{multline*}
(\theta\otimes\theta)\circ
\Biggl(\sum_{k=0}^{n}\bigl(a_1\cdots a_{k-1}(a_k-1)\otimes a_{k+1}\cdots a_n
+a_1\cdots a_{k}\otimes(a_{k+1}-1)a_{k+2}\cdots a_n\bigr)\Biggr),
\end{multline*}
where the $k=0$ and $k=n$ terms are $\nu\otimes a_1a_2\cdots a_n$
and $a_1a_2\cdots a_n\otimes\nu$, respectively.
\end{lemma}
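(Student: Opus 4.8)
The plan is to prove Lemma~\ref{lemma:co-prod} by unwinding the definition of $\Delta$ in Definition~\ref{def:dePoppecoprod}, namely
\begin{equation*}
\Delta(w)=\sum_{u\otimes v\in\Cb^{\otimes2}}\la u\ast v,w\ra_{\Cb}\,u\otimes v,
\end{equation*}
and showing that the stated closed-form expression captures exactly those pairs $u\otimes v$ for which $w$ appears in $u\ast v$, each with the correct multiplicity. Since $\la u\ast v,w\ra_{\Cb}$ counts (with sign/multiplicity) the number of times the composition $w=a_1a_2\cdots a_n$ is produced by the P\"oppe product $u\ast v$, the whole task reduces to a bookkeeping problem: classify all factorisations of $w$ through the product `$\ast$'. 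First I would recall Definition~\ref{def:Poppeproductcompositions}: for $ua$ and $bv$, the product $(ua)\ast(bv)=u(a+1)bv+ua(b+1)v+2\cdot(ua1bv)$. So $w$ can arise from a pair $u\otimes v$ in exactly three mutually exclusive ways, corresponding to the three terms.

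Next I would invert each of the three generation mechanisms, which is the combinatorial heart of the proof. The first term $u(a+1)bv$ says $w$ was formed by \emph{incrementing} the last letter of the left factor: if we split $w=a_1\cdots a_n$ at position $k$ so that the letter $a_k$ is the incremented one, then the generating pair must have been $a_1\cdots a_{k-1}(a_k-1)\otimes a_{k+1}\cdots a_n$. This is precisely the first summand inside the bracket. The second term $ua(b+1)v$ says the \emph{first} letter of the right factor was incremented; reversing it at split-point $k$ gives the pair $a_1\cdots a_k\otimes(a_{k+1}-1)a_{k+2}\cdots a_n$, which is the second summand. The third term $2\cdot(ua1bv)$ inserts a fresh `$1$' between the factors; this is the role of the special element $\mfq$ with $\chi(\mfq)=2$, and I would argue it is accounted for by the cases where the split lands on a letter equal to $1$ — after decrementing, $\theta$ collapses the boundary $\boldsymbol 0=\md^{-1}(1)$ via $\theta\colon\boldsymbol 0 w\mapsto w$ and $\theta\colon w\boldsymbol 0\mapsto w$, and the two summands coincide there to produce the factor of $2$. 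The map $\theta$ is exactly the device that reconciles the decrement-based bookkeeping with the insertion mechanism and discards the spurious $\boldsymbol 0$ (via $\theta\colon\boldsymbol 0\mapsto 0\cdot\nu$).

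I would then verify the \textbf{boundary cases} explicitly: at $k=0$ the first summand would involve $(a_0-1)$ which is undefined, so the convention fixes it as $\nu\otimes a_1a_2\cdots a_n$, encoding that $w$ is trivially generated with the empty composition $\nu$ on the left (using $\nu\ast w=w$ from Remark~\ref{rmk:unitalemptycomp}); symmetrically $k=n$ gives $a_1a_2\cdots a_n\otimes\nu$. These are legitimate factorisations once $\Sb$ is taken unital. Finally I would check consistency against the worked data: the decompositions of $22$, $112$ and $211$ computed in Example~\ref{ex:21prod} should reproduce exactly the pairs predicted by the formula (e.g.\ $22$ from $2\ast1$ and $1\ast2$; $112$ from $11\ast1$ and from $1\ast2$ through the third, $\mfq$-term), which serves as both a sanity check and an illustration.

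\textbf{The main obstacle} I expect is handling the third, multiplicity-$2$ term cleanly: one must confirm that the two decrement-summands in the formula, after applying $\theta$, genuinely combine to give the coefficient $2$ exactly on the insertion-generated compositions and do not double-count or mis-count elsewhere. In particular I would need to check carefully that when a letter $a_k=1$ is decremented to $0=\boldsymbol 0$ and then absorbed by $\theta$, this does not accidentally merge with an unrelated term, and that the $\theta\colon\boldsymbol 0\mapsto 0\cdot\nu$ rule correctly annihilates the degenerate case where the entire factor collapses. Getting the interplay of $\theta$, $\boldsymbol 0=\md^{-1}(1)$, and the boundary conventions to align so that every pair $u\otimes v$ appears with precisely the multiplicity $\la u\ast v,w\ra_{\Cb}$ is where the real care is required; the rest is a routine case-split over the three terms of the P\"oppe product.
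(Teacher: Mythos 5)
Your proposal is correct and follows essentially the same route as the paper's proof: both unwind Definition~\ref{def:dePoppecoprod}, invert the three terms of the P\"oppe product to enumerate the generating pairs at each split point, and observe that when a letter equals $1$ the two decrement-summands collapse under $\theta\otimes\theta$ to the same pair, yielding the multiplicity $2$ of the insertion term, while $\theta\colon\boldsymbol 0\mapsto 0\cdot\nu$ kills the spurious boundary cases. Your version is slightly more explicit than the paper's about why no pair is counted under more than one mechanism, but the substance is identical.
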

\begin{proof}
For any composition $a_1a_2\cdots a_n\in\Cb$, consider the argument of the map
$\theta\otimes\theta$ given in the statement of the theorem. This is equivalent to
the re-written form:
\begin{multline*}
\nu\otimes a_1a_2\cdots a_n+(a_1-1)\otimes a_2\cdots a_n
+a_1\otimes (a_2-1)a_3\cdots a_n\\
+a_1(a_2-1)\otimes a_3\cdots a_n
+a_1a_2\otimes(a_3-1)a_4\cdots a_n+\cdots
+a_1\cdots (a_{n-1}-1)\otimes a_n\\
+a_1\cdots a_{n-1}\otimes (a_n-1)+a_1a_2\cdots a_n\otimes\nu.
\end{multline*}
The first and last terms account for the possibility $w=a_1a_2\cdots a_n$ 
could be generated by $\nu\ast w$ and $w\ast\nu$, respectively.
In general we observe that for $k=1,\ldots,n-1$
the term $a_1\cdots a_ka_{k+1}\cdots a_n$ could be generated by either
$a_1\cdots a_{k-1}(a_k-1)\ast a_{k+1}\cdots a_n$ or
$a_1\cdots a_{k}\ast(a_{k+1}-1)a_{k+2}\cdots a_n$, explaining the
forms shown above. However for these terms care must be taken when
any one of $a_1$ through to $a_k$ equals `$1$'.
This is where the map $\theta\otimes\theta$ comes into play.
In such cases we symbolically have $\boldsymbol 0=a_k-1=\md^{-1}(a_k)\not\in\Cb$.
If $a_1=1$ or $a_n=1$, then the second and penultimate terms in the re-written
form above should not be present. If $a_2=1$ then the third and fourth terms
in the re-written form above should each collapse to $a_1\otimes a_3\cdots a_n$.
We apply a similar procedure if $a_3=1$ or $a_4=1$ and so forth up until the
case $a_{n-1}=1$. The action of $\theta\otimes\theta$ 
precisely enforces these collapses in the special cases mentioned.\qed
\end{proof}
\begin{example}
Consider the following examples illustrating 
the co-product $\Delta$ applied to some compositions.
Naturally for any integer $n\in\Cb$ we have $\Delta(n)=\emptyset$,
while for other compositions we have, for example,
$\Delta(21)=\nu\otimes21+1\otimes 1+21\otimes\nu$
and also $\Delta(32)=\nu\otimes32+2\otimes2+3\otimes 1+32\otimes\nu$, while:
\begin{align*}
\Delta(111)&=\nu\otimes111+1\otimes1+111\otimes\nu,\\
\Delta(121)&=\nu\otimes121+1\otimes 11+11\otimes1+121\otimes\nu,\\
\Delta(112)&=\nu\otimes112+1\otimes2+11\otimes 1+112\otimes\nu,\\
\Delta(241)&=\nu\otimes241+1\otimes 41+2\otimes31+23\otimes1+241\otimes\nu.
\end{align*}
\end{example}
The following result establishes that the signature co-algebra is indeed
a co-algebra; the proof is given in Appendix~\ref{sec:co-algebraproof}.

\begin{theorem}[Co-algebra structure]\label{thm:sigcoalg}
The signature co-algebra $\Sb$ is a co-algebra, in particular 
the co-unit $\varepsilon$ and co-product $\Delta$
satisfy the following defining axioms on $\Sb$:
(i) $(\id\otimes\varepsilon)\circ\Delta=(\varepsilon\otimes\id)\circ\Delta$ and
(ii) $(\id\otimes\Delta)\circ\Delta=(\Delta\otimes\id)\circ\Delta$.
\end{theorem}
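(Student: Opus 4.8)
The plan is to verify the two co-algebra axioms directly from the explicit formula for the de-P\"oppe co-product established in Lemma~\ref{lemma:co-prod}. Axiom~(i), the co-unit condition, should be the easier of the two. Using the co-unit $\varepsilon$ which vanishes on all compositions except the empty composition $\nu$, I would apply $(\id\otimes\varepsilon)\circ\Delta$ and $(\varepsilon\otimes\id)\circ\Delta$ to an arbitrary composition $w=a_1a_2\cdots a_n$. Under $\Delta(w)$, the only tensor monomial whose right factor is $\nu$ is the last term $a_1a_2\cdots a_n\otimes\nu$, so $(\id\otimes\varepsilon)\circ\Delta(w)=w$; symmetrically, the only monomial with left factor $\nu$ is $\nu\otimes a_1a_2\cdots a_n$, giving $(\varepsilon\otimes\id)\circ\Delta(w)=w$. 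The one subtlety is to confirm, using the formula of Lemma~\ref{lemma:co-prod}, that no \emph{other} term degenerates to having $\nu$ in the relevant slot once the map $\theta\otimes\theta$ has acted; this requires checking that the collapses induced by $\theta$ (when some $a_k=1$) do not spuriously produce additional $\nu$-factors that $\varepsilon$ would pick up.

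The real work is axiom~(ii), co-associativity, $(\id\otimes\Delta)\circ\Delta=(\Delta\otimes\id)\circ\Delta$. My approach would be conceptual rather than brute-force: I would interpret each side as enumerating the ways of writing $w$ as a threefold P\"oppe product. Concretely, $\Delta$ records all pairs $(u,v)$ with $\la u\ast v,w\ra_{\Cb}\neq 0$; applying $\Delta$ again to one of the factors records all ways of splitting that factor in turn. So $(\id\otimes\Delta)\circ\Delta(w)$ enumerates triples $(u_1,u_2,u_3)$ for which there is a two-stage splitting $w\leftarrow u_1\ast(\text{something})$ with the something further split as $u_2\ast u_3$, and $(\Delta\otimes\id)\circ\Delta(w)$ enumerates the triples arising from splitting the left factor first. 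The claim is that both count the same triples with the same coefficients, which at the level of the P\"oppe product amounts to an \emph{associativity} statement: the multiset of triples $(u_1,u_2,u_3)$ with $\la u_1\ast u_2\ast u_3,w\ra_{\Cb}$ computed via $\la(u_1\ast u_2)\ast u_3,w\ra_{\Cb}$ agrees with that computed via $\la u_1\ast(u_2\ast u_3),w\ra_{\Cb}$.

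Hence the key lemma underlying co-associativity is that the P\"oppe product on $\Rb\la\Cb\ra_\ast$ is itself \emph{associative}, together with the compatibility $\la u\ast v,w\ra_{\Cb}=\sum_{u'\otimes v'}\la u,u'\ra_{\Cb}\la v,v'\ra_{\Cb}\cdots$ that makes $\Delta$ the adjoint (transpose) of the product with respect to the inner product $\la\,\cdot\,,\,\cdot\,\ra_{\Cb}$. The cleanest route I would take is to show $\Delta$ is literally the adjoint of $\ast$ under the orthonormal-composition inner product, so that co-associativity of $\Delta$ is the exact transpose of associativity of $\ast$; since associativity of the P\"oppe product (the isomorphism with the Hankel kernel algebra $\Rb\la[\mathbb P_U]\ra$ in Lemma~\ref{lemma:Poppeprodkernels}, whose product is genuine operator composition restricted through the kernel bracket, is associative) can be invoked, co-associativity follows formally. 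I would therefore first record the adjointness identity $\la u\ast v,w\ra_{\Cb}=\la u\otimes v,\Delta(w)\ra_{\Cb^{\otimes2}}$, then argue that transposing associativity of $\ast$ yields exactly axiom~(ii).

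The main obstacle I anticipate is the bookkeeping in the adjointness identity in the presence of the factor-of-two term $2\cdot(ua1bv)$ in Definition~\ref{def:Poppeproductcompositions}: a single composition $w$ can be produced by a given pair $(u,v)$ in more than one way (e.g.\ both as a decrement of a letter and as the insertion of a $1$), so the coefficient $\la u\ast v,w\ra_{\Cb}$ can exceed $1$, and I must make sure the formula of Lemma~\ref{lemma:co-prod}, with its $\theta\otimes\theta$ collapses, reproduces these multiplicities faithfully on \emph{both} sides of the associativity transpose. Verifying that the degenerate cases (letters equal to $1$, and the forced collapses enacted by $\theta$) are handled consistently under iteration of $\Delta$ is where the argument could become delicate, and where a small worked example such as $\Delta^{(2)}$ of a composition like $121$ would be worth checking to guard against an off-by-one or a miscounted $\mfq$-type multiplicity.
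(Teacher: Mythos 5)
Your treatment of axiom (i) coincides with the paper's: both identify the terms $\nu\otimes w$ and $w\otimes\nu$ as the only survivors under $\varepsilon\otimes\id$ and $\id\otimes\varepsilon$ respectively (and your extra check that the $\theta$-collapses never spuriously create a $\nu$-factor is a sensible, if minor, addition). For axiom (ii) you take a genuinely different route. The paper argues combinatorially: it first verifies co-associativity of the ordinary deconcatenation co-product $\delta$ by showing that ``left-to-right'' and ``right-to-left'' secondary deconcatenations enumerate the same triples, and then transfers this argument to $\Delta$ by interpreting $(\id\otimes\Delta)\circ\Delta$ and $(\Delta\otimes\id)\circ\Delta$ as two enumerations of all triples $u\otimes v\otimes z$ producing $w$ via $u\ast v\ast z$, handling the degenerate $a_k=1$ cases with a post-application of $\theta\otimes\theta\otimes\theta$. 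You instead observe that Definition~\ref{def:dePoppecoprod} makes $\Delta$ \emph{by construction} the adjoint of $\ast$ with respect to the orthonormal-composition pairing, so that the coefficient of $u\otimes v_1\otimes v_2$ in $(\id\otimes\Delta)\circ\Delta(w)$ is $\la u\ast(v_1\ast v_2),w\ra_{\Cb}$ while in $(\Delta\otimes\id)\circ\Delta(w)$ it is $\la(u\ast v_1)\ast v_2,w\ra_{\Cb}$, and co-associativity reduces to associativity of the P\"oppe product. This is cleaner: it bypasses the $\theta$-bookkeeping and the multiplicity concerns entirely, since one never needs the explicit formula of Lemma~\ref{lemma:co-prod}, only the intrinsic definition. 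The one ingredient you must supply is associativity of $\ast$ itself; invoking the isomorphism with the Hankel kernel algebra (whose product is genuine matrix multiplication of kernel values) is legitimate, and a short direct computation on single letters, e.g.\ comparing $(a\ast b)\ast c$ with $a\ast(b\ast c)$, confirms it as well. Note that your adjointness identity $\la u\ast v,w\ra_{\Cb}=\la u\otimes v,\Delta(w)\ra_{\Cb^{\otimes2}}$ essentially settles, in the affirmative, the question the paper leaves open in item (iv) of its conclusion, so your route buys something beyond the theorem itself; what the paper's route buys is independence from any associativity claim and an explicit picture of which triples arise.
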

\begin{remark}[The signature character map as a homomorphism]
With a slight abuse of notation we denote $\chi(\Sb)\cong\Rb$
as the space of corresponding signature character values.
In this context, we suppose the character map $\chi$ is
a homomorphism so that $\chi(u\otimes v)=\chi(u)\chi(v)$ for any pair $u,v\in\Sb$.
\end{remark}

\section{Hierarchy existence and uniqueness}\label{sec:existenceanduniqueness}
In the P\"oppe algebra $\Rb\la\Cb\ra_\ast$ we introduced in Definition~\ref{def:polycomp}
the linear signature expansions $\wdh{\und{n}}\in\Rb\la\Cb\ra_\ast$ given by:
\begin{equation*}
\wdh{\und{n}}\coloneqq\sum_{w\in\Cb(n)}\chi(w)\cdot w.
\end{equation*}
With the formal set up constructed in Sections~\ref{sec:kernelalgebras}
and \ref{sec:signaturecoalgebra}, we now tackle the following challenge.
We have defined the signature expansions $\wdh{\und{n}}$ in $\Rb\la\Cb\ra_\ast$ as
linear expansions in monomials of the form $w\in\Rb\la\Cb\ra_\ast$.
The question is, can we find polynomial expansions for the
basic monomial single letters $n\in\Rb\la\Cb\ra_\ast$, 
in terms of monomials of the form $\wdh{\und{a}}_1\ast\wdh{\und{a}}_2\ast\cdots\ast\wdh{\und{a}}_k$?
A positive answer guarantees the \emph{integrability} of the corresponding equation
of order $n$ in the non-commutative potential Korteweg--de Vries hierarchy.
\begin{example}[Korteweg--de Vries integrability]\label{ex:kdV}
From their definition, the first three signature expansions are $\wdh{\und{1}}=\chi(1)\cdot 1$,
$\wdh{\und{2}}=\chi(2)\cdot 2+\chi(11)\cdot 11$ and then
\begin{equation*}
\wdh{\und{3}}=\chi(3)\cdot 3+\chi(21)\cdot 21+\chi(12)\cdot12+\chi(111)\cdot111
~\Leftrightarrow~
3=\wdh{\und{3}}-3\cdot(21+12)-6\cdot111.
\end{equation*}
However we also observe that
$\wdh{\und{1}}\ast\wdh{\und{1}}=\bigl(\chi(1)\cdot 1\bigr)\ast\bigl(\chi(1)\cdot 1\bigr)=\chi(1)\chi(1)\cdot(1\ast1)$
and so
\begin{equation*}
\wdh{\und{1}}\ast\wdh{\und{1}}
=\chi(1)\chi(1)\cdot\bigl(21+12+\chi(\mfq)\cdot(111)\bigr)
=21+12+\chi(\mfq)\cdot 111.
\end{equation*}
Substituting the latter result into the former we observe
\begin{equation*}
3=\wdh{\und{3}}-c_{11}\cdot\bigl(\wdh{\und{1}}\ast\wdh{\und{1}}\bigr),
\end{equation*}
where $c_{11}=3$. This demonstrates that indeed $3\in\Rb\la\Cb\ra_\ast$ has a polynomial
expansion in terms of the signature expansions. If we translate this
statement back into the Hankel kernel algebra $\Rb\la[\mathbb P_U]\ra$
it becomes $[UP_3U]=[U_3]-3[U_1][U_1]$. This statement establishes that
the potential Korteweg--de Vries equations as an integrable Grassmannian flow.
\end{example}
\begin{example}[Korteweg--de Vries integrability: fifth order]\label{ex:fifthorder}
As in the last example, the question we need to answer is,
can the single monomial letter $5\in\Rb\la\Cb\ra_\ast$ be expressed as a
linear combination of monomials of the form $\wdh{\und{a}}_1\ast\wdh{\und{a}}_2\ast\cdots\ast\wdh{\und{a}}_k$?
A proposed polynomial of such signature expansions is
$\pi_5=\pi_5\bigl(\wdh{\und{1}},\wdh{\und{2}},\wdh{\und{3}},\wdh{\und{5}}\bigr)$, where 
\begin{equation*}
\pi_5\coloneqq c_5\cdot\wdh{\und{5}}+c_{31}\cdot\wdh{\und{3}}\ast\wdh{\und{1}}
+c_{22}\cdot\wdh{\und{2}}\ast\wdh{\und{2}}+c_{13}\cdot\wdh{\und{1}}\ast\wdh{\und{3}}
+c_{111}\cdot\wdh{\und{1}}\ast\wdh{\und{1}}\ast\wdh{\und{1}},
\end{equation*}
where $c_5$, $c_{31}$, $c_{22}$, $c_{13}$ and $c_{111}$ are constants.
So the question is, in $\Rb\la\Cb\ra_\ast$, can we find find values
for these coefficients $c_5$, $c_{31}$, $c_{22}$, $c_{13}$ and $c_{111}$ such that $5=\pi_5$?
We observe that, given the leading term in $\pi_5$, we should anticipate that $c_5=1$. 
The remaining combination of the quadratic terms
$\wdh{\und{3}}\ast\wdh{\und{1}}$, $\wdh{\und{2}}\ast\wdh{\und{2}}$, $\wdh{\und{1}}\ast\wdh{\und{3}}$
and the cubic term $\wdh{\und{1}}\ast\wdh{\und{1}}\ast\wdh{\und{1}}$ are chosen because these are the only
monomials of this form which when we substitute for the corresponding signature expansions
for $\wdh{\und{1}}$, $\wdh{\und{2}}$ and $\wdh{\und{3}}$ into them, and compute the P\"oppe product
of the resulting expansions, we will get compositions in $\Cb(5)$. And the compositions of $\Cb(5)$
represent the basis in $\Rb\la\Cb\ra_\ast$ we should use, guided by the linear signature expansion
for $\und{5}$ which we can write in the form $5=\und{5}-\text{`lower multi-part compositions'}$.
For example, the quadratic term $\wdh{\und{3}}\ast\wdh{\und{1}}$ equals
\begin{align*}
\wdh{\und{3}}\ast\wdh{\und{1}}
=&\;\bigl(\chi(3)\cdot3+\chi(21)\cdot21+\chi(12)\cdot12+\chi(111)\cdot111\bigr)\ast\bigl(\chi(1)\cdot1\bigr)\\
=&\;\chi(3\smb1)\cdot3\ast1+\chi(21\smb1)\cdot21\ast1+\chi(12\smb1)\cdot12\ast1+\chi(111\smb1)\cdot111\ast1\\
=&\;\chi(3\smb1)\cdot\bigl(41+32+\chi(\mfq)\cdot311\bigr)
+\chi(21\smb1)\cdot\bigl(221+212+\chi(\mfq)\cdot2111\bigr)\\
&\;+\chi(12\smb1)\cdot\bigl(131+122+\chi(\mfq)\cdot1211\bigr)+\chi(111\smb1)\cdot\bigl(1121+1112\\
&\;+\chi(\mfq)\cdot11111\bigr)\\
=&\;\chi(3\smb1)\cdot(41+32)+\chi(3\smb\mfq\smb1)\cdot311+\chi(21\smb1)\cdot(221+212)\\
&\;+\chi(21\smb\mfq\smb1)\cdot2111+\chi(12\smb1)\cdot(131+122)+\chi(12\smb\mfq\smb1)\cdot1211\\
&\;+\chi(111\smb1)\cdot(1121+1112)+\chi(111\smb\mfq\smb1)\cdot11111.
\end{align*}
The P\"oppe product determines the quadratic and cubic terms in $\pi_5$. The quadratic terms 
contain the P\"oppe product of the natural numbers shown and all their compositions,
to which for the first term in the resulting product a `$1$' is added to the last letter
on the left factor, while for second term a `$1$' is added to the first letter of the second
factor, and for the third term, a letter $1$ is squeezed between the two factors.
For the cubic term, this process happens twice, for both P\"oppe products present,
and $\wdh{\und{1}}\ast\wdh{\und{1}}\ast\wdh{\und{1}}$ is the only monomial possible as other cubic combinations
will generate higher compositions that do not appear on the right in the expression for
$5\in\Rb\la\Cb\ra_\ast$ shown at the beginning of this example. In Table~\ref{table:KdV5}
we show the coefficients of all the compositions appearing in linear, quadratic and cubic
monomials of signature expansions that are in $\pi_5$.
Hence, referring to Table~\ref{table:KdV5}, the first column lists all the
compositions of $5$. The second column shows the signature coefficients
of the signature expansion corresponding to $\wdh{\und{5}}$. The third column
shows the signature coefficients of the signature expansion of the polynoimal $\wdh{\und{3}}\ast\wdh{\und{1}}$,
once all P\"oppe products have been expanded. The fourth and fifth colums show 
the corresponding signature coefficients in the respective
signature expansions of the polynomials $\wdh{\und{2}}\ast\wdh{\und{2}}$ and $\wdh{\und{1}}\ast\wdh{\und{3}}$.
The sixth column shows the signature coefficients of the signature expansion of
$\wdh{\und{1}}\ast\wdh{\und{1}}\ast\wdh{\und{1}}$. The seventh (and last) column shows the right-hand side
in the equation we are considering, namely $\pi_5=5$. So the question is,
can we find the vector of coefficients $C=(c_5,c_{31},c_{22},c_{13},c_{111})^{\mathrm{T}}$
such that $\pi_5=5$, i.e.\/ such that 
\begin{equation*}
c_5\cdot\wdh{\und{5}}+c_{31}\cdot\wdh{\und{3}}\ast\wdh{\und{1}}
+c_{22}\cdot\wdh{\und{2}}\ast\wdh{\und{2}}+c_{13}\cdot\wdh{\und{1}}\ast\wdh{\und{3}}
+c_{111}\cdot\wdh{\und{1}}\ast\wdh{\und{1}}\ast\wdh{\und{1}}=5\quad\Leftrightarrow\quad AC=B,
\end{equation*}
where $B$ is the $16$ component vector $B=(\chi(5),0,\ldots,0)^{\mathrm{T}}$?
The linear algebraic equation on the right results from equating the coefficients of all
$16$ compositions appearing in the equation on the left.
In the linear algebraic equation, the columns of the $16\times5$ matrix $A$ are
the second through sixth columns of the $\chi$-evaluated signatures shown in Table~\ref{table:KdV5}.
For example the second column of $A$ is
\begin{equation*}
  \bigl(\chi(3\smb1),\chi(3\smb1),0,0,\chi(3\smb\mfq\smb1),\chi(21\smb1),
  \chi(21\smb1),\chi(12\smb1),\ldots\bigr)^{\mathrm{T}}.
\end{equation*}
In the linear equation, columns one through five of $A$ are naturally
associated with the respective coefficients in rows one to five of $C$.

\begin{table}
\caption{Non-zero signature coefficients appearing
in the expansion of the \emph{P\"oppe polynomial} $\pi_5$; see Example~\ref{ex:fifthorder}.
The coefficients are the $\chi$-images of the signature entries shown.
Each column shows the factor contributions to the real coefficients of the compositions
of $\Cb(5)$ shown in the very left column, for each of the monomials in $\pi_5$ shown
across the top row. The final column represents the right-hand side of the equation $\pi_5=5$.}
\label{table:KdV5}
\begin{center}
\begin{tabular}{|l|ccccc|c|}
\hline
$\phantom{\biggl|}\Cb$&$\wdh{\und{5}}$&$\wdh{\und{3}}\ast\wdh{\und{1}}$
&$\wdh{\und{2}}\ast\wdh{\und{2}}$&$\wdh{\und{1}}\ast\wdh{\und{3}}$
&$\wdh{\und{1}}\ast\wdh{\und{1}}\ast\wdh{\und{1}}$&$B$\\
\hline
5 & 5 & &&&&5\\
\hline
41 & 41 & $3\smb1$ &         &         && 0\\
32 & 32 & $3\smb1$ & $2\smb2$ &         && 0\\
23 & 23 &         & $2\smb2$ & $1\smb3$ && 0\\
14 & 14 &         &         & $1\smb3$ && 0\\
\hline
311 & 311 & $3\smb\mfq\smb1$ & $2\smb11$        &                 &             & 0\\
221 & 221 & $21\smb1$        & $2\smb11$        & $1\smb21$        & $1\smb1\smb1$ & 0\\
212 & 212 & $21\smb1$        & $2\smb\mfq\smb2$ & $1\smb12$        & $1\smb1\smb1$ & 0\\
131 & 131 & $12\smb1$        &                 & $1\smb21$        & $1\smb1\smb1$ & 0\\
122 & 122 & $12\smb1$        & $11\smb2$        & $1\smb12$        & $1\smb1\smb1$ & 0\\
113 & 113 &                 & $11\smb2$        & $1\smb\mfq\smb3$ &             & 0\\
\hline
2111 & 2111 & $21\smb\mfq\smb1$ & $2\smb\mfq\smb11$ & $1\smb111$        & $1\smb1\smb\mfq\smb1$ & 0\\
1211 & 1211 & $12\smb\mfq\smb1$ & $11\smb11$        & $1\smb111$        & $1\smb1\smb\mfq\smb1$ & 0\\
1121 & 1121 & $111\smb1$        & $11\smb11$        & $1\smb\mfq\smb21$ & $1\smb\mfq\smb1\smb1$ & 0\\
1112 & 1112 & $111\smb1$        & $11\smb\mfq\smb2$ & $1\smb\mfq\smb12$ & $1\smb\mfq\smb1\smb1$ & 0\\
\hline
11111& 11111& $111\smb\mfq\smb1$& $11\smb\mfq\smb11$& $1\smb\mfq\smb111$& $1\smb\mfq\smb1\smb\mfq\smb1$& 0\\\hline
\end{tabular}
\end{center}
\end{table}

Let us now outline the stategy we use to solve the linear algebraic equation $AC=\bb$.
This linear system is overdetermined, there are $16$ equations and the components of $C$
represent the $5$ unknowns. Looking at Table~\ref{table:KdV5}, it makes sense to swap rows $6$
and $7$. Let us call the new coefficient matrix incorporating this swap $A'$. 
Note the vector $\bb$, whose only non-zero component is its first entry, is unaffected
by the row swap suggested.
For the moment, we ignore the first equation corresponding to the composition $5\in\Rb\la\Cb\ra_\ast$,
i.e.\/ we ignore the first row in the augmented matrix $[A'\,\bb]$.
We focus on the remaining system of $15$ homogeneous equations represented by the augmented matrix
$[A'(2,\ldots,16)\,O]$, where
$A'(2,\ldots,16)$ represents the submatrix of $A'$ only containing the rows $2$ through to $16$
and `$O$' is a column vector of $15$ zeros.
For any matrix, its row rank equals its column rank. Further, the rank of $A'(2,\ldots,16)$
and $[A'(2,\ldots,16)\,O]$ are the same since the last column of the latter is $O$.
Hence for the linear system of homogeneous equations represented by the augmented matrix
$[A'(2,\ldots,16)\,O]$, all but $5$ equations are redundant. Indeed let us identify,
a-priori, $5$ natural equations in $[A'(2,\ldots,16)\,O]$, namely those
represented by the first five rows of $[A'(2,\ldots,16)\,O]$; these correspond to
the rows in the original matrix $A$ identified by the compositions $41$, $32$, $23$, $14$ and $221$.
Since the leading diagonal entries in $A'(2,\ldots,16)$, i.e.\/ the diagonal entries in $A'(2,\ldots,6)$,
are all non-zero they can be used as pivots in a Gaussian elimination procedure to render the submatrix
$A'(7,\ldots,16)$ to the zero submatrix. In other words we can use Gaussian elimination on rows
$7$ through $16$ to ensure $A'(7,\ldots,16)=O_{10\times10}$.
The remaining homogeneous linear system represented by
the augmented matrix $[A'(2,\ldots,6)\,O]$ is given by 
\begin{equation*}
\begin{pmatrix}
  \chi(41)  & \chi(3\smb1)  & 0            & 0            & 0\\
  \chi(32)  & \chi(3\smb1)  & \chi(2\smb2)  & 0            & 0\\
  \chi(23)  & 0            & \chi(2\smb2)  & \chi(1\smb3)  & 0\\
  \chi(14)  & 0            & 0            & \chi(1\smb3)  & 0\\
  \chi(221) & \chi(21\smb1) & \chi(2\smb11) & \chi(1\smb21) & \chi(1\smb1\smb1)
\end{pmatrix}
\begin{pmatrix}c_5\\c_{31}\\c_{22}\\c_{31}\\c_{111}\end{pmatrix}
=\begin{pmatrix}0\\0\\0\\0\\0\end{pmatrix}.
\end{equation*}
Using Definition~\ref{def:signaturecharacter} and
that $\chi$ is a homomorphic map we have $\chi(41)=\chi(14)=5$,
$\chi(32)=\chi(23)=10$, $\chi(221)=30$, $\chi(3\smb1)=\chi(3)\chi(1)=1$,
$\chi(21\smb1)=\chi(21)\chi(1)=3$, and so forth.
Such a homogeneous linear system has a solution. The question is whether the
solution is the unique trivial solution $\cb=O$, or there is
a general solution with one or more free variables. See Meyer~\cite[p.~61]{Meyer}
for more details. We observe the final column in $A'(2,\ldots,6)$ just above
is a basic column that allows us to determine the final variable $c_{111}$
in terms of the other unknowns $c_5$, $c_{31}$, $c_{22}$ and $c_{13}$.
Hence we focus on the $4\times4$ subsystem for the latter four variables.
By Gaussian elimination the $4\times4$ homogeneous subsystem reduces to
$[A''\,O]$ where $A''$ is given by
\begin{equation*}
\begin{pmatrix}
  5 & 1 & 0  & 0  \\
  0 & 1 & -1 & 0  \\
  0 & 0 & 1  & -1 \\
  0 & 0 & 0  & 0 
\end{pmatrix},
\end{equation*}
and there is thus one free variable. The other three variables are
given explicitly, linearly, in terms of that free variable.
For example we can express the solution as $c_{31}=-5\,c_{5}$,
$c_{22}=-5\,c_{5}$ and $c_{13}=-5\,c_{5}$.
The free variable is fixed by the very first equation 
in the original linear system $A\cb=\bb$ which is $c_5=1$.
Since the final equation in the linear system $[A'(2,\ldots,6)\,O]$
giving $c_{111}$ is $c_{111}=-30\,c_5-3\,c_{31}-2\,c_{22}-3\,c_{13}$,
we find $c_{111}=10$. Hence we have solved the linear system of
equations $A'\cb=\bb$. The unique solution is $\cb=(1,-5,-5,-5,10)^{\mathrm{T}}$.
Hence we have shown that,
\begin{equation*}
\pi_5\coloneqq \wdh{\und{5}}-5\cdot(\wdh{\und{3}}\ast\wdh{\und{1}}
+\wdh{\und{2}}\ast\wdh{\und{2}}+\wdh{\und{1}}\ast\wdh{\und{3}})
+10\cdot(\wdh{\und{1}}\ast\wdh{\und{1}}\ast\wdh{\und{1}}).
\end{equation*}
\end{example}
We now turn to our main result.
Example~\ref{ex:fifthorder} illustrated the strategy we employ for the general order case.
\begin{definition}[P\"oppe polynomials]
For $n\in\mathbb N$, let
$\pi_n=\pi_n\bigl(\wdh{\und{1}},\wdh{\und{2}},\ldots,\wdh{\und{(n-2)}},\wdh{\und{n}}\bigr)$
denote a general polynomial consisting of a linear combination
of monomials of signature expansions of the following form:
\begin{equation*}
\pi_n\coloneqq\sum_{k=1}^{\frac12(n+1)}\sum_{a_1a_2\cdots a_k\in\Cb^\ast(n)}
c_{a_1a_2\cdots a_k}\cdot\wdh{\und{a_1}}\ast\wdh{\und{a_2}}\ast\cdots\ast\wdh{\und{a_k}},
\end{equation*}
where $\Cb^\ast(n)\subset\Cb(n)$ represents the subset
of compositions $w=a_1a_2\cdots a_k$ of $n$ such that
$a_1+a_2+\cdots+a_k=n-k+1$. The coefficients $c_{a_1a_2\cdots a_k}$ are real constants.
We call polynomials of this form in $\Rb\la\Cb\ra_\ast$, \emph{P\"oppe polynomials}.
\end{definition}
Note, in the definition of the P\"oppe polynomials,
$\Cb^\ast(n)$ is the correct subset of $\Cb(n)$ for the polynomial expansion $\pi_n$, as
we will equate $\pi_n$ to compositions of $n$ and each P\"oppe product `$\ast$'
adds `$1$' to the eventual compositions of $n$ produced in the
products $\wdh{\und{a_1}}\ast\wdh{\und{a_2}}\ast\cdots\ast\wdh{\und{a_k}}$.
The product $\wdh{\und{a_1}}\ast\wdh{\und{a_2}}\ast\cdots\ast\wdh{\und{a_k}}$ contains
$k-1$ such P\"oppe products. After expanding each of the P\"oppe polynomial factors
and expanding all the $k-1$ P\"oppe products between the resulting terms,
to guarantee all the compositions in the final sum are indeed compositions of $n$,
we must restrict the sum of the composition digits $a_1+a_2+\cdots a_k$ to be $n-(k-1)$.
Note the upper bound for $k$ is $\frac12(n+1)$ as this corresponds to the $k$-length
monomial $\und{1}\ast\und{1}\ast\cdots\ast\und{1}$ generating compositions of $n$.
\begin{theorem}[Main result: Integrability]\label{thm:mainresult}
For every odd natural number $n$, there exists a unique set of real coefficients
$\{c_w\colon w\in\Cb^\ast(n)\}$ such that in $\Rb\la\Cb\ra_\ast$, we have 
\begin{equation*}
n=\pi_n\bigl(\wdh{\und{1}},\wdh{\und{2}},\ldots,\wdh{\und{(n-2)}},\wdh{\und{n}}\bigr).
\end{equation*}
\end{theorem}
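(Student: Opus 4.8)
The plan is to reduce the statement to the unique solvability of a single overdetermined linear system, exactly as in Example~\ref{ex:fifthorder}, and then to extract a triangular substructure that controls it. Writing $\pi_n=\sum_{w\in\Cb^\ast(n)}c_w\,M_w$ with $M_{a_1\cdots a_k}\coloneqq\wdh{\und{a_1}}\ast\cdots\ast\wdh{\und{a_k}}$, I would substitute the signature expansions, expand every P\"oppe product via Definition~\ref{def:Poppeproductcompositions}, and equate, composition by composition, the coefficients in $\pi_n$ with those of the single letter $n\in\Rb\la\Cb\ra_\ast$. This yields a system $AC=B$ whose rows are indexed by $\Cb(n)$, whose columns are indexed by the finitely many monomials $\Cb^\ast(n)$, and whose right-hand side $B$ is supported on the single-letter composition $n$ alone with value $\chi(n)=1$ (only $\wdh{\und{n}}$ produces a one-part composition, every $k\geqslant2$ factor monomial producing compositions of at least two parts). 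Proving the theorem is then precisely proving that this system has a unique solution.

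The key device is the leading-term map $L\colon\Cb^\ast(n)\to\Cb(n)$ sending $a_1a_2\cdots a_k$ to $(a_1+1)(a_2+1)\cdots(a_{k-1}+1)a_k$, the composition obtained by keeping the single-letter term of each factor $\wdh{\und{a_i}}$ (coefficient $\chi(a_i)=1$) and taking the first term of each P\"oppe product. One checks that $L$ is injective with image exactly the compositions of $n$ all of whose parts but the last are at least $2$, so that $L$ is a bijection onto a set of rows of the same cardinality as the set of columns. I would then prove two structural facts about the entry $A[L(w'),w]$, the coefficient of $L(w')$ in $M_w$. First, since each P\"oppe product can only preserve or increase length, a monomial $M_w$ with $k$ factors is supported on compositions with at least $k$ parts; hence $A[L(w'),w]=0$ whenever $w$ has more factors than $w'$ has parts. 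Second, among monomials and compositions sharing a common number of parts $k$, such a composition can only arise by taking each factor's single letter and using a first or second P\"oppe term (never an insertion) at every one of the $k-1$ junctions, and a short increment analysis of which junctions are raised forces a triangular pattern with respect to a total order refining comparison of the last part. Together these make the square submatrix $A_L$, with rows $L(\Cb^\ast(n))$ and columns $\Cb^\ast(n)$, lower triangular with unit diagonal (each diagonal entry is $\chi(a_1\smb\cdots\smb a_k)=1$), hence invertible. Invertibility of $A_L$ gives uniqueness at once: any solution of $AC=B$ restricts to $A_LC=B_L$, which has at most one solution, and forward substitution constructs the candidate coefficients $c_w$ explicitly, with $c_n=1$ forced by the single-letter row.

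The remaining, and genuinely hard, step is existence, namely consistency of the overdetermined system: one must show that the candidate produced by the triangular pivot block also satisfies every equation whose composition lies outside the image of $L$, that is, the compositions of $n$ possessing an interior part equal to $1$. Equivalently, writing $A^{(h)}$ for the homogeneous block (all rows but the single-letter one), the claim is $\mathrm{rank}\,A^{(h)}=|\Cb^\ast(n)|-1$; the triangular block already gives $\mathrm{rank}\,A^{(h)}\geqslant|\Cb^\ast(n)|-1$, so the whole content is the one missing column dependence, which is exactly the asserted relation. I would prove it by the row-rank redundancy argument of Example~\ref{ex:fifthorder}: the column rank is at most $|\Cb^\ast(n)|$, so at most that many rows are independent, and I would show that each interior-$1$ row is a linear combination of the pivot rows $L(w')$ \emph{not involving} the single-letter row, which is precisely what places $B$ in the column space and yields consistency. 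The natural bookkeeping tool for these dependencies is the de-P\"oppe co-product $\Delta$ and the signature co-algebra of Section~\ref{sec:signaturecoalgebra}, since $\Delta$ enumerates, for any target composition, exactly the pairs of compositions generating it under $\ast$, hence the precise list of monomials feeding each row. I expect the cleanest route to be an induction on $n$ that reduces each interior-$1$ identity at order $n$ to relations already established at lower orders, descending by means of $\md\,\wdh{\und{m}}=\wdh{\und{(m+1)}}$ and the base case $\wdh{\und{1}}=1$; and I anticipate that this consistency verification, rather than the triangular uniqueness argument, will be the principal obstacle.
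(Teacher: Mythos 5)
Your reduction to the overdetermined linear system $AC=B$ is exactly the paper's, and your uniqueness argument is a sound and arguably cleaner repackaging of the paper's triangular pivot structure: where the paper identifies the pivot rows recursively, block by block in the number of parts, your leading-term bijection $L\colon a_1\cdots a_k\mapsto(a_1+1)\cdots(a_{k-1}+1)a_k$ names them all at once, and the increment analysis you sketch does close (in the descent ordering of Definition~\ref{def:naturalordering}, any off-diagonal entry $A[L(w'),w]\neq0$ with $w$ and $w'$ of equal length forces $w'\prec w$, and the diagonal entries equal $\chi(a_1\smb\cdots\smb a_k)=1$), so the square submatrix $A_L$ is invertible and at most one solution exists.

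The genuine gap is existence, and it cannot be dispatched by the bookkeeping you outline, for a structural reason: nowhere in your proposal does the oddness of $n$ enter, yet the theorem is false for even $n$ (Remark~\ref{rmk:evenordercase}), so the consistency of the non-pivot rows must \emph{fail} in the even case and any correct argument has to detect the parity. Your rank framing does not do this on its own: knowing $\mathrm{rank}\,A\leqslant|\Cb^\ast(n)|$ only says each extra row is a combination of \emph{some} maximal independent set of rows; if that combination involves the single inhomogeneous row $n$ with nonzero weight, consistency fails, and that is precisely what happens for even $n$. Concretely, the first extra row you meet is the one indexed by $1(n-1)$ (outside the image of $L$ since its first part is $1$). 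The paper eliminates it against the $2$-part pivot rows and obtains the condition $\hat\upchi_{n-2}=\chi\bigl(1(n-1)\bigr)$, where $\hat\upchi_\ell$ is an explicit alternating sum of binomial coefficients evaluating to $2m+1$ when $\ell=2m+1$ but to $2m-2$ when $\ell=2m$; this identity holds exactly for odd $n$ and is the arithmetic heart of the theorem. Your proposal defers all of this to an unperformed induction, so the step you yourself flag as the principal obstacle is indeed missing rather than merely laborious.

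A secondary point: once the $2$-part block has been shown consistent, the paper disposes of the remaining extra rows (those with three or more parts, including the all-ones compositions with more parts than any monomial has factors) not by a fresh computation but by arguing they are spanned by the preceding square invertible subsystems in the same unknowns, after Gaussian elimination in the earlier pivot columns. If you pursue your de-P\"oppe co-product route, that is where it would naturally live; but the parity-sensitive computation in the $2$-part block must be done explicitly first, and it is the one piece of your plan that no amount of triangularity will supply.
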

Before proving this theorem, we introduce a `descent ordering' on compositions as well as some useful notation.
\begin{definition}[Descent ordering of compositions]\label{def:naturalordering}
A composition $u\in\Cb$ precedes another composition $v\in\Cb$ and we write `$u\prec v$',
if the length of the compostion $u$, i.e.\/ the number of digits it contains,
is strictly less than the length of $v$. If $u$ and $v$ have the same length,
say $k$, so $u=u_1u_2\cdots u_k$ and $v=v_1v_2\cdots v_k$, then $u$
precedes $v$ if for some $\ell\in\{1,2,\dots,k\}$ we have $u_1=v_1$,
$u_2=v_2$, \ldots, $u_{\ell-1}=v_{\ell-1}$ and $u_\ell<v_\ell$. 
Otherwise we have $v\prec u$. The resulting ordering induced on $\Cb$,
is the \emph{descent ordering}.
\end{definition}
The descent ordering on $\Cb(n)$ is naturally transferred to subsets of $\Cb(n)$
such as $\Cb^\ast(n)$. We also use the notation $\Cb(n,k)$ to denote the set of $k$-part
compositions of $n$, i.e.\/ the subset of compositions $\Cb(n)$ whose length equals $k$:
$w\in\Cb(n,k)$ if $w\in\Cb(n)$ and $w=a_1a_2\cdots a_k$. We note, naturally
we have $\Cb(n)=\Cb(n,1)\cup\Cb(n,2)\cup\cdots\cup\Cb(n,n)$ and 
\begin{equation*}
\Cb^\ast(n)=\Cb^\ast(n,1)\cup\Cb^\ast(n-1,2)\cup\Cb^\ast(n-2,3)
\cup\cdots\cup\Cb^\ast\bigl(\tfrac12(n+1),\tfrac12(n+1)\bigr).
\end{equation*}
We also use the following notation: $\Cb(n,1)=n$; $\Cb(n,2)=\cup_{k=1}^{n-1}(n-k)\,\Cb(k,1)$ and then
\begin{equation*}
\Cb(n,3)=\bigcup_{k=1}^{n-2}(n-k-1)\,\Cb(k+1,2)\quad\text{and}\quad
\Cb(n,4)=\bigcup_{k=1}^{n-3}(n-k-2)\,\Cb(k+2,3),
\end{equation*}
and so forth. The notation $(n-k)\,\Cb(k,1)$ indicates the set of all compositions
of length $2$ which start with the digit `$(n-k)$'.
Similarly, we use the notation, $\wdh{\und{\Cb}}^\ast(n,1)=\wdh{\und{n}}$ and: 
\begin{align*}
\wdh{\und{\Cb}}^\ast(n-1,2)&=\bigcup_{k=1}^{n-2}\wdh{\und{(n-k-1)}}\ast\wdh{\und{\Cb}}^\ast(k,1), \\
\wdh{\und{\Cb}}^\ast(n-2,3)&=\bigcup_{k=1}^{n-4}\wdh{\und{(n-k-3)}}\ast\wdh{\und{\Cb}}^\ast(k+1,2), \\
\wdh{\und{\Cb}}^\ast(n-3,4)&=\bigcup_{k=1}^{n-6}\wdh{\und{(n-k-5)}}\ast\wdh{\und{\Cb}}^\ast(k+2,3), 
\end{align*}
and so forth. Here the notation $\wdh{\und{(n-k-3)}}\ast\wdh{\und{\Cb}}^\ast(k+1,2)$, for example,
indicates the the set of terms of the form $\wdh{\und{(n-k-3)}}\ast\wdh{\und{a}}_2\ast\wdh{\und{a}}_3$
where $a_2a_3\in\Cb^\ast(k+1,2)$.
Of course we can substitute any of these $\ell$-part composition
expressions into the $(\ell+1)$th one.
\begin{remark}
Note, with a slight abuse of notation, we have introduced the sets 
$\und{\Cb}^\ast(n)$ to denote compositions which we write in the
form $\und{a}_1\ast\und{a}_2\ast\cdots\ast\und{a}_k$, 
which have $k$-parts with $k$ ranging from $1$ to $\tfrac12(n+1)$,
where each factor $\und{a}_i$ is a signature polynomial corresponding to the
integer $a_i$, and $a_1a_2\cdots a_k\in\Cb^\ast(n,k)$.
In the proof of Theorem~\ref{thm:mainresult} just below,
we will refer to `columns' associated with the coefficients $c_{a_1\cdots a_k}$ in $\pi_n$,
with $a_1\cdots a_k\in\Cb^\ast(n,k)$. However each such coefficient
is allied with a monomial $\und{a}_1\ast\cdots\ast\und{a}_k$ in $\pi_n$.
In the proof, for convenience and brevity, we will equally refer to
the relevant `columns' either by the parametrising label $a_1\cdots a_k\in\Cb^\ast(n,k)$
or equivalently by the label $\und{a}_1\ast\cdots\ast\und{a}_k\in\und{\Cb}^\ast(n,k)$.
\end{remark}
We are now in a position to prove Theorem~\ref{thm:mainresult}.
\begin{proof}[of Theorem~\ref{thm:mainresult}]
We use Example~\ref{ex:fifthorder} and Table~\ref{table:KdV5}
as an example template for our strategy. Imagine we construct
the array in Table~\ref{table:KdV5} for the general integer case $n\in\Nb$
rather than the $n=5$ case shown in the table. We parametrise
the rows of the table using the set of compositions $\Cb(n)$, ordering
the compositions according to the descent ordering given in
Definition~\ref{def:naturalordering}. This naturally splits the
rows into blocks of compositions of the same length (i.e.\/ number of parts)
according to the decomposition,
\begin{equation*}
\Cb(n)=\Cb(n,1)\cup\Cb(n,2)\cup\cdots\cup\Cb(n,n).
\end{equation*}
This is the highest level coarse-grained block decomposition of $\Cb(n)$.
We parametrise the columns of the table using the set of decompositions $\Cb^\ast(n)$.
We again use descent ordering to parametrise the columns,
and this naturally splits the columns into blocks of compositions of $\Cb^\ast(n)$
with the same number of parts according to the decomposition,
\begin{equation*}
\Cb^\ast(n)=\Cb^\ast(n,1)\cup\Cb^\ast(n-1,2)\cup\Cb^\ast(n-2,3)
\cup\cdots\cup\Cb^\ast\bigl(\tfrac12(n+1),\tfrac12(n+1)\bigr).
\end{equation*}
Recall our goal is to prove there exists a unique set of real coefficients
$\{c_w\colon w\in\Cb^\ast(n)\}$ such that in $\Rb\la\Cb\ra_\ast$, we have 
\begin{equation*}
\sum_{k=1}^{\frac12(n+1)}\sum_{a_1a_2\cdots a_k\in\Cb^\ast(n,k)}
c_{a_1a_2\cdots a_k}\cdot\wdh{\und{a}}_1\ast\wdh{\und{a}}_2\ast\cdots\ast\wdh{\und{a}}_k=n.
\end{equation*}
Each column of the overall table corresponds to a term
$\wdh{\und{a}}_1\ast\wdh{\und{a}}_2\ast\cdots\ast\wdh{\und{a}}_k$
and its associated coefficient $c_{a_1a_2\cdots a_k}$ from the sum on the left.
The \emph{final column} in the overall table represents the right-hand side in
the equation above which is merely the single letter composition $n\in\Rb\la\Cb\ra_\ast$.
Each term $\wdh{\und{a}}_1\ast\wdh{\und{a}}_2\ast\cdots\ast\wdh{\und{a}}_k$, once the
P\"oppe products have been evaluated, generates a linear combination of
compositions from $\Cb(n)$. Within each column corresponding to
$a_1a_2\cdots a_k\in\Cb^\ast(n)$, each row contains the coefficient of
the corresponding composition generated in the evaluation of
the product $\wdh{\und{a}}_1\ast\wdh{\und{a}}_2\ast\cdots\ast\wdh{\und{a}}_k$. 
In other words, given a composition representing a given row in the table,
the element of that row in the column associated with a composition $a_1a_2\cdots a_k\in\Cb^\ast(n)$,
contains the $\chi$-image of the corresponding signature of the composition corresponding to that row.
Hence the resulting overall table, just like Table~\ref{table:KdV5}, represents
a linear equation of the form
\begin{equation*}
A\,\cb=\bb,
\end{equation*}
where $\cb$ is the vector of unknown coefficients $\{c_w\colon w\in\Cb^\ast(n)\}$,
in descent order so $\cb=(c_n,c_{(n-1)1},c_{(n-2)2},\ldots,c_{1(n-1)},c_{(n-2)11},\ldots)^{\mathrm{T}}$.
Since the number of $k$-part compositions of $n$ is $(n-1)$ choose $(k-1)$, the length
of the vector $\cb$ is 
\begin{equation*}
|\cb|\coloneqq\sum_{k=1}^{\frac12(n+1)}\begin{pmatrix}n-(k-1)\\k\end{pmatrix}.
\end{equation*}
The vector $\bb$ contains all zeros apart from the very first element
which is the $\chi$-image of the composition element $n\in\Cb(n)$,
i.e.\/ $\bb=\bigl(\chi(n),0,0,\ldots\bigr)^{\mathrm{T}}$.
Naturally we have $\chi(n)=1$.
The number of rows in $A$ and the length of $\bb$ is $2^{n-1}$,
the total number of compositions of $n$. Hence $A$ is a matrix of size $2^{n-1}\times|\cb|$,
with $|\cb|<2^{n-1}$ for $n\geqslant 2$.

As already indicated, we can decompose rows of $A$ into
blocks of compositions in $\Cb(n,k')$ for $k'=1,\ldots,n$, and we can
decompose columns of $A$ into blocks of compositions in
$\Cb^\ast(n-k+1,k)$ for $k=1,\ldots,\frac12(n+1)$.
We note the term $\wdh{\und{a}}_1\ast\wdh{\und{a}}_2\ast\cdots\ast\wdh{\und{a}}_{k}$
with $a_1a_2\cdots a_{k}\in\Cb^\ast(n-k+1,k)$, once the P\"oppe products have been evaluated,
only generates compositions in $\Cb(n)$ with $k$ or more parts.
Thus in terms of these blocks, parametrised by $(k',k)$, the matrix $A$ is lower block triangular.
Let us now proceed through the column blocks $\Cb^\ast(n-k+1,k)$ for $k=1$, $k=2$,
and so forth, to systematically determine the coefficients $c_{a_1a_2\cdots a_k}$
corresponding to each block.

First consider the case $k=1$, corresponding to the column block $\Cb^\ast(n,1)$
and thus the element $\wdh{\und{n}}$ and coefficient $c_n$. Since we know
$\wdh{\und{n}}=\sum_{w\in\Cb(n)}\chi(w)\cdot w$, this means the first row of this single
column block contains $\chi(n)=1$, while the subsequent rows corresponding to
$w\in\Cb(n)$ contain the coefficient $\chi(w)$, in descent order.
The first row of the linear system represented by the augmented matrix $[A\,\bb]$
represents the equation $c_n=1$. However as we did in Example~\ref{ex:fifthorder},
we temporarily ignore this knowledge. 

\begin{table}
\caption{Non-zero signature coefficients appearing in the $1$-part and $2$-part composition blocks of $A$,
ignoring the first row. The coefficients are the $\chi$-images of the signature entries shown.}
\label{table:secondblock}
\begin{center}
\begin{tabular}{|ccccccc|}
\hline
$\phantom{\biggl|}\wdh{\und{n}}$
&$\wdh{\und{(n-2)}}\ast\wdh{\und{1}}$
&$\wdh{\und{(n-3)}}\ast\wdh{\und{2}}$
&$\wdh{\und{(n-4)}}\ast\wdh{\und{3}}$
&\!$\cdots$\!
&$\wdh{\und{2}}\ast\wdh{\und{(n-3)}}$
&$\wdh{\und{1}}\ast\wdh{\und{(n-2)}}$\\
\hline
$(n-1)1$ & $(n-2)\smb1$ &              &              & \!$\cdots$\! &              &\\
$(n-2)2$ & $(n-2)\smb1$ & $(n-3)\smb2$ &              & \!$\cdots$\! &              &\\
$(n-3)3$ &              & $(n-3)\smb2$ & $(n-4)\smb3$ & \!$\cdots$\! &              &\\
$\vdots$ & $\vdots$     & $\vdots$     & $\vdots$     & \!$\ddots$\! & $\vdots$     & $\vdots$ \\
$2(n-2)$ &              &              &              & \!$\cdots$\! & $2\smb(n-3)$ & $1\smb(n-2)$ \\
$1(n-1)$ &              &              &              & \!$\cdots$\! &              & $1\smb(n-2)$ \\
\hline
\end{tabular}
\end{center}
\end{table}

Second we focus on the $2$-part composition sub-block of $A$ parametrised
by the columns $\Cb^\ast(n,1)\cup\Cb^\ast(n-1,2)$ and by the rows $\Cb(n,2)$, both in descent order. 
See Table~\ref{table:secondblock} where we show the non-zero elements in this block.
Note for the rows parametrised by $\Cb(n,2)$ all the columns in $A$ to the right
of the block contain zero entries and the corresponding entries in $\bb$ are all zero.
This block has size $(n-1)\times(n-1)$. If we focus on these rows in the
augmented system $[A\,\bb]$ then we have $(n-1)$ equations in the $(n-1)$ unknowns
$c_n$, $c_{(n-2)1}$, $c_{(n-3)2}$, \ldots, $c_{1(n-2)}$. We also observe that all the leading
diagonal entries in the block shown in Table~\ref{table:secondblock}, from
the top left $(1,1)$ entry to the lower right $(n-1,n-1)$ entry,
are all non-zero. Thus all these leading diagonal entries can be used as pivots
in a Gaussian elimination process to render all the entries in the corresponding
columns in the rows below this block to be zero. 
We can also solve the homogeneous linear system above,
represented by this block, to find the coefficients.
For $\ell=2,\ldots,(n-2)$ the homogeneous linear system of equations takes the form,
\begin{align*}
\chi\bigl((n-1)1\bigr)\,c_n+\chi\bigl((n-2)\smb1\bigr)\,c_{(n-2)1}&=0,\\
\chi\bigl((n-\ell)\ell\bigr)\,c_n+\chi\bigl((n-\ell)\smb(\ell-1)\bigr)\,c_{(n-\ell)(\ell-1)}
+\chi\bigl((n-\ell-1)\smb\ell\bigr)\,c_{(n-\ell-1)\ell}&=0,\\
\chi\bigl(1(n-1)\bigr)\,c_n+\chi\bigl(1\smb(n-2)\bigr)\,c_{1(n-2)}&=0.
\end{align*}
Note, apart from the coefficients of $c_n$ shown, all the other coefficients
of $c_{(n-2)1}$, $c_{(n-3)2}$, \ldots, $c_{1(n-2)}$ are equal to one.
Starting with the first equation above, we can iteratively solve for all the coefficients
in terms of $c_n$ giving, for $\ell=1,\ldots,(n-2)$, the relations
$\hat\upchi_{\ell}\,c_n+c_{(n-\ell-1)\ell}=0$, where
\begin{equation*}
\hat\upchi_\ell\coloneqq\chi\bigl((n-\ell)\ell\bigr)-\chi\bigl((n-\ell+1)(\ell-1)\bigr)
+\chi\bigl((n-\ell+2)(\ell-2)\bigr)-\cdots+(-1)^{\ell+1}\chi\bigl((n-1)1\bigr),
\end{equation*}
along with the final equation which is $\chi\bigl(1(n-1)\bigr)\,c_n+c_{1(n-2)}=0$.
We now have two linear homogeneous equations relating the unknowns $c_{1(n-2)}$ and $c_n$,
the equation for which $\ell=n-2$ and the final equation just mentioned.
These two equations are consistent if $\hat\upchi_{n-2}=\chi\bigl(1(n-1)\bigr)$. 
Note from its definition, $\chi\bigl(nk\bigr)$ is the binomial coefficient $n+k$ choose $n$,
and thus equal to $\chi\bigl(kn\bigr)$. Depending on whether $\ell$ is odd or even we have,
\begin{equation*}
\hat\upchi_\ell=\begin{cases} 2m+1, &~\text{if}~\ell=2m+1,\\ 2m-2, &~\text{if}~\ell=2m.\end{cases}
\end{equation*}
To see this, we observe that in the odd case we can utilise the symmetry $\chi(nk)=\chi(kn)$
and the differing signs in front of each as they appear in $\hat\upchi_\ell$, to cancel
all the terms except the final one given by $(-1)^{2m}\chi\bigl((2m)1\bigr)=2m+1$.
In the even case we observe
\begin{align*}
  \hat\upchi_{2m}&=\begin{pmatrix} 2m \\ 2 \end{pmatrix}-\begin{pmatrix} 2m \\ 3 \end{pmatrix}
  +\begin{pmatrix} 2m \\ 4 \end{pmatrix}-\cdots-\begin{pmatrix} 2m \\ 2m-1 \end{pmatrix}\\
  &=\Bigl((1-x)^{2m}-1+(2m)x-x^{2m}\Bigr)\Big|_{x=1}\\
  &=2m-2.
\end{align*}
Hence we observe when $n-2=2m+1$ we have $\hat\upchi_{n-2}=\hat\upchi_{2m+1}=2m+1=\chi\bigl(1(n-1)\bigr)$ 
and the homogeneous linear system of equations above is consistent with a single free
variable, namely $c_n$. On the other hand when $n-2=2m$ then $\hat\upchi_{2m}=2m-2$ and
this does not equal $\chi\bigl(1(n-1)\bigr)=2m$. Thus in the latter case there is no
solution to the linear homogeneous system $[A\,\bb]$ corresponding to the $2$-part composition rows.
Given the consistency in the odd order case, we now recall the equation from the very first
row of the linear system $[A\,\bb]$ which states $c_n=1$. When we include this information
we observe we can solve the top part of the linear system of equations $[A\,\bb]$, corresponding
to the single and then $2$-part composition rows, uniquely for the coefficients
$c_n$, $c_{(n-2)1}$, $c_{(n-3)2}$, \ldots, $c_{1(n-2)}$.

Third we now focus on solving the linear system $[A\,\bb]$ for all the coefficients
$c_w$ with $w\in\Cb^\ast(n)$ systematically. We assume $n$ is odd,
and in some sense we start again from the beginning and \emph{ignore} the information
given in the equation represented by the very first row, i.e.\/ that $c_n=1$.
We focus on the homogeneous linear system given by the remaining rows and represented by
$\bigl[A(2,\ldots,2^{n-1})\,O\bigr]$, where as before $A(2,\ldots,2^{n-1})$
represents the submatrix of $A$ consisting of rows $2$ through $2^{n-1}$. 
As is evident from the previous paragraph, the homogeneous linear system
of $(n-1)$ equations represented by $\bigl[A(2,\ldots,n)\,O\bigr]$
in the $(n-1)$ unknowns $c_n$, $c_{(n-2)1}$, $c_{(n-3)2}$, \ldots, $c_{1(n-2)}$
can be rendered into an \emph{upper} triangular form with non-zero entries
on the leading diagonal. Further, in the rows $2$ through $n$ considered,
the entries in the columns parametrised by
$\Cb^\ast(n-2,3)\cup\Cb^\ast(n-3,4)\cup\cdots\cup\Cb^\ast\bigl(\frac12(n+1),\frac12(n+1)\bigr)$
are all zero.

Continuing, we now focus on the $3$-part composition sub-block of $A$ parametrised
by the columns $\Cb^\ast(n-2,3)$ and by the rows $\Cb(n,3)$, both in descent order.
The reader may find Table~\ref{table:KdVn-3parts} in Appendix~\ref{sec:tables}
helpful in visualising the block structure we now discuss.
Note all the entries in $A$ above and to the right of this sub-block are zero.
Our goal here is to show, the $3$-part composition rectangular sub-block of $A$
is itself \emph{lower} triangular,
in the sense that by row swaps we rearrange the sub-block so that
each column has a non-zero pivot entry on the leading diagonal with all
the entries in the rows above the pivot equal to zero. If this is possible,
then we can use elementary row operations to render the entries in
\emph{all} the rows below the leading diagonal rows,
corresponding to any compositions with $3$- or more parts, to be zero.
Note we can use elementary row operations to render \emph{any} such redundant rows below
the $2$-part composition rows in the columns parametrised by the $1$- and $2$-part
composition columns, i.e.\/ by $\Cb^\ast(n,1)\cup\Cb^\ast(n-1,2)$, to have zero entries.
This is because the leading diagonal in the homogeneous linear system corresponding
to these columns has non-zero entries.
Thus using the leading diagonal rows in the $3$-part composition sub-block of $A$,
we can uniquely determine the unknowns $c_{w}$ with $w\in\Cb^\ast(n-2,3)$ in terms of $c_n$,
just as we achieved (see above) for the unknowns $c_{w}$ with $w\in\Cb^\ast(n-1,2)$.
Recall, retaining descent order, we express
\begin{equation*}
\Cb(n,3)\!=\!\!\bigcup_{k'=1}^{n-2}(n-k'-1)\,\Cb(k'+1,2)
\quad\text{and}\quad
\wdh{\und{\Cb}}^\ast(n-2,3)\!=\!\!\bigcup_{k=1}^{n-4}\wdh{\und{(n-k-3)}}\ast\wdh{\und{\Cb}}^\ast(k+1,2).
\end{equation*}
These two decompositions of $\Cb(n,3)$ and $\und{\Cb}^\ast(n-2,3)$ show we
can decompose the corresponding sub-block of $A$ into further, level~$2$ sub-blocks.
The level~$2$ sub-blocks are parametrised by $k'$ and $k$ with the ranges indicated above.
Further, each level~$2$ sub-block, for each $(k',k)$ pair, is, within themselves parametrised
by the $2$-part composition rows and columns $\Cb(k'+1,2)$ and $\und{\Cb}^\ast(k+1,2)$, respectively.
We proceed systematically, pairing up level~$2$ sub-blocks.
To begin, we observe the top left level~$2$ sub-block which corresponds
to the values $(k',k)=(1,1)$ and is a $1\times1$ level~$2$ sub-block
corresponding to the row $(n-2)11$ and column $(n-4)11$ must have a zero
entry as $\wdh{\und{(n-4)}}\ast\wdh{\und{1}}\ast\wdh{\und{1}}$ cannot generate the composition $(n-2)11$.
Indeed, by the same argument, all the entries in the remaining columns in this row,
i.e.\/ to the right in terms of descent order, must also be zero.
Hence the homogeneous linear equation represented by this row only contains
the unknowns $c_{w}$ with $w\in\und{\Cb}^\ast(n,1)\cup\und{\Cb}^\ast(n-1,2)$.
Since we already have a set of $(n-1)$ homogeneous linear equations in these
$(n-1)$ unknowns with non-zero diagonal entries, the new equation must be
a linear combination of these equations, and thus contains no new information.
Alternatively, as discussed above, all the entries in this row in preceding
columns corresponding to the $1$- and $2$-part compositions $\und{\Cb}^\ast(n,1)\cup\und{\Cb}^\ast(n-1,2)$
can be rendered zero by elementary row operations.
Thus again the whole row contains no new information and we can ignore it, which we do henceforth.
Further since for $k=1,\ldots,n-4$ the elements in $\wdh{\und{(n-k-3)}}\ast\wdh{\und{\Cb}}^\ast(k+1,2)$
can only generate compositions $(n-k'-1)\,\Cb(k'+1,2)$ when $(n-k'-1)-(n-k-3)\leqslant1$,
which is equivalent to $k\leqslant k'-1$, we deduce that for $k\geqslant k'$, all the
level~$2$ sub-blocks are zero. Hence in terms of level~$2$ sub-blocks the $3$-part composition
rectangular sub-block of $A$ is lower triangular. Now we just need to demonstrate each
level~$2$ sub-block parametrised by $(k',k)=(k+1,k)$ itself is lower triangular.
Each such level~$2$ sub-block consists of the
rows $(n-k-2)\,\Cb(k+2,2)$ and columns $\wdh{\und{(n-k-3)}}\ast\wdh{\und{\Cb}}^\ast(k+1,2)$.
The first letters in these compositions are fixed and the rows and columns
within the level~$2$ sub-block are parametrised by $\Cb(k+2,2)$ and $\wdh{\und{\Cb}}^\ast(k+1,2)$.
It is straightforward to check any such level~$2$ rectangular sub-blocks are indeed
lower triangular with non-zero entires on the diagonal as follows.
Recalling the structure of the level~$1$ sub-block
of $2$-part composition rows and columns from above,
we know that each such level~$2$ rectangular sub-block has size $(k+1)\times k$.
The final row has a non-zero entry in the final column of the sub-block.
However as we proceed through $k=1,2,3,\ldots$ we observe that each 
homogeneous linear equation represented by those final rows must be
a linear combination of the homogneous linear equations that precede it,
as those preceding equations are a square system of linear combinations of the same 
same set of unknowns. Hence each such final row is redundant.
Hence the $3$-part composition rectangular sub-block of $A$
is lower triangular and we can solve for the unknowns $c_{w}$ with $w\in\Cb^\ast(n-2,3)$
linearly in terms of $c_n$.
Note that all entries in the columns $\und{\Cb}^\ast(n-2,3)$ below the rows $\Cb(n,3)$
can be rendered zero by elementary row operations.

We now briefly outline the procedure for determining 
the unknowns $c_{w}$ with $w\in\und{\Cb}^\ast(n-3,4)$ linearly in terms of $c_n$.
As we do so, the nature of the procedure for determining the
remaining unknowns becomes apparent, and reveals itself to be recursive and straightforward. 
We focus on the $4$-part composition sub-block of $A$ parametrised
by the columns $\und{\Cb}^\ast(n-3,4)$ and by the rows $\Cb(n,4)$, both in descent order.
The reader may find Table~\ref{table:KdVn-4parts} in Appendix~\ref{sec:tables}
helpful in visualising the block structure we discuss herein.
All the entries in $A$ above and to the right of this sub-block are zero, since
elements in $\und{\Cb}^\ast(n-3,4)$ can only generate compositions in $\Cb(n)$ with
four or more parts. Our goal is to show, the $4$-part composition rectangular
sub-block of $A$ is itself \emph{lower} triangular, as we did for the 
$3$-part composition rectangular sub-block of $A$. Again as above,
if this is possible, then we can use elementary row operations to render the entries in
\emph{all} the rows below the leading diagonal rows,
corresponding to any compositions with $4$- or more parts, to be zero.
The leading diagonal rows in the $4$-part composition sub-block of $A$,
uniquely determine the unknowns $c_{w}$ with $w\in\Cb^\ast(n-3,4)$ in terms of $c_n$.
Recall, as always retaining descent order, we have
\begin{equation*}
\Cb(n,4)\!=\!\!\bigcup_{k'=1}^{n-3}(n-k'-2)\,\Cb(k'+2,3)
\quad\text{and}\quad
\wdh{\und{\Cb}}^\ast(n-3,4)\!=\!\!\bigcup_{k=1}^{n-6}\wdh{\und{(n-k-5)}}\ast\wdh{\und{\Cb}}^\ast(k+2,3).
\end{equation*}
Thus we observe we decompose the sub-block of $A$ parametrised by 
the rows $\Cb(n,4)$ and columns $\Cb^\ast(n-3,4)$ into level~$2$ sub-blocks.
The level~$2$ sub-blocks are parametrised by $k'$ and $k$ as indicated above.
Each level~$2$ sub-block, for each $(k',k)$ pair, is, within themselves parametrised
by the $3$-part composition rows and columns $\Cb(k'+2,3)$ and $\Cb^\ast(k+2,3)$, respectively.
And each of these level~$2$ sub-blocks can be decomposed into further level~$3$ sub-blocks
of $2$-part compositions, as we demonstrated in the previous paragraph. 
We start by showing that in terms of level~$2$ sub-blocks, the $4$-part composition rectangular
sub-block of $A$ is lower triangular. 
Observe the top left level~$2$ sub-block which corresponds
to the values $(k',k)=(1,1)$ is a $1\times1$ level~$2$ sub-block
corresponding to the row $(n-3)111$ and column $(n-6)111$ must have a zero
entry as $\wdh{\und{(n-6)}}\ast\wdh{\und{1}}\ast\wdh{\und{1}}\ast\wdh{\und{1}}$
cannot generate the composition $(n-3)111$.
Indeed, by the same argument, all the entries in the remaining columns in this row,
i.e.\/ to the right in terms of descent order, must also be zero.
Hence the homogeneous linear equation represented by this row does not generate
any new information. Next, the sub-block which corresponds
to the values $(k',k)=(2,1)$ is a $3\times1$ level~$2$ sub-block
corresponding to the rows $(n-4)211$, $(n-4)121$ and $(n-4)112$ and column $(n-6)111$ must have zero
entries as $\wdh{\und{(n-6)}}\ast\wdh{\und{1}}\ast\wdh{\und{1}}\ast\wdh{\und{1}}$ cannot generate any of
the compositions $(n-4)211$, $(n-4)121$ and $(n-4)112$. By similar arguments these rows
do not generate any new information either. Indeed, we observe that for 
$k=1,\ldots,n-6$ the elements in $\wdh{\und{(n-k-5)}}\ast\wdh{\und{\Cb^\ast}(k+2,3)}$
can only generate compositions $(n-k'-2)\,\Cb(k'+2,3)$ when $(n-k'-2)-(n-k-5)\leqslant1$,
which is equivalent to $k\leqslant k'-2$, we deduce that for $k\geqslant k'-1$, all the
level~$2$ sub-blocks are zero. Thus, in terms of level~$2$ sub-blocks, the $4$-part composition
rectangular sub-block of $A$ is lower triangular---either by deleting the rows corresponding
to the first two row sub-blocks or by row-swapping them to rows sufficiently far down the matrix.
Now we demonstrate that each level~$2$ sub-block with $k'=k+2$ is itself lower triangular.
Indeed we have already demonstrated this in the last paragraph, which recursively used that
each of these level~$2$, $3$-part composition sub-blocks can be further decomposed
into level~$3$, $2$-part composition sub-blocks, which are lower triangular.
Consequently we can solve for the unknowns $c_{w}$ with $w\in\Cb^\ast(n-3,4)$
linearly in terms of $c_n$.

Finally we discuss the general case, for any $5\leqslant\ell\leqslant\frac12(n+1)$,
for the $\ell$-part composition sub-block of $A$ parametrised
by the columns $\Cb^\ast(n-\ell+1,\ell)$ and by the rows $\Cb(n,\ell)$, both in descent order.
All the entries in $A$ above and to the right of this sub-block are zero, since
elements in $\Cb^\ast(n-\ell,\ell)$ can only generate compositions in $\Cb(n)$ with
$\ell$-parts or more. Assume that sub-blocks parametrised by rows $\Cb(n,\ell-1)$
and columns $\Cb^\ast(n-\ell+2,\ell-1)$, i.e.\/ at the previous stage `$\ell-1$',
are lower triangular. Now, again retaining descent order, we have
\begin{align*}
\Cb(n,\ell)&=\bigcup_{k'=1}^{n-\ell+1}(n-k'-\ell+2)\,\Cb(k'+\ell-2,\ell-1)
\intertext{and}
\wdh{\und{\Cb}}^\ast(n-\ell+1,\ell)
&=\bigcup_{k=1}^{n-2\ell+2}\wdh{\und{(n-2\ell+3-k)}}\ast\wdh{\und{\Cb}}^\ast(k+\ell-2,\ell-1).
\end{align*}
Thus 
we decompose the sub-block of $A$ parametrised by 
the rows $\Cb(n,\ell)$ and columns $\Cb^\ast(n-\ell+1,\ell)$ into level~$2$ sub-blocks.
The level~$2$ sub-blocks are parametrised by $k'$ and $k$ as indicated above.
For $k=1,\ldots,n-2\ell+2$ the elements in $\wdh{\und{(n-2\ell+3-k)}}\ast\wdh{\und{\Cb}}^\ast(k+\ell-2,\ell-1)$
can only generate compositions $(n-k'-\ell+2)\,\Cb(k'+\ell-2,3)$ when
$(n-k'-\ell+2)-(n-2\ell+3-k)\leqslant1$, which is equivalent to $k\leqslant k'-\ell+2$.
Hence for $k\geqslant k'-\ell+3$, all the
level~$2$ sub-blocks are zero. Thus, in terms of level~$2$ sub-blocks,
the $\ell$-part composition rectangular sub-block of $A$ is lower triangular---or
can be transformed as such by elementary row swaps. 
Now using that we assumed the sub-blocks parametrised by rows $\Cb(n,\ell-1)$
and columns $\Cb^\ast(n-\ell+2,\ell-1)$ are lower triangular, the
sub-block of $A$ parametrised
by the columns $\Cb^\ast(n-\ell+1,\ell)$ and the rows $\Cb(n,\ell)$ must
be lower triangular. Hence we can solve for the unknowns $c_{w}$ with $w\in\Cb^\ast(n-\ell+1,\ell)$
linearly in terms of $c_n$. By induction,
this applies for all values of $\ell$ up to and including $\ell=\frac12(n+1)$.
Finally now utilising that $c_n=1$ generates a unique solution set for $c_w$ for all $w\in\Cb^\ast(n)$.\qed
\end{proof}
\begin{remark}[Even order]\label{rmk:evenordercase}
For every even natural number $n$, no such set of real coefficients exist.
This is demonstrated in the proof above in the component addressing the $2$-part composition
sub-blocks of $A$.
\end{remark}
\begin{remark}[Linear combinations of fields]
Since we can expand each of the single letter compositions in $\mu_3\cdot3+\mu_5\cdot5+\mu_7\cdot7+\cdots$,
in terms of $\mu_3\cdot\pi_3+\mu_5\cdot\pi_5+\mu_7\cdot\pi_7+\cdots$, we deduce that we can
solve/linearise, in the manner we have outlined, any nonlinear
evolutionary partial differential equation generated by a linear combination of fields from the
non-commutative primitive Korteweg--de Vries hierarchy.
\end{remark}

\section{Lax Hierarchy}\label{sec:Laxhierarchy}
Our goal in this section is to show how the non-commutative Lax hierarchy
is easily and naturally deduced using the signature expansions 
in the P\"oppe algebra $\Rb\la\Cb\ra_\ast$ from
Sections~\ref{sec:kernelalgebras} and \ref{sec:existenceanduniqueness}.
We derive the well-known non-commutative Lax hierarchy,
see for example Carillo and Schoenlieb~\cite{CSII},
and show that at each odd order, the underlying iteration generates 
nonlinear terms that can be expressed as a linear combination 
of monomials of the form $\wdh{\und{a_1}}\ast\wdh{\und{a_2}}\ast\cdots\ast\wdh{\und{a_k}}$
with $a_1a_2\ldots a_k\in\Cb^\ast(n)$. By our main result Theorem~\ref{thm:mainresult}
from the last section on the existence and uniqueness of the Korteweg--de Vries
hierarchy, we deduce the non-commutative hierarchy we developed therein,
and the non-commutative Lax hierarchy, are one and the same. 

We proceed by deriving the first part of the result on the non-commutative Lax hierarchy,
before stating the main result of this section, and then subsequently proving the second part. 
For a given odd natural number $n\in\mathbb N$, we assume there exists a polynomial
$\hat\pp_n=\hat\pp_n\bigl(\wdh{\und{1}},\wdh{\und{2}},\ldots,\wdh{\und{(n-2)}}\bigr)$
such that the composition element $n\in\Rb\la\Cb\ra_\ast$ can be expressed in the form
\begin{equation*}
n=\wdh{\und{n}}+\hat\pp_n\bigl(\wdh{\und{1}},\wdh{\und{2}},\ldots,\wdh{\und{(n-2)}}\bigr).
\end{equation*}
For $n=3$, such a polynomial $\hat\pp_3=\hat\pp_3\bigl(\und{1}\bigr)$ exists as we know
$3=\wdh{\und{3}}-3\cdot\wdh{\und{1}}\ast\wdh{\und{1}}$, representing the potential Korteweg--de Vries equation,
i.e.\/ $\hat\pp_3=-3\cdot\wdh{\und{1}}\ast\wdh{\und{1}}$. We now apply the derivation operator $\md$
to the expression for the composition element $n\in\Rb\la\Cb\ra_\ast$ just above, twice.
Recall the action of $\md$ on monomials in $\Rb\la\Cb\ra_\ast$ from Definition~\ref{def:derivation}.
Starting with $\md(n)=1n+(n+1)+n1$ and then $\md(1n)=2\cdot(11n)+2n+1(n+1)+1n1$
and $\md(n1)=1n1+(n+1)1+2\cdot(n11)+n2$, the successive application of $\md$ generates,
first that $(n+1)+(1n+n1)=\wdh{\und{(n+1)}}+\md\hat\pp_n$, and then:
\begin{equation*}
(n+2)+2\cdot\bigl(11n+1(n+1)+1n1+(n+1)1+n11\bigr)+(2n+n2)=\wdh{\und{(n+2)}}+\md^2\hat\pp_n.
\end{equation*}
Next, in the last expression, looking at the terms on the left, not including the first term, 
and recalling the properties of the P\"oppe product, we observe
\begin{equation*}
1\ast n+n\ast 1=2n+1(n+1)+2\cdot(11n)+(n+1)1+n2+2\cdot(n11).
\end{equation*}
Substituting this result into the previous expression and rearranging, we find,
\begin{equation*}
(n+2)=\wdh{\und{(n+2)}}+\md^2\hat\pp_n-(1\ast n+n\ast 1)-\bigl(1(n+1)+(n+1)1+2\cdot(1n1)\bigr).
\end{equation*}
To find a closed form expression for the terms `$1(n+1)+(n+1)1+2\cdot(1n1)$' in terms
of a P\"oppe polynomial, we use the following result relating the P\"oppe product
and the derivation operation $\md$. 
\begin{lemma}\label{lemma:Poppeproductandderivation}
Given two arbitrary compositions $ua,bv\in\Rb\la\Cb\ra_\ast$ where we 
distinguish the final and beginning single letters $a$ and $b$ as shown,  
we have:
\begin{equation*}
ua\ast bv=\md(uabv)-(\md u)abv-uab(\md v)+ua1bv.
\end{equation*}
\end{lemma}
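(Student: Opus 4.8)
The plan is to verify the identity by expanding the right-hand side directly from the definition of the derivation $\md$ in Definition~\ref{def:derivation}, rather than invoking a naive Leibniz product rule, since the boundary insertions require careful bookkeeping. First I would write $u=u_1\cdots u_p$ and $v=v_1\cdots v_q$ (allowing either to be empty), so that $uabv=u_1\cdots u_p\,a\,b\,v_1\cdots v_q$ is a composition with $p+q+2$ letters, and apply $\md$ to it. By Definition~\ref{def:derivation} the result splits into \emph{increment terms} (one per letter, raising that letter by one) and \emph{insertion terms} (one per gap, placing a `$1$' there, for each of the $p+q+3$ gaps including the two ends).

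Next I would expand $(\md u)abv$ and $uab(\md v)$ in the same way, treating $\md u$ and $\md v$ as the derivations of the standalone compositions $u$ and $v$ and then concatenating on the appropriate side. The key observation I would record is that $(\md u)abv$ reproduces exactly the increment terms of $\md(uabv)$ supported inside $u$, together with the insertions at the front gap, the internal gaps of $u$, and crucially the gap between $u$ and $a$ --- the latter being the `end' insertion $u1$ of $\md u$ concatenated with $abv$, namely $u1abv$. Symmetrically, $uab(\md v)$ reproduces the increments inside $v$ and the insertions at the internal gaps of $v$, the end gap, and the gap between $b$ and $v$, the latter being the `front' insertion $1v$ of $\md v$, namely $uab1v$.

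Subtracting these two terms from $\md(uabv)$ therefore cancels every increment term except those acting on $a$ and $b$, and cancels every insertion term except the one placed in the gap strictly between $a$ and $b$; what survives is precisely $u(a+1)bv+ua(b+1)v+ua1bv$. Adding the final summand $ua1bv$ then doubles the middle insertion, yielding $u(a+1)bv+ua(b+1)v+2\cdot(ua1bv)$, which is exactly $(ua)\ast(bv)$ by Definition~\ref{def:Poppeproductcompositions}. This establishes the stated identity.

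The step I expect to be the main obstacle is the careful accounting of the two \emph{interior boundary gaps}, between $u$ and $a$ and between $b$ and $v$. A naive Leibniz expansion of $\md(uabv)$ as $(\md u)abv+u\bigl(\md(ab)\bigr)v+uab(\md v)$ double-counts precisely these two gaps, so I would avoid relying on such a product rule and instead track each of the $p+q+3$ insertion positions individually, confirming that the subtraction of $(\md u)abv$ and $uab(\md v)$ removes the $u$--$a$ and $b$--$v$ boundary insertions exactly once each. Once the gaps are labelled explicitly, the cancellation is routine and the match with the P\"oppe product is immediate.
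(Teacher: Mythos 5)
Your proof is correct, and at the top level it follows the same route as the paper --- expand everything from Definition~\ref{def:derivation} and match the surviving terms against the P\"oppe product of Definition~\ref{def:Poppeproductcompositions} --- but your insistence on tracking each of the $p+q+3$ insertion gaps individually is not mere caution: it is actually necessary. The paper's own proof passes through the intermediate identity $\md(uabv)=(\md u)abv+u\bigl(\md(ab)\bigr)v+uab(\md v)$ and then the rewriting $u(a+1)bv+ua(b+1)v+2\cdot(ua1bv)=u\bigl(\md(ab)\bigr)v+ua1bv$; read literally, the first of these overcounts by $u1abv+uab1v$ (the $u$--$a$ and $b$--$v$ gap insertions each appear in two summands, since $\md$ is not a concatenation derivation), while the second undercounts by exactly the same two terms, so the paper's final formula is right only because the two imprecisions cancel. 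Your gap-by-gap accounting instead establishes $\md(uabv)-(\md u)abv-uab(\md v)=u(a+1)bv+ua(b+1)v+ua1bv$ directly, after which adding $ua1bv$ gives the P\"oppe product; this is a cleaner and genuinely more rigorous version of the argument. The one point worth adding is the degenerate case where $u$ (or $v$) is empty: there the front gap and the $u$--$a$ gap coincide, and your bookkeeping still closes provided one adopts the convention, implicit in Definition~\ref{def:derivation} at $n=0$, that $\md\nu=1$ (the single insertion into the empty word); this case matters because the paper later invokes the lemma with $u$ and $v$ both empty.
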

\begin{proof}
From the definition of $\md$ we have
$\md(uabv)=(\md u)abv+u\bigl(\md(ab)\bigr)v+uab(\md v)$. Thus from the definition
of the P\"oppe product, the result follows as we observe
$ua\ast bv$ equals $u(a+1)bv+ua(b+1)v+2\cdot(ua1bv)$,
which in turn can be expressed as $u\bigl(\md(ab)\bigr)v+ua1bv$,
which itself in turn can be expressed as $\md(uabv)-(\md u)abv-uab(\md v)+ua1bv$.\qed 
\end{proof}
Now we observe that since $\md(n)=1n+(n+1)+n1$ then utilising
Lemma~\ref{lemma:Poppeproductandderivation} with $ua=1$ and
then $bv$ successively equal to $1n$, $(n+1)$ and $n1$, and vice-versa,
we have:
\begin{align*}
1\ast\md(n)+\md(n)\ast1
=&\;\md\bigl(1(n+1)+(n+1)1+2\cdot(1n1)+11n+n11\bigr)\\
&\;-\Bigl(11\md(n)+\md(n)11+1n\md(1)+\md(1)n1+2\cdot\bigl(1(n+1)1\bigr)\Bigr)\\
=&\;\md\bigl(1(n+1)+(n+1)1+2\cdot(1n1)\bigr)\\
&\;-\Bigl(21n+12n+2\cdot(111n)-2n1-1(n+1)1-2\cdot(11n1)\\
&\;+n21+n12+2\cdot(n111)-1(n+1)1-1n2-2\cdot(1n11)\Bigr)\\
=&\;\md\bigl(1(n+1)+(n+1)1+2\cdot(1n1)\bigr)\\
&\;-1\ast(1n)+1\ast(n1)-(n1)\ast1+(1n)\ast1\\
=&\;\md\bigl(1(n+1)+(n+1)1+2\cdot(1n1)\bigr)
-[1,1n]_\ast+[1,n1]_\ast, 
\end{align*}
where in the second step we expanded the derivation terms $\md(11n+n11)$,
and for $u,v\in\Rb\la\Cb\ra_\ast$ the expression $[u,v]_\ast$ denotes
the P\"oppe product commutator of $u$ and $v$, i.e.\/ $[u,v]_\ast\coloneqq u\ast v-v\ast u$.
Using Lemma~\ref{lemma:Poppeproductandderivation} with $u$ and $v$ as the empty
compositions and $a=1$ and $b=n$, as well as vice-versa, we deduce that,
\begin{equation*}
1\ast n-n\ast 1=\md(1n-n1)\qquad\Leftrightarrow\qquad
1n-n1=\md^{-1}\bigl(1\ast n-n\ast1\bigr).
\end{equation*}
Subsituting this into the expression involving $-[1,1n]_\ast+[1,n1]_\ast\equiv-[1,1n-n1]_\ast$
just above and rearranging, we observe,
\begin{equation*}
1(n+1)+(n+1)1+2\cdot(1n1)
=\md^{-1}\Bigl(\bigl(1\ast\md(n)+\md(n)\ast1\bigr)+\bigl[1,\md^{-1}[1,n]_\ast\bigr]_\ast\Bigr).
\end{equation*}
By substituting this result into the expression for $(n+2)\in\Rb\la\Cb\ra_\ast$
just preceding Lemma~\ref{lemma:Poppeproductandderivation} 
we have established the first part of the following result.
The second part is established in the proof of the theorem that follows.
\begin{theorem}[Non-Commutative Lax hierarchy]\label{thm:Laxhierarchy}
Let $n\in\mathbb N$ be a given odd natural number with $n\geqslant 3$.
Assume there exists a polynomial
$\hat\pp_n=\hat\pp_n\bigl(\wdh{\und{1}},\wdh{\und{2}},\ldots,\wdh{\und{(n-2)}}\bigr)$
such that for the composition element $n\in\Rb\la\Cb\ra_\ast$ we have
\begin{equation*}
n=\wdh{\und{n}}+\hat\pp_n\bigl(\wdh{\und{1}},\wdh{\und{2}},\ldots,\wdh{\und{(n-2)}}\bigr).
\end{equation*}
For $n=3$, such a polynomial $\hat\pp_3=-3\cdot\wdh{\und{1}}\ast\wdh{\und{1}}$ exists.
By successively applying the derivation operation twice we deduce
\begin{equation*}
(n+2)=\wdh{\und{(n+2)}}+\md^2\hat\pp_n-(1\ast n+n\ast 1)-\md^{-1}\bigl(1\ast\md(n)+\md(n)\ast1\bigr)
-\bigl[1,\md^{-1}[1,n]_\ast\bigr]_\ast.
\end{equation*}
In addition there exists a polynomial
$\tp\pp_n=\tp\pp_n\bigl(\wdh{\und{1}},\wdh{\und{2}},\ldots,\wdh{\und{n}}\bigr)$
in $\Rb\la\Cb\ra_\ast$ such that 
\begin{equation*}
\md^{-1}\bigl(1\ast\md(n)+\md(n)\ast1\bigr)+\bigl[1,\md^{-1}[1,n]_\ast\bigr]_\ast
=\tp\pp_n\bigl(\wdh{\und{1}},\wdh{\und{2}},\ldots,\wdh{\und{n}}\bigr).
\end{equation*}
\end{theorem}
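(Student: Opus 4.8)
The plan is to reduce the claim to the weight-graded structure of the P\"oppe algebra $\Rb\la\Cb\ra_\ast$ and then close it with the main existence result, Theorem~\ref{thm:mainresult}, applied at order $n+2$. The first fact I would establish is that $\md$ acts as a \emph{derivation with respect to the P\"oppe product}, i.e.\ $\md(x\ast y)=(\md x)\ast y+x\ast(\md y)$ for all $x,y\in\Rb\la\Cb\ra_\ast$. This is cleanest via the isomorphism between $\Rb\la\Cb\ra_\ast$ and the Hankel kernel algebra $\Rb\la[\mathbb P_U]\ra$, under which $\md$ corresponds to $\pa$ and $\ast$ to the kernel product of Lemma~\ref{lemma:Poppeprodkernels}: each such product is literally a product of two $x$-dependent kernel values $[\,\cdot\,](y,0;x)\,[\,\cdot\,](0,z;x)$, on which $\pa=\pa_x$ obeys the ordinary Leibniz rule, and $\pa$ commutes with $[\,\cdot\,]$. (One can equally verify the identity directly from Definitions~\ref{def:Poppeproductcompositions} and \ref{def:derivation}.) The consequence I need is that the space of P\"oppe polynomials is stable under $\md$: since $\md(\wdh{\und{m}})=\wdh{\und{(m+1)}}$, applying $\md$ to a monomial $\wdh{\und{a_1}}\ast\cdots\ast\wdh{\und{a_k}}$ yields, by the Leibniz rule, a sum of monomials of the same shape with exactly one index raised. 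In particular $\md^2\hat\pp_n$ is again a P\"oppe polynomial, and since $\hat\pp_n$ is built from $\wdh{\und{1}},\ldots,\wdh{\und{(n-2)}}$, its second derivative is built from $\wdh{\und{1}},\ldots,\wdh{\und{n}}$.

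With this in hand I would assemble the pieces from the first part of the theorem, which already establishes
\begin{equation*}
(n+2)=\wdh{\und{(n+2)}}+\md^2\hat\pp_n-(1\ast n+n\ast1)-\tp\pp_n,
\end{equation*}
where $\tp\pp_n$ is precisely the $\md^{-1}$-expression in the statement and is identified, in the computation preceding the theorem, with the explicit weight-$(n+2)$ element $1(n+1)+(n+1)1+2\cdot(1n1)$. Rearranging gives $\tp\pp_n=\wdh{\und{(n+2)}}+\md^2\hat\pp_n-(1\ast n+n\ast1)-(n+2)$. Here $\md^2\hat\pp_n$ is a P\"oppe polynomial by the previous paragraph, and $1\ast n+n\ast1$ is one as well, because $1=\wdh{\und{1}}$ and, by hypothesis, $n=\wdh{\und{n}}+\hat\pp_n$ is a P\"oppe polynomial, so both $\ast$-products lie in the P\"oppe-polynomial space, which is closed under $\ast$ and linear combination.

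It remains to control the single term $(n+2)$. As $n$ is odd, so is $n+2$, and Theorem~\ref{thm:mainresult} furnishes a P\"oppe polynomial $\pi_{n+2}$ with $(n+2)=\pi_{n+2}$; by the normalisation in that theorem the single-letter coefficient is $c_{n+2}=1$, so $\pi_{n+2}=\wdh{\und{(n+2)}}+\hat\pp_{n+2}$, where $\hat\pp_{n+2}$ gathers the genuinely multi-factor monomials and therefore involves only $\wdh{\und{1}},\ldots,\wdh{\und{n}}$. Substituting, the two copies of $\wdh{\und{(n+2)}}$ cancel and I obtain
\begin{equation*}
\tp\pp_n=\md^2\hat\pp_n-(1\ast n+n\ast1)-\hat\pp_{n+2},
\end{equation*}
which is a P\"oppe polynomial in $\wdh{\und{1}},\ldots,\wdh{\und{n}}$, as required. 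I would stress that this is not circular: Theorem~\ref{thm:mainresult} is proved by the independent linear-algebra construction of Section~\ref{sec:existenceanduniqueness}, which never mentions the Lax hierarchy, and here only its \emph{existence} half is used; the uniqueness half then yields the subsequent ``one and the same'' conclusion of the section.

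The step I expect to be the genuine obstacle is guaranteeing that $\tp\pp_n$ is expressible through $\wdh{\und{1}},\ldots,\wdh{\und{n}}$ rather than through $\wdh{\und{(n+1)}}$ or $\wdh{\und{(n+2)}}$, equivalently the clean cancellation of $\wdh{\und{(n+2)}}$. Two points deserve care. First, $\md^{-1}$ must be unambiguous on the element in question; this is guaranteed by the explicit preimage $1(n+1)+(n+1)1+2\cdot(1n1)$ supplied by the first part, which pins down $\tp\pp_n$ as a concrete weight-$(n+2)$ element independently of any inversion, so that the rearrangement above is an honest identity in $\Rb\la\Cb\ra_\ast$. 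Second, the uniform weight $n+2$ of every term, together with the normalisation $c_{n+2}=1$ in $\pi_{n+2}$, is exactly what forces $\wdh{\und{(n+2)}}$ to drop out. A fully self-contained proof avoiding Theorem~\ref{thm:mainresult} would instead have to show directly that $1(n+1)+(n+1)1+2\cdot(1n1)$ is a P\"oppe polynomial; since the composition $1n1$ cannot arise from the three terms of any single P\"oppe product of two compositions, this amounts to solving the same overdetermined linear system for the expansion coefficients as in Theorem~\ref{thm:mainresult} at order $n+2$, which is why invoking that theorem is the economical route.
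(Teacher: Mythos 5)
Your argument is correct, but it proves the second assertion by a genuinely different route from the paper. The paper takes the identification of the left-hand side with $1(n+1)+(n+1)1+2\cdot(1n1)$ and then re-runs, essentially from scratch, the overdetermined linear-algebra machinery of Theorem~\ref{thm:mainresult}: it posits $\tp\pp_n$ as a P\"oppe polynomial over $\Cb^\ast(n+2)\backslash(n+2)$, sets up $\tp A\,\tp\cb=\tp\bb$ with non-zero right-hand entries at the rows $(n+1)1$, $1(n+1)$ and $1n1$, exploits the same block lower-triangularity, and resolves the one non-homogeneous row at $1n1$ by an explicit alternating binomial-sum consistency check (the value $2-2n$ arising on both sides). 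You instead rearrange the first display of the theorem to $\tp\pp_n=\wdh{\und{(n+2)}}+\md^2\hat\pp_n-(1\ast n+n\ast1)-(n+2)$ and close by citing Theorem~\ref{thm:mainresult} at order $n+2$; this is legitimate and not circular, since that theorem is proved independently in Section~\ref{sec:existenceanduniqueness} and you only use its existence half here. Your route is shorter and makes transparent why the Lax field must coincide with $\pi_{n+2}$; its one load-bearing ingredient not stated anywhere in the paper is that $\md$ is a derivation for the P\"oppe product, $\md(u\ast v)=(\md u)\ast v+u\ast(\md v)$, needed so that $\md^2\hat\pp_n$ remains a P\"oppe polynomial in $\wdh{\und{1}},\ldots,\wdh{\und{n}}$. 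This identity is true, and your kernel-level justification (ordinary Leibniz rule for $\pa_x$ on the matrix product $[\,\cdot\,](y,0;x)[\,\cdot\,](0,z;x)$, together with the commutation of $\pa$ and $[\,\cdot\,]$) is the right one, but note it is a different statement from Lemma~\ref{lemma:Poppeproductandderivation} and should be recorded and proved as a separate lemma. What the paper's longer computation buys is the explicit coefficients $\tp c_w$ and an independent consistency verification; what yours buys is economy. One small inaccuracy in your closing remarks: the composition $1n1$ \emph{can} arise from a single P\"oppe product of two compositions, for instance from $1(n-1)\ast1$ and from $1\ast(n-1)1$ (compare the de-P\"oppe co-product $\Delta(1n1)$), though this side comment does not affect your main argument.
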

\begin{remark}
Note we can substitute for `$n$' in the term $(1\ast n+n\ast 1)$ in the
second statement in Theorem~\ref{thm:Laxhierarchy} from the first statement 
in the theorem.
\end{remark}
\begin{remark}
The expression for $(n+2)$ in Theorem~\ref{thm:Laxhierarchy} matches
that for the non-commutative potential Korteweg--de Vries hierarchy
in Carillo and Schoenlieb~\cite[Sec.~II]{CSII}.
\end{remark}
\begin{proof}
We establish the existence of the polynomial
$\tp\pp_n=\tp\pp_n\bigl(\wdh{\und{1}},\wdh{\und{2}},\ldots,\wdh{\und{n}}\bigr)$
for the terms shown in the theorem. From the arguments preceding
the theorem, we know that
$\md^{-1}\bigl(1\ast\md(n)+\md(n)\ast1\bigr)+\bigl[1,\md^{-1}[1,n]_\ast\bigr]_\ast\equiv1(n+1)+(n+1)1+2\cdot(1n1)$.
Hence, equivalently, we can show there exists a polynomial
$\tp\pp_n=\tp\pp_n\bigl(\wdh{\und{1}},\wdh{\und{2}},\ldots,\wdh{\und{n}}\bigr)$ such that
\begin{equation*}
1(n+1)+(n+1)1+2\cdot(1n1)
=\tp\pp_n\bigl(\wdh{\und{1}},\wdh{\und{2}},\ldots,\wdh{\und{n}}\bigr).
\end{equation*}
The argument required to establish this is analogous to that
used in the proof of Theorem~\ref{thm:mainresult} in
Section~\ref{sec:existenceanduniqueness}, with some slight modifications,
particularly in the first steps. Suppose 
$\tp\pp_n=\tp\pp_n\bigl(\wdh{\und{1}},\wdh{\und{2}},\ldots,\wdh{\und{n}}\bigr)$ has the form
\begin{equation*}
\tp\pp_n\coloneqq\sum_{k=1}^{\frac12(n+3)}\sum_{a_1a_2\cdots a_k\in\Cb^\ast(n+2)}
\tp c_{a_1a_2\cdots a_k}\cdot\wdh{\und{a}}_1\ast\wdh{\und{a}}_2\ast\cdots\ast\wdh{\und{a}}_k.
\end{equation*}
Note, implicitly, we exclude the possibility of the single term monomial `$\wdh{\und{(n+2)}}$'.
When we explicitly evaluate the P\"oppe products in the monomials
shown on the right in this expression for $\tp\pp_n$, we 
generate linear combinations of compositions in $\Cb(n+2)$.
Just like in the proof of Theorem~\ref{thm:mainresult}, 
imagine constructing a table whose rows are parametrised
by the set of compositions $\Cb(n+2)$, ordered according to descent order,
and whose columns are parametrised by the set of compositions $\Cb^\ast(n+2)$,
also ordered according to descent order.
Our goal is to prove there exists a unique set of real coefficients
$\{\tp c_w\colon w\in\Cb^\ast(n+2)\backslash(n+2)\}$ such that in $\Rb\la\Cb\ra_\ast$, we have 
\begin{equation*}
\sum_{k=1}^{\frac12(n+3)}\sum_{a_1\cdots a_k\in\Cb^\ast(n+2,k)}
\tp c_{a_1\cdots a_k}\cdot\wdh{\und{a}}_1\ast\cdots\ast\wdh{\und{a}}_k=1(n+1)+(n+1)1+2\cdot(1n1).
\end{equation*}
We nominate the final column in the table to represent the right-hand side in
the equation above which is $1(n+1)+(n+1)1+2\cdot(1n1)\in\Rb\la\Cb\ra_\ast$.
Just as in the proof of Theorem~\ref{thm:mainresult}, the table represents
a linear equation of the form
\begin{equation*}
\tp A\,\tp\cb=\tp\bb,
\end{equation*}
where $\tp\cb$ is the vector of unknown coefficients $\{\tp c_w\colon w\in\Cb^\ast(n+2)\backslash(n+2)\}$,
in descent order so $\cb=(\tp c_{n1},\tp c_{(n-1)2},\ldots,\tp c_{1n},c_{(n-2)11},\ldots)^{\mathrm{T}}$.
The vector $\tp\bb$ contains all zeros apart from the rows corresponding to
the compositions $(n+1)1$, $1(n+1)$ and $1n1$, where it contains the
respective real values $1$, $1$ and $2$.
As in the proof of Theorem~\ref{thm:mainresult} we decompose the rows of $\tp A$ into
blocks of compositions in $\Cb(n+2,k')$ for $k'=2,\ldots,n$, and we can
decompose columns of $\tp A$ into blocks of compositions in
$\Cb^\ast(n-k+3,k)$ for $k=2,\ldots,\frac12(n+3)$.
In terms of blocks, the matrix $\tp A$ is lower block triangular.
We proceed block by block. 
To begin, we focus on the $2$-part composition sub-block of $\tp A$ parametrised
by the columns $\Cb^\ast(n+1,2)$ and by the rows $\Cb(n+2,2)$, both in descent order.
The entries in $\tp A$ are equivalent to those shown in Table~\ref{table:secondblock},
once we neglect the first column and replace $n$ by $n+2$ therein.
This block has size $(n+1)\times n$. The first equation in this block is $\tp c_{n1}=1$.
However, we temporarily ignore this equation and the information it carries.
The remaining $n\times n$ block whose rows correspond to the compositions
$n2$, $(n-1)3$, \ldots, $1(n+1)$, contains $(n-1)$ homogeneous linear equations
for $\ell=2,\ldots,n$, given by
\begin{equation*}
\tp c_{(n-\ell+2)(\ell-1)}+\tp c_{(n-\ell+1)\ell}=0,
\end{equation*}
and a final non-homogenous linear equation $\tp c_{1n}=1$.
Note, all the signature coefficients are equal to $1$.
Due to the inherent symmetry in the relation between the terms given by
$1(n+1)+(n+1)1+2\cdot(1n1)$
and $\tp\pp_n$ proposed above, we \emph{replace} this final equation by the
homogensous linear equation $-\tp c_{n1}+\tp c_{1n}=0$. 
The homogeneous linear system of equations represent by those shown above  
for $\ell=2,\ldots,n$ can by straightforwardly solved, systematically working through
the equations in order, giving $\tp c_{(n-\ell+1)\ell}=(-1)^{(\ell-1)}\tp c_{n1}$.
Note when $\ell=n$ we have $\tp c_{1n}=(-1)^{n-1}\tp c_{n1}=(-1)^{2m}\tp c_{n1}=\tp c_{n1}$
when $n=2m+1$ is odd. This is thus consistent with the replacement we introduced above.
We henceforth retain the homogeneous linear system of equations for $\ell=2,\ldots,n$ shown above,
and the final replacement homogeneous linear equation.
We nominate the ``free'' variable to be $\tp c_{n1}$. 
This resulting $n\times n$ system of homogeneous linear equations is upper triangular
with all the leading diagonal entries non-zero. 
Thus all these leading diagonal entries can be used as pivots
in a Gaussian elimination process to render all the entries in the corresponding
columns in the rows below this block to be zero.

We now adopt a strategy for solution for the linear system $\tp A\,\tp\cb=\tp\bb$,
analogous to that in the proof of Theorem~\ref{thm:mainresult}.
However, before proceeding there is one minor \emph{snag} in the procedure we must deal
with first. The snag is that all the remaining linear equations are homogeneous apart from
one, namely that for the row corresponding to the composition $1n1$. This non-homogeneous
linear equation has the form 
\begin{equation*}
\chi_{1(n-1)\smb1}\,\tp c_{n1}+\chi_{1\smb(n-1)1}\,\tp c_{1n}
+\chi_{1\smb(n-2)\smb1}\,\tp c_{1(n-2)1}=2.
\end{equation*}
Note hereafter in this proof we denote the arguments of the signature character maps $\chi$
as sub-indicies, as indicated in the equation above, for brevity.
Since we know that $\chi_{1(n-1)\smb1}=\chi_{1\smb(n-1)1}=\chi_{1(n-1)}\,\chi_1=n$
and $\chi_{1\smb(n-2)\smb1}=1$, the equation above is equivalent to
$n\,\tp c_{n1}+n\,\tp c_{1n}+\tp c_{1(n-2)1}=2$.
This equation is in turn equivalent to the relation $\tp c_{1(n-2)1}=2-2n\,\tp c_{1n}$.
We now demonstrate that this equation is consistent with the preceding pivot equations,
in terms of descent order. Recall our dicussion involving the level~$2$, $3$-part composition sub-blocks
of $A$ in the proof of Theorem~\ref{thm:mainresult}. The pivot equation in the row
preceding the row $1n1$ for the equation shown above occurs in the row corresponding to the
composition $2(n-1)1$. The pivot equation in that row, i.e.\/ in the row corresponding to $2(n-1)1$,
involves the unknown $\tp c_{2(2-3)1}$, and so we must invoke the pivot equation corresponding
to that variable that lies in a preceding row. Iteratively continuing this procedure
we arrive at the following system of homogeneous linear equations respectively given by the
rows corresponding to the compositions $2(n-1)1$, $3(n-2)1$, $4(n-3)1$, \ldots, $(n-2)11$:
\begin{align*}
\tp c_{1(n-2)1}+\tp c_{2(n-3)1}+\chi_{(n-2)\smb1}\,\tp c_{n1}
+\chi_{2\smb(n-2)1}\,\tp c_{2(n-1)}+\chi_{1\smb(n-1)1}\,\tp c_{1n}&=0,\\
\tp c_{2(n-3)1}+\tp c_{3(n-4)1}+\chi_{(n-3)\smb1}\,\tp c_{n1}
+\chi_{3\smb(n-3)1}\,\tp c_{3(n-2)}+\chi_{2\smb(n-2)1}\,\tp c_{2(n-1)}&=0,\\
\tp c_{3(n-4)1}+\tp c_{4(n-5)1}+\chi_{(n-4)\smb1}\,\tp c_{n1}
+\chi_{4\smb(n-4)1}\,\tp c_{4(n-3)}+\chi_{3\smb(n-3)1}\,\tp c_{3(n-2)}&=0,\\
\vdots &  \\
\tp c_{(n-3)21}+\tp c_{(n-2)11}+\chi_{(n-2)2\smb1}\,\tp c_{n1}
+\chi_{(n-2)\smb21}\,\tp c_{(n-2)3}+\chi_{(n-3)\smb31}\,\tp c_{(n-3)4}&=0,\\
\tp c_{(n-2)11}+\chi_{(n-1)1\smb1}\,\tp c_{n1}
+\chi_{(n-1)\smb11}\,\tp c_{(n-1)2}+\chi_{(n-2)\smb21}\,\tp c_{(n-2)3}&=0.
\end{align*}
We now utilise that for $\ell=2,\ldots,n$ we know $\tp c_{(n-\ell+1)\ell}=(-1)^{(\ell-1)}\tp c_{n1}$.
Solving the homogeneous linear system of equations above in reverse we observe: 
\begin{align*}
\tp c_{(n-2)11}&=\bigl(\chi_{(n-1)1}-1\bigr)\,\tp c_{n1},\\
\tp c_{(n-3)21}&=\bigl(\chi_{(n-1)1}-\chi_{(n-2)2}+2\bigr)\,\tp c_{n1},\\
\vdots &  \\
\tp c_{1(n-2)1}&=\bigl(\chi_{(n-1)1}
+\cdots+(-1)^{n-1}\chi_{2(n-2)}+(-1)^{n-2}(n-2)\bigr)\,\tp c_{n1}.
\end{align*}
We observe that the coefficient of $\tp c_{n1}$ on the right for odd $n$ is given by,
\begin{align*}
\chi_{(n-1)1}-&\chi_{(n-2)2}+\cdots+(-1)^{n-1}\chi_{2(n-2)}+(-1)^{n-2}(n-2)\\
&=\begin{pmatrix}n\\(n-1)\end{pmatrix}-\begin{pmatrix}n\\(n-2)\end{pmatrix}
+\cdots+(-1)^{n-1}\begin{pmatrix}n\\2\end{pmatrix}+(-1)^{n-2}(n-2)\\
&=-\Bigl((1-x)^n-1-n(-x)^{n-1}-(-x)^n\Bigr)\Big|_{x=1}-(n-2)\\
&=2-2n.
\end{align*}
Hence we observe that we have the two expressions resulting from the row corresponding
to the composition $1n1$, namely $\tp c_{1(n-2)1}=2-2n\,\tp c_{1n}$ and $\tp c_{1(n-2)1}=(2-2n)\,\tp c_{1n}$.
Temporarily holding these two equations back, we observe that we can proceed to determine all
of the coefficients $\tp c_w$ for $w\in\Cb^\ast(n+1)\backslash(n+2)$ in terms of $\tp c_{n1}$
by precisely following the procedure outlined in the proof of Theorem~\ref{thm:mainresult}.
Finally we recover the first equation, i.e.\/ that $\tp c_{n1}=1$, which determines
all the coefficients $\tp c_w$ uniquely, and shows the two equations we temporarily held
back are consistent. \qed
\end{proof}
\begin{remark}[The Lax hierarchy is unique]
The hierarchy given in Theorem~\ref{thm:mainresult} at each odd order $n$
has a field given by a unique polynomial
$\pi_n=\pi_n\bigl(\wdh{\und{1}},\wdh{\und{2}},\ldots,\wdh{\und{(n-2)}},\wdh{\und{n}}\bigr)$
in $\Rb\la\Cb\ra_\ast$. For the Lax hierarchy presented in
Theorem~\ref{thm:Laxhierarchy} the vector field has the polynomial form
$\wdh{\und{n}}+\hat\pp_n\bigl(\wdh{\und{1}},\wdh{\und{2}},\ldots,\wdh{\und{(n-2)}}\bigr)$
in $\Rb\la\Cb\ra_\ast$. The uniqueness of the former polynomial
guarantees the two hierarchies are uniquely one and the same.  
\end{remark}

\section{Conclusion}\label{sec:conclusion}
We have shown that integrability in the sense of linearisation for the
whole Korteweg--de Vries hierarchy is equivalent to the existence of
polynomial expansions for basic compositions in the real algebra of
compositions equipped with the P\"oppe product, $\Rb\la\Cb\ra_\ast$.
This procedure opens many new research directions we intend to pursue next:
\begin{enumerate}
\item[(i)] Do all compositions and/or linear combinations of compositions
in $\Rb\la\Cb\ra_\ast$ have P\"oppe polynomial expansions?
\item[(ii)] There are natural connections to shuffle algebras, Rota--Baxter operators,
symmetric functions and so forth, do these provide any further insight into integrability
in this context?
\item[(iii)] The P\"oppe product can be considered as a quasi-Leibniz product with the
third term in Definition~\ref{def:Poppeproductcompositions}
considered to be the modification from the Leibniz product.
This term is analogous to the the modification distinguishing the quasi-shuffle product
from the shuffle product, see for example Curry \textit{et al.\/} \cite{CE-FMW}. 
A natural question is, how robust are the results established
in Sections~\ref{sec:existenceanduniqueness} and \ref{sec:Laxhierarchy}
to perturbation of the coefficient `$2$' in front of the third modification term
in the P\"oppe product?
\item[(iv)] The shuffle product and de-shuffle co-product are adjoint operators, as are
the concatenation product and deconcatenation, see Reutenauer~\cite[p.~27]{Reu}.
Can we prove that the P\"oppe product and de-P\"oppe co-product are adjoint operations?
\item[(v)] Following on from (iv), and as hinted at in Section~\ref{sec:signaturecoalgebra},
a natural progression is to introduce the following tensor product of algebras: 
$\Sb\otimes\Rb\la\Cb\ra_\ast$.
Note we have $\Sb\cong\Rb\la\Cb^\otimes\ra$.
The idea is to pull back linear combinations and/or polynomials 
in $\Rb\la\Cb\ra_\ast\cong\chi(\Sb)\otimes\Rb\la\Cb\ra_\ast$ to
$\Sb\otimes\Rb\la\Cb\ra_\ast$ and then perform computations in this
latter algebra. Then further natural questions arise concerning whether
it is possible to endow $\Sb$ with a compatible product and $\Rb\la\Cb^\otimes\ra$
a compatible co-product to raise their status to bi-algebras and perhaps
even Hopf algebras?
\item[(vi)] The generation of the Korteweg--de Vries hierarchy via the Lax iteration
shown in Theorem~\ref{thm:Laxhierarchy} involves evaluating
a second order derivation $\md^2$ together with the other terms shown.
We know from Theorem~\ref{thm:Laxhierarchy} that overall, the terms
generated via the Lax iteration can be expressed in the form of a
polynomial in the signature expansions. Further for example, the
nonlinear terms can be expressed in terms of derivations as follows.
For the quintic non-commutative potential Korteweg--de Vries
equation, the quadratic nonlinear terms can be encoded as
$5\cdot\bigl(\md^2(\und{1}\ast\und{1})-\und{2}\ast\und{2}\bigr)$.
For the septic version in the hierarchy the quadratic nonlinear terms
can be encoded as
$7\cdot\bigl(\md^4(\und{1}\ast\und{1})-2\cdot\md^2(\und{2}\ast\und{2})+\und{3}\ast\und{3}\bigr)$.
The question is, can the cubic and higher degree terms be similarly encoded
and this formulation of the nonlinear terms be propagated to all orders?
\item[(vii)] Finally, crucially, Doikou \text{et al.\/} \cite{DMS20}
prove that the non-commutative nonlinear Schr\"o-dinger and modified Korteweg--de Vries
flows are Grassmannian flows and integrable in this sense.
Indeed we show that nonlocal reverse time and reverse space-time versions
of these equations are also Grassmannian flows.
See Ablowitz and Musslimani~\cite{AM},
G\"urses and Pekcan~\cite{GP2019a,GP2019b,GP2018,GP2020} and Fokas~\ref{Fokas2016}
for more details on such systems.
Further, Malham~\cite{M-quinticNLS} demonstrated that
the local and nonlocal non-commutative fourth order quintic
nonlinear Schr\"odinger flow is also a Grassmannian flow. A natural
extension is to determine if the combinatorial algebraic procedure we develop
herein can be extended, in modified form, to all orders in the classical hierarchies for such
non-commutative integrable systems? A modification is necessary as 
the linear system of equations required to prescribe the Grassmannian flow
for these systems has the form: $\mathcal P=\mathcal G(\id+\mathcal Q)$,
where
\begin{equation*}
\mathcal P=\begin{pmatrix}P\\\tilde P\end{pmatrix},\qquad
\mathcal Q=\begin{pmatrix} Q & O \\ O & \tilde Q\end{pmatrix}
\qquad\text{and}\qquad
\mathcal G=\begin{pmatrix}G\\\tilde G\end{pmatrix}.
\end{equation*}
Here the block operators shown satisfy the following linear system of equations:
$\pa_tP=d(\pa)P$, $\pa_t\tilde P=\tilde d(\pa)\tilde P$, 
$Q\coloneqq\tilde PP$ and $\tilde Q\coloneqq P\tilde P$, 
and $d$ as well as $\tilde d$ are suitable constant coefficient polynomials in $\pa$.
Note once the operators $P$ and $\tilde P$ are determined as solutions to the
linear partial differential systems shown, the operators $Q$ and $\tilde Q$ are
prescibed by the quadratic forms shown. It is these definitions for $Q$ and $\tilde Q$
that necessitate, for the fully general case, a non-trivial extension of the
combinatorial algebra introduced herein.
\end{enumerate}

%
%

\begin{acknowledgement}
SJAM thanks Ioannis Stylianidis for some very helpful discussions.
\end{acknowledgement}

\appendix

\section{Signature co-algebra proof}\label{sec:co-algebraproof}
We prove Theorem~\ref{thm:sigcoalg}.
First, let us prove axiom (i). Since the co-unit given in Definition~\ref{def:signaturecoalgebra}
sends all non-empty words to $0\in\Rb$ and the empty word $\nu$ to $1\in\Rb$, if we apply  
$\id\otimes\varepsilon$ to the form for $\Delta(a_1a_2\cdots a_n)$ given in Lemma~\ref{lemma:co-prod},
we observe the only non-zero image would be from the final term for which
$(\id\otimes\varepsilon)\circ(a_1\cdots a_n\otimes\nu)=a_1\cdots a_n=\id\circ(a_1\cdots a_n)$.
Likewise the only non-zero image of $\varepsilon\otimes\id$ would be from the first
term for which
$(\varepsilon\otimes\id)\circ(\nu\otimes a_1\cdots a_n)=a_1\cdots a_n=\id\circ(a_1\cdots a_n)$.
Hence axiom (i) is satisfied.

Second, we focus on proving axiom (ii). It helps to consider 
the co-product $\delta\colon\Cb\to\Cb\otimes\Cb$ which denotes the standard
deconcatenation co-product, i.e.\/ for any composition $a_1a_2\cdots a_n\in\Cb$ we have  
\begin{equation*}
\delta(a_1a_2\cdots a_n)=\sum_{k=0}^n a_1\cdots a_k\otimes a_{k+1}\cdots a_n,
\end{equation*}
where we use the convention that when $k=0$ and $k=n$, the corresponding
terms are respectively $\nu\otimes a_1a_2\cdots a_n$ and $a_1\cdots a_n\otimes\nu$.
Note we have $\delta(\nu)=\nu\otimes\nu$.
We observe that if we apply $(\id\otimes\delta)$ to the last expression we get
\begin{align*}
(\id\otimes\delta)\circ\delta(a_1a_2\cdots a_n)
&=\sum_{k=0}^n a_1\cdots a_k\otimes \delta(a_{k+1}\cdots a_n)\\
&=\sum_{k=0}^n a_1\cdots a_k\otimes
\Biggl(\sum_{\ell=0}^{n-k}a_{k+1}\cdots a_{k+\ell}\otimes a_{k+\ell+1}\cdots a_n)\Biggr)\\
&=\sum_{k=0}^n\sum_{\ell=0}^{n-k}
a_1\cdots a_k\otimes a_{k+1}\cdots a_{k+\ell}\otimes a_{k+\ell+1}\cdots a_n.
\end{align*}
We observe the last line represents the sum over all possible ways to deconcatenate
$a_1\cdots a_n$ into three parts, by performing a ``left-to-right'' secondary deconcatenation
after the first. Hence we expect the action of $(\delta\otimes\id)$ on $\delta(a_1a_2\cdots a_n)$
to correspond to performing ``right-to-left'' secondary deconcatenation after the first,
producing the same triple deconcatenation result. Indeed we observe,
\begin{align*}
(\delta\otimes\id)\circ\delta(a_1a_2\cdots a_n)
&=\sum_{k=0}^n \delta(a_1\cdots a_k)\otimes a_{k+1}\cdots a_n\\
&=\sum_{k=0}^n\Biggl(\sum_{\ell=0}^ka_1\cdots a_{\ell}\otimes a_{\ell+1}\cdots a_k\Biggr)\otimes a_{k+1}\cdots a_n\\
&=\sum_{k=0}^n\sum_{\ell=0}^k
a_1\cdots a_{\ell}\otimes a_{\ell+1}\cdots a_k\otimes a_{k+1}\cdots a_n\\
&=\sum_{\ell=0}^n\sum_{k=\ell}^n
a_1\cdots a_{\ell}\otimes a_{\ell+1}\cdots a_k\otimes a_{k+1}\cdots a_n\\
&=\sum_{\ell=0}^n\sum_{k=0}^{n-\ell}
a_1\cdots a_{\ell}\otimes a_{\ell+1}\cdots a_{k+\ell}\otimes a_{k+\ell+1}\cdots a_n,
\end{align*}
which is the same as the ``left-to-right'' sum once we swap the identities of
the dummy summation variables $k$ and $\ell$. The identification of $(\id\otimes\Delta)\circ\Delta$
and $(\Delta\otimes\id)\circ\Delta$ follows in exactly the same way.
For the $\Delta(a_1\cdots a_n)$ co-product expansion,
except for the $\nu\otimes a_1\cdots a_n$ and $a_1\cdots a_n\otimes\nu$ terms,
we replace the terms $a_1\cdots a_k\otimes a_{k+1}\cdots a_n$ in $\delta(a_1\cdots a_n)$
for $k=1,\ldots,n-1$ by the pair of terms 
\begin{equation*}
a_1\cdots a_{k-1}(a_k-1)\otimes a_{k+1}\cdots a_n+a_1\cdots a_{k}\otimes(a_{k+1}-1)a_{k+2}\cdots a_n.
\end{equation*}
The co-product $\Delta(a_1\cdots a_n)$ tensorally decomposes
$a_1\cdots a_n$ into the sum of all possible composition pairs that
produce $a_1\cdots a_n$ via the P\"oppe product $\ast$, including the 
empty composition $\nu$. Hence $(\id\otimes\Delta)\circ\Delta$ will tensorally
decompose $a_1\cdots a_n$ into the sum of all possible composition triples $u\otimes v\otimes w$
that produce $a_1\cdots a_n$ via the P\"oppe product $u\ast v\ast w$.
The action of $(\id\otimes\Delta)\circ\Delta$ is to perform a ``left-to-right'' 
secondary de-P\"oppe co-product after a first application of the de-P\"oppe co-product.
The action of $(\Delta\otimes\id)\circ\Delta$ is to perform a ``right-to-left'' 
secondary de-P\"oppe co-product after a first application of the de-P\"oppe co-product.
Both generate the same sum over all triples $u\otimes v\otimes w$ that produce $a_1\cdots a_n$.
As in the proof of Lemma~\ref{lemma:co-prod}, care must be taken to account for
instances when $a_k=1$. These can be taken care of by a post-application of
$\theta\otimes\theta\otimes\theta$ and extending the definition of $\theta$ to include
the circumstance $\theta\colon -\boldsymbol 1\mapsto0\,\nu$ where in the image
$0\in\Rb$ so the corresponding term is eliminated. The expression $-\boldsymbol 1$ could
arise through a double application of $\md^{-1}$ to a letter $a=1$.
We have thus established that axiom (ii) is satisfied.
The proof is complete.

\section{Tables}\label{sec:tables}

\textheight=23cm

\begin{landscape}
\begin{table}
\caption{Non-zero signature coefficients in $\pi_n$ appearing in the $3$-part composition blocks of $A$.
The coefficients are the $\chi$-images of the signature entries shown.}
\label{table:KdVn-3parts}
\begin{center}
\hspace{-1cm}
\begin{tabular}{||l||c|cc|ccc|cccc|cc|}
\hline\hline
$\!\phantom{\biggl|}\Cb\!$
&$\!\!\und{(n\!-\!4)}\scriptsize{\ast}\und{1}\scriptsize{\ast}\und{1}\!\!$
&$\!\!\und{(n\!-\!5)}\scriptsize{\ast}\und{2}\scriptsize{\ast}\und{1}\!\!$
&$\!\!\und{(n\!-\!5)}\scriptsize{\ast}\und{1}\scriptsize{\ast}\und{2}\!\!$
&$\!\!\und{(n\!-\!6)}\scriptsize{\ast}\und{3}\scriptsize{\ast}\und{1}\!\!$
&$\!\!\und{(n\!-\!6)}\scriptsize{\ast}\und{2}\scriptsize{\ast}\und{2}\!\!$
&$\!\!\und{(n\!-\!6)}\scriptsize{\ast}\und{1}\scriptsize{\ast}\und{3}\!\!$
&$\!\!\und{(n\!-\!7)}\scriptsize{\ast}\und{4}\scriptsize{\ast}\und{1}\!\!$
&$\!\!\und{(n\!-\!7)}\scriptsize{\ast}\und{3}\scriptsize{\ast}\und{2}\!\!$
&$\!\!\und{(n\!-\!7)}\scriptsize{\ast}\und{2}\scriptsize{\ast}\und{3}\!\!$
&$\!\!\und{(n\!-\!7)}\scriptsize{\ast}\und{1}\scriptsize{\ast}\und{4}\!\!$
&$\!\!\und{(n\!-\!8)}\scriptsize{\ast}\und{5}\scriptsize{\ast}\und{1}\!\!$
&$\cdots$\\
\hline
\hline
$\!\!(n-2)11\!\!$ &&&&&&&&&&&&\\
\hline
$\!\!(n-3)21\!\!$ & $\!\!(n-4)\smb1\smb1\!\!$ & &&&&&&&&&&\\
$\!\!(n-3)12\!\!$ & $\!\!(n-4)\smb1\smb1\!\!$ & &&&&&&&&&&\\
\hline
$\!\!(n-4)31\!\!$ & $\!\!(n-4)\smb1\smb1\!\!$ & $\!\!(n-5)\smb2\smb1\!\!$ &&&&&&&&&&\\
$\!\!(n-4)22\!\!$ & $\!\!(n-4)\smb1\smb1\!\!$ & $\!\!(n-5)\smb2\smb1\!\!$ & $\!\!(n-5)\smb1\smb2\!\!$ &&&&&&&&&\\
$\!\!(n-4)13\!\!$ &                 &                 & $\!\!(n-5)\smb1\smb2\!\!$ &&&&&&&&&\\
\hline
$\!\!(n-5)41\!\!$ && $\!\!(n-5)\smb2\smb1\!\!$ &                 & $\!\!(n-6)\smb3\smb1\!\!$ &&&&&&&&\\
$\!\!(n-5)32\!\!$ && $\!\!(n-5)\smb2\smb1\!\!$ & $\!\!(n-5)\smb1\smb2\!\!$ & $\!\!(n-6)\smb3\smb1\!\!$ & $\!\!(n-6)\smb2\smb2\!\!$ &&&&&&&\\
$\!\!(n-5)23\!\!$ &&& $\!\!(n-5)\smb1\smb2\!\!$ && $\!\!(n-6)\smb2\smb2\!\!$ & $\!\!(n-6)\smb1\smb3\!\!$ &&&&&&\\
$\!\!(n-5)14\!\!$ &&&         &                 && $\!\!(n-6)\smb1\smb3\!\!$ &&&&&&\\
\hline
$\!\!(n-6)51\!\!$ &&&& $\!\!(n-6)\smb3\smb1\!\!$ &&& $\!\!(n-7)\smb4\smb1\!\!$ &&&&&\\
$\!\!(n-6)42\!\!$ &&&& $\!\!(n-6)\smb3\smb1\!\!$ & $\!\!(n-6)\smb2\smb2\!\!$ && $\!\!(n-7)\smb4\smb1\!\!$ & $\!\!(n-7)\smb3\smb2\!\!$ &&&&\\
$\!\!(n-6)33\!\!$ &&&&& $\!\!(n-6)\smb2\smb2\!\!$ & $\!\!(n-6)\smb1\smb3\!\!$ && $\!\!(n-7)\smb3\smb2\!\!$ & $\!\!(n-7)\smb2\smb3\!\!$ &&&\\
$\!\!(n-6)24\!\!$ &&&&&& $\!\!(n-6)\smb1\smb3\!\!$ &&& $\!\!(n-7)\smb2\smb3\!\!$ & $\!\!(n-7)\smb1\smb4\!\!$ &&\\
$\!\!(n-6)15\!\!$ &&&&&&&&&& $\!\!(n-7)\smb1\smb4\!\!$ &&\\
\hline
$\!\!(n-7)61\!\!$ &&&&&&& $\!\!(n-7)\smb4\smb1\!\!$ &&&& $\!\!(n-8)\smb5\smb1\!\!$ &\\
$\!\!(n-7)52\!\!$ &&&&&&& $\!\!(n-7)\smb4\smb1\!\!$ & $\!\!(n-7)\smb3\smb2\!\!$ &&& $\!\!(n-8)\smb5\smb1\!\!$ &\\
$\!\!(n-7)43\!\!$ &&&&&&&& $\!\!(n-7)\smb3\smb2\!\!$ & $\!\!(n-7)\smb2\smb3\!\!$ &&&\\
$\vdots$ &&&&&&&&&&&&\\
\hline\hline
\end{tabular}
\end{center}
\end{table}
\end{landscape}

\textheight=25cm

\begin{landscape}
\begin{table}
\caption{Non-zero signature coefficients in $\pi_n$ appearing in the $4$-part composition blocks of $A$.
The coefficients are the $\chi$-images of the signature entries shown.}
\label{table:KdVn-4parts}
\begin{center}
\hspace{-1cm}
\begin{tabular}{||l||c|ccc|cccccc|c|}
\hline\hline
$\!\phantom{\biggl|}\Cb\!$
&$\!\!\und{(n\!-\!6)}\scriptsize{\ast}\und{1}\scriptsize{\ast}\und{1}\scriptsize{\ast}\und{1}\!\!$
&$\!\!\und{(n\!-\!7)}\scriptsize{\ast}\und{2}\scriptsize{\ast}\und{1}\scriptsize{\ast}\und{1}\!\!$
&$\!\!\und{(n\!-\!7)}\scriptsize{\ast}\und{1}\scriptsize{\ast}\und{2}\scriptsize{\ast}\und{1}\!\!$
&$\!\!\und{(n\!-\!7)}\scriptsize{\ast}\und{1}\scriptsize{\ast}\und{1}\scriptsize{\ast}\und{2}\!\!$
&$\!\!\und{(n\!-\!8)}\scriptsize{\ast}\und{3}\scriptsize{\ast}\und{1}\scriptsize{\ast}\und{1}\!\!$
&$\!\!\und{(n\!-\!8)}\scriptsize{\ast}\und{2}\scriptsize{\ast}\und{2}\scriptsize{\ast}\und{1}\!\!$
&$\!\!\und{(n\!-\!8)}\scriptsize{\ast}\und{2}\scriptsize{\ast}\und{1}\scriptsize{\ast}\und{2}\!\!$
&$\!\!\und{(n\!-\!8)}\scriptsize{\ast}\und{1}\scriptsize{\ast}\und{3}\scriptsize{\ast}\und{1}\!\!$
&$\!\!\und{(n\!-\!8)}\scriptsize{\ast}\und{1}\scriptsize{\ast}\und{2}\scriptsize{\ast}\und{2}\!\!$
&$\!\!\und{(n\!-\!8)}\scriptsize{\ast}\und{1}\scriptsize{\ast}\und{1}\scriptsize{\ast}\und{3}\!\!$
&$\cdots$\\
\hline
\hline
$\!\!(n-3)111\!\!$ &&&&&&&&&&&\\
\hline
$\!\!(n-4)211\!\!$ &&&&&&&&&&&\\
$\!\!(n-4)121\!\!$ &&&&&&&&&&&\\
$\!\!(n-4)112\!\!$ &&&&&&&&&&&\\
\hline
$\!\!(n-5)311\!\!$ &&&&&&&&&&&\\
$\!\!(n-5)221\!\!$ & $\!\!(n-6)\smb1\smb1\smb1\!\!$ &&&&&&&&&&\\
$\!\!(n-5)212\!\!$ & $\!\!(n-6)\smb1\smb1\smb1\!\!$ &&&&&&&&&&\\
$\!\!(n-5)131\!\!$ & $\!\!(n-6)\smb1\smb1\smb1\!\!$ &&&&&&&&&&\\
$\!\!(n-5)122\!\!$ & $\!\!(n-6)\smb1\smb1\smb1\!\!$ &&&&&&&&&&\\
$\!\!(n-5)113\!\!$ &&&&&&&&&&&\\
\hline
$\!\!(n-6)411\!\!$ &&&&&&&&&&&\\
$\!\!(n-6)321\!\!$ & $\!\!(n-6)\smb1\smb1\smb1\!\!$ & $\!\!(n-7)\smb2\smb1\smb1\!\!$ &&&&&&&&&\\
$\!\!(n-6)312\!\!$ & $\!\!(n-6)\smb1\smb1\smb1\!\!$ & $\!\!(n-7)\smb2\smb1\smb1\!\!$ &&&&&&&&&\\
$\!\!(n-6)231\!\!$ & $\!\!(n-6)\smb1\smb1\smb1\!\!$ & $\!\!(n-7)\smb2\smb1\smb1\!\!$ & $\!\!(n-7)\smb1\smb2\smb1\!\!$ &&&&&&&&\\
$\!\!(n-6)222\!\!$ & $\!\!(n-6)\smb1\smb1\smb1\!\!$ & $\!\!(n-7)\smb2\smb1\smb1\!\!$ & $\!\!(n-7)\smb1\smb2\smb1\!\!$ & $\!\!(n-7)\smb1\smb1\smb2\!\!$ &&&&&&&\\
$\!\!(n-6)213\!\!$ &&&& $\!\!(n-7)\smb1\smb1\smb2\!\!$ &&&&&&& \\
$\!\!(n-6)141\!\!$ &&& $\!\!(n-7)\smb1\smb2\smb1\!\!$ &&&&&&&&\\
$\!\!(n-6)132\!\!$ &&& $\!\!(n-7)\smb1\smb2\smb1\!\!$ & $\!\!(n-7)\smb1\smb1\smb2\!\!$ &&&&&&& \\
$\!\!(n-6)123\!\!$ &&&& $\!\!(n-7)\smb1\smb1\smb2\!\!$ &&&&&&& \\
$\!\!(n-6)114\!\!$ &&&&&&&&&&&\\
\hline
$\!\!(n-7)511\!\!$ &&&&&&&&&&&\\
$\!\!(n-7)421\!\!$ && $\!\!(n-7)\smb2\smb1\smb1\!\!$ &&& $\!\!(n-8)\smb3\smb1\smb1\!\!$ &&&&&&\\
$\!\!(n-7)412\!\!$ && $\!\!(n-7)\smb2\smb1\smb1\!\!$ &&& $\!\!(n-8)\smb3\smb1\smb1\!\!$ &&&&&&\\
$\!\!(n-7)331\!\!$ && $\!\!(n-7)\smb2\smb1\smb1\!\!$ & $\!\!(n-7)\smb1\smb2\smb1\!\!$ && $\!\!(n-8)\smb3\smb1\smb1\!\!$ & $\!\!(n-8)\smb2\smb2\smb1\!\!$ &&&&&\\
$\!\!(n-7)322\!\!$ && $\!\!(n-7)\smb2\smb1\smb1\!\!$ & $\!\!(n-7)\smb1\smb2\smb1\!\!$ & $\!\!(n-7)\smb1\smb1\smb2\!\!$ & $\!\!(n-8)\smb3\smb1\smb1\!\!$ & $\!\!(n-8)\smb2\smb2\smb1\!\!$ & $\!\!(n-8)\smb2\smb1\smb2\!\!$ &&&&\\
$\!\!(n-7)313\!\!$ &&&& $\!\!(n-7)\smb1\smb1\smb2\!\!$ &&& $\!\!(n-8)\smb2\smb1\smb2\!\!$ &&&&\\
$\!\!(n-7)241\!\!$ &&& $\!\!(n-7)\smb1\smb2\smb1\!\!$ &&& $\!\!(n-8)\smb2\smb2\smb1\!\!$ && $\!\!(n-8)\smb1\smb3\smb1\!\!$ &&& \\
$\!\!(n-7)232\!\!$ &&& $\!\!(n-7)\smb1\smb2\smb1\!\!$ & $\!\!(n-7)\smb1\smb1\smb2\!\!$ && $\!\!(n-8)\smb2\smb2\smb1\!\!$ & $\!\!(n-8)\smb2\smb1\smb2\!\!$ & $\!\!(n-8)\smb1\smb3\smb1\!\!$ & $\!\!(n-8)\smb1\smb2\smb2\!\!$ && \\
$\!\!(n-7)223\!\!$ &&&& $\!\!(n-7)\smb1\smb1\smb2\!\!$ &&& $\!\!(n-8)\smb2\smb1\smb2\!\!$ && $\!\!(n-8)\smb1\smb2\smb2\!\!$ & $\!\!(n-8)\smb1\smb1\smb3\!\!$ & \\
$\!\!(n-7)214\!\!$ &&&&&&&&&& $\!\!(n-8)\smb1\smb1\smb3\!\!$ &\\
$\vdots$ &&&&&&&&&&&\\
\hline\hline
\end{tabular}
\end{center}
\end{table}
\end{landscape}

\end{document}